\newcommand{\allnotes}[1]{}
\newcommand{\captionfonts}{\bf \footnotesize}
\long\def\@makecaption#1#2{%
	\vskip\abovecaptionskip
	\sbox\@tempboxa{{\captionfonts #1: #2}}%
	\ifdim \wd\@tempboxa >\hsize
	{\captionfonts #1: #2\par}
	\else
	\hbox to\hsize{\hfil\box\@tempboxa\hfil}%
	\fi
	\vskip\belowcaptionskip}
\newcommand{\squishlist}{
  \begin{list}{$\bullet$}{
    \setlength{\itemsep}{0pt}       \setlength{\parsep}{3pt}
    \setlength{\topsep}{3pt}        \setlength{\partopsep}{0pt}
    \setlength{\leftmargin}{1em}    \setlength{\labelwidth}{1em}
    \setlength{\labelsep}{0.5em} } }
\newcommand{\squishend}{
  \end{list} }
\newcommand{\squishenum}{
  \begin{enumerate}
    \renewcommand{\labelenumi}{(\roman{enumi})}
    \setlength{\itemsep}{0pt}       \setlength{\parsep}{3pt}
    \setlength{\topsep}{3pt}        \setlength{\partopsep}{0pt}
    \setlength{\leftmargin}{1em}    \setlength{\labelwidth}{1em}
    \setlength{\labelsep}{0.5em} }
\newcommand{\squishenumend}{
\end{enumerate} }
\setlist[enumerate]{leftmargin=1.5em,topsep=3pt,itemsep=0pt}
\definecolor{light-gray}{gray}{0.94}
\newcolumntype{x}[1]{>{\centering\arraybackslash\hspace{0pt}}p{#1}}
\newcommand{\abs}[1]{\left|#1\right|}
\DeclareMathOperator{\R}{\mathbb{R}}
\DeclareMathOperator{\Q}{\mathbb{Q}}
\DeclareMathOperator{\Z}{\mathbb{Z}}
\DeclareMathOperator{\N}{\mathbb{N}}
\DeclareMathOperator{\I}{\mathbb{I}}
\DeclareMathOperator{\cF}{\mathcal{F}}
\DeclareMathOperator{\cV}{\mathcal{V}}
\DeclareMathOperator{\cO}{\mathcal{O}}
\DeclareMathOperator{\cE}{\mathcal{E}}
\DeclareMathOperator*{\argmax}{arg\,max}
\begin{document}

\title{
ForestColl: Throughput-Optimal Collective Communications on \\ Heterogeneous Network Fabrics
% Scalable and Throughput-Optimal Collective Communication Scheduling on Heterogeneous Network Fabrics 
% \vspace{-0.5em}
}

\author{
{\rm Liangyu Zhao}\thanks{The work was partially done during an internship at Microsoft Research.}\\
University of Washington
\and
{\rm Saeed Maleki}\thanks{The work was done at Microsoft Research.}\\
Independent Researcher
\and
{\rm Yuanhong Wang}\thanks{The work was done at the University of Washington.}\\
Tsinghua University
\and
{\rm Zezhou Wang}\\
University of Washington
\and
{\rm Ziyue Yang}\\
Microsoft Research
\and
{\rm Hossein Pourreza}\\
Microsoft
\and
{\rm Arvind Krishnamurthy}\\
University of Washington
}

% \author{Liangyu Zhao}
% \authornote{The work was partially done during an internship at Microsoft Research.}
% \affiliation{%
%     \institution{University of Washington}
% }
% \author{Saeed Maleki}
% \authornote{The work was done at Microsoft Research.}
% \affiliation{%
%     \institution{Independent Researcher}
% }
% \author{Ziyue Yang}
% \affiliation{%
%     \institution{Microsoft Research}
% }
% \author{Hossein Pourreza}
% \affiliation{%
%     \institution{Microsoft}
% }
% \author{Arvind Krishnamurthy}
% \affiliation{%
%     \institution{University of Washington}
% }

% \author{
% Paper \#951, 12 pages body, 29 pages total
% }

\maketitle
\begin{abstract}
As modern DNN models grow ever larger, collective communications between the accelerators (allreduce, etc.) emerge as a significant performance bottleneck. Designing efficient communication schedules is challenging, given today's heterogeneous and diverse network fabrics. We present ForestColl, a tool that generates throughput-optimal schedules for any network topology. ForestColl constructs broadcast/aggregation spanning trees as the communication schedule, achieving theoretical optimality. Its schedule generation runs in polynomial time and is highly scalable. ForestColl supports any network fabric, including both switching fabrics and direct accelerator connections. We evaluated ForestColl on AMD MI250 and NVIDIA DGX A100 \& H100 clusters. ForestColl showed significant improvements over the vendors' own optimized communication libraries across various settings and in LLM training. ForestColl also outperformed other state-of-the-art schedule generation techniques with both more efficient generated schedules and substantially faster generation speed.
\end{abstract}

\pagestyle{plain}

\section{Introduction}\label{sec:intro}

Collective communications have become a cornerstone of distributed machine learning training~\cite{ringallreduce,horovod,megatronlm}.
As large language models (LLMs) scale to hundreds of billions of parameters~\cite{llama3,nemotron,dbrx,mixtral8x22b}, their training demands immense volumes of collective communication traffic, creating a performance bottleneck~\cite{wang2023build,alpa,pope2023efficiently,megatronlm,topoopt,megascale,metardma,alibabahpn}. Operational insights from AI hyperscalers (e.g., Meta~\cite{metardma}, Alibaba~\cite{alibabahpn}, AWS~\cite{trainium}) highlight that LLM training traffic, characterized by \emph{bursty elephant flows} capable of saturating NIC line rate, is predominantly \textbf{throughput-bound}.
As a result, today's ML hardware providers have focused on enhancing inter-accelerator network speed~\cite{dgxv100,dgxa100,dgxh100,cdna2,cdna3}. 

A key observation is that \textit{today's ML network topologies are becoming \textbf{heterogeneous} within individual networks and highly \textbf{diverse} across different hardware platforms.}
Because scaling high-speed networks homogeneously (i.e., uniformly across the fabric) is both technically challenging~\cite{rdmaethernet,azurerdma,deadlocksdatacenter} and prohibitively costly~\cite{topoopt,wang2024rail}, hardware providers adopt \emph{heterogeneous} networks, which typically consist of separate high-speed scale-up networks within multi-GPU boxes and lower-speed scale-out networks between boxes. Figure~\ref{fig:topologiesexample} shows the network topologies of NVIDIA DGX A100~\cite{dgxa100} and AMD MI250~\cite{cdna2}. In both topologies, the intra-box network is an order of magnitude faster than the inter-box one---300GB/s vs 25GB/s per GPU in DGX A100 and 350GB/s vs 16GB/s per GPU in MI250. Moreover, different hardware platforms feature highly \emph{diverse} network designs. DGX A100 uses NVSwitch for inter-GPU traffic within a box, while MI250 relies on direct connections between GPUs. Although both platforms use InfiniBand for inter-box traffic, the IB switches can also be configured in various topologies, such as fat-tree~\cite{fattree} or rail networks~\cite{wang2024rail,nvidiarail}.

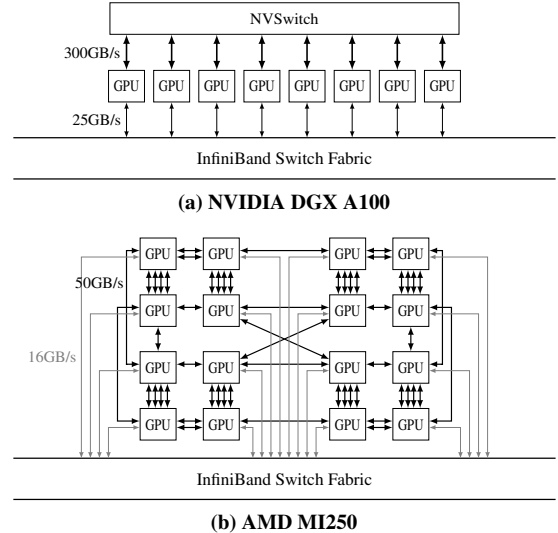
\begin{figure}[tb]
    \centering
    \begin{subfigure}{\columnwidth}
        \centering
        \scalebox{0.6}{
\begin{tikzpicture}[node/.style={rectangle,draw=black,minimum size=7mm,align=center}]
    % \draw (-5,-0.35) rectangle (5,0.35);
    % \node (20) at (0,0) {InfiniBand Switch};
    \node[rectangle, minimum width=12cm, minimum height=0.9cm] (20) at (0,-0.1) {\large{InfiniBand Switch Fabric}};
    \draw[thick] (20.north west) -- (20.north east);
    \draw[thick] (20.south west) -- (20.south east);

    \node[node]	(10)	at (-3.5,2.5-1) {\scalebox{0.8}[1]{GPU}};
    \node[node]	(11)	at (-2.5,2.5-1) {\scalebox{0.8}[1]{GPU}};
    \node[node]	(12)	at (-1.5,2.5-1) {\scalebox{0.8}[1]{GPU}};
    \node[node]	(13)	at (-0.5,2.5-1) {\scalebox{0.8}[1]{GPU}};
    \node[node]	(14)	at (0.5,2.5-1) {\scalebox{0.8}[1]{GPU}};
    \node[node]	(15)	at (1.5,2.5-1) {\scalebox{0.8}[1]{GPU}};
    \node[node]	(16)	at (2.5,2.5-1) {\scalebox{0.8}[1]{GPU}};
    \node[node]	(17)	at (3.5,2.5-1) {\scalebox{0.8}[1]{GPU}};
    \node[node,text width=7.5cm]	(18)	at (0,0.5+2.5) {NVSwitch};

    % \node[] at (-5.5,2.5-1) {\LARGE{\textbf{\dots}}};
    % \node[] at (5.5,2.5-1) {\LARGE{\textbf{\dots}}};
    
    \path[latex-latex,anchor=east,line width=1pt] (10.north) edge node {300GB/s} (10.north|-18.south);
    \path[latex-latex,line width=1pt] (11.north) edge (11.north|-18.south);
    \path[latex-latex,line width=1pt] (12.north) edge (12.north|-18.south);
    \path[latex-latex,line width=1pt] (13.north) edge (13.north|-18.south);
    \path[latex-latex,line width=1pt] (14.north) edge (14.north|-18.south);
    \path[latex-latex,line width=1pt] (15.north) edge (15.north|-18.south);
    \path[latex-latex,line width=1pt] (16.north) edge (16.north|-18.south);
    \path[latex-latex,anchor=west,line width=1pt] (17.north) edge node {\phantom{300GB/s}} (17.north|-18.south);
            
    % \node[node,text width=8cm]	(20)	at (0,0) {InfiniBand Switch};
    
    \path[latex-latex,anchor=east] (10.south) edge node {25GB/s} (10.south|-20.north);
    \path[latex-latex] (11.south) edge (11.south|-20.north);
    \path[latex-latex] (12.south) edge (12.south|-20.north);
    \path[latex-latex] (13.south) edge (13.south|-20.north);
    \path[latex-latex] (14.south) edge (14.south|-20.north);
    \path[latex-latex] (15.south) edge (15.south|-20.north);
    \path[latex-latex] (16.south) edge (16.south|-20.north);
    \path[latex-latex] (17.south) edge (17.south|-20.north);
\end{tikzpicture}
}
        \caption{NVIDIA DGX A100}
        \label{fig:topoa100example}
        \vspace{9pt}
    \end{subfigure}
    \begin{subfigure}{\columnwidth}
        \centering
        \scalebox{0.6}{
\begin{tikzpicture}[node/.style={rectangle,draw=black,minimum size=7mm,align=center}]
% \draw (-5,-0.35) rectangle (5,0.35);
% \node at (0,0) {InfiniBand Switch};

\node[rectangle, minimum width=12cm, minimum height=0.9cm] (20) at (0,-0.1) {\large{InfiniBand Switch Fabric}};
\draw[thick] (20.north west) -- (20.north east);
\draw[thick] (20.south west) -- (20.south east);

\node[node] (0) at (-1.4, 4.890000000000001) {\scalebox{0.8}[1]{GPU}};
\node[node] (1) at (-1.4, 3.63) {\scalebox{0.8}[1]{GPU}};
\node[node] (2) at (-1.4, 2.37) {\scalebox{0.8}[1]{GPU}};
\node[node] (3) at (-1.4, 1.1099999999999999) {\scalebox{0.8}[1]{GPU}};
\node[node] (4) at (-2.8, 4.890000000000001) {\scalebox{0.8}[1]{GPU}};
\node[node] (5) at (-2.8, 3.63) {\scalebox{0.8}[1]{GPU}};
\node[node] (6) at (-2.8, 2.37) {\scalebox{0.8}[1]{GPU}};
\node[node] (7) at (-2.8, 1.1099999999999999) {\scalebox{0.8}[1]{GPU}};
\node[node] (8) at (1.4, 4.890000000000001) {\scalebox{0.8}[1]{GPU}};
\node[node] (9) at (1.4, 3.63) {\scalebox{0.8}[1]{GPU}};
\node[node] (10) at (1.4, 2.37) {\scalebox{0.8}[1]{GPU}};
\node[node] (11) at (1.4, 1.1099999999999999) {\scalebox{0.8}[1]{GPU}};
\node[node] (12) at (2.8, 4.890000000000001) {\scalebox{0.8}[1]{GPU}};
\node[node] (13) at (2.8, 3.63) {\scalebox{0.8}[1]{GPU}};
\node[node] (14) at (2.8, 2.37) {\scalebox{0.8}[1]{GPU}};
\node[node] (15) at (2.8, 1.1099999999999999) {\scalebox{0.8}[1]{GPU}};

\path[latex-latex,line width=0.75pt] (0.240) edge (0.240|-1.north);
\path[latex-latex,line width=0.75pt] (0.260) edge (0.260|-1.north);
\path[latex-latex,line width=0.75pt] (0.280) edge (0.280|-1.north);
\path[latex-latex,line width=0.75pt] (0.300) edge (0.300|-1.north);
\path[latex-latex,line width=0.75pt] (2.240) edge (2.240|-3.north);
\path[latex-latex,line width=0.75pt] (2.260) edge (2.260|-3.north);
\path[latex-latex,line width=0.75pt] (2.280) edge (2.280|-3.north);
\path[latex-latex,line width=0.75pt] (2.300) edge (2.300|-3.north);
\path[latex-latex,line width=0.75pt] (4.240) edge (4.240|-5.north);
\path[latex-latex,line width=0.75pt] (4.260) edge (4.260|-5.north);
\path[latex-latex,line width=0.75pt] (4.280) edge (4.280|-5.north);
\path[latex-latex,line width=0.75pt] (4.300) edge (4.300|-5.north);
\path[latex-latex,line width=0.75pt] (6.240) edge (6.240|-7.north);
\path[latex-latex,line width=0.75pt] (6.260) edge (6.260|-7.north);
\path[latex-latex,line width=0.75pt] (6.280) edge (6.280|-7.north);
\path[latex-latex,line width=0.75pt] (6.300) edge (6.300|-7.north);
\path[latex-latex,line width=0.75pt] (8.240) edge (8.240|-9.north);
\path[latex-latex,line width=0.75pt] (8.260) edge (8.260|-9.north);
\path[latex-latex,line width=0.75pt] (8.280) edge (8.280|-9.north);
\path[latex-latex,line width=0.75pt] (8.300) edge (8.300|-9.north);
\path[latex-latex,line width=0.75pt] (10.240) edge (10.240|-11.north);
\path[latex-latex,line width=0.75pt] (10.260) edge (10.260|-11.north);
\path[latex-latex,line width=0.75pt] (10.280) edge (10.280|-11.north);
\path[latex-latex,line width=0.75pt] (10.300) edge (10.300|-11.north);
\path[latex-latex,line width=0.75pt] (12.240) edge (12.240|-13.north);
\path[latex-latex,line width=0.75pt] (12.260) edge (12.260|-13.north);
\path[latex-latex,line width=0.75pt] (12.280) edge (12.280|-13.north);
\path[latex-latex,line width=0.75pt] (12.300) edge (12.300|-13.north);
\path[latex-latex,line width=0.75pt] (14.240) edge (14.240|-15.north);
\path[latex-latex,line width=0.75pt] (14.260) edge (14.260|-15.north);
\path[latex-latex,line width=0.75pt] (14.280) edge (14.280|-15.north);
\path[latex-latex,line width=0.75pt] (14.300) edge (14.300|-15.north);

\path[latex-latex,line width=0.75pt] (4.10) edge (4.10-|0.west);
\path[latex-latex,line width=0.75pt] (4.350) edge (4.350-|0.west);
\path[latex-latex,line width=0.75pt] (7.10) edge (7.10-|3.west);
\path[latex-latex,line width=0.75pt] (7.350) edge (7.350-|3.west);
\path[latex-latex,line width=0.75pt] (8.10) edge (8.10-|12.west);
\path[latex-latex,line width=0.75pt] (8.350) edge (8.350-|12.west);
\path[latex-latex,line width=0.75pt] (11.10) edge (11.10-|15.west);
\path[latex-latex,line width=0.75pt] (11.350) edge (11.350-|15.west);

\path[latex-latex,line width=0.75pt] (5.10) edge (1.170);
\path[latex-latex,line width=0.75pt] (6.10) edge (2.170);
\path[latex-latex,line width=0.75pt] (0.10) edge (8.170);
\path[latex-latex,line width=0.75pt] (1.10) edge (9.170);
\path[latex-latex,line width=0.75pt] (9) edge (2);
\path[latex-latex,line width=0.75pt] (10) edge (1);
\path[latex-latex,line width=0.75pt] (2.10) edge (10.170);
\path[latex-latex,line width=0.75pt] (3.10) edge (11.170);
\path[latex-latex,line width=0.75pt] (9.10) edge (13.170);
\path[latex-latex,line width=0.75pt] (10.10) edge (14.170);
\path[latex-latex,line width=0.75pt] (5) edge (6);
\path[latex-latex,line width=0.75pt] (13) edge (14);

\draw[latex-latex,line width=0.75pt] (4.170) -| (-3.5, 2.43) |- (6.170);
\draw[latex-latex,line width=0.75pt] (5.170) -| (-3.6999999999999997, 2.43) |- (7.170);
\draw[latex-latex,line width=0.75pt] (12.10) -| (3.5, 2.43) |- (14.10);
\draw[latex-latex,line width=0.75pt] (13.10) -| (3.6999999999999997, 2.43) |- (15.10);

\draw[latex-latex,gray] (0.350) -| (-0.1,0.35);
\draw[latex-latex,gray] (1.350) -| (-0.30000000000000004,0.35);
\draw[latex-latex,gray] (2.350) -| (-0.5,0.35);
\draw[latex-latex,gray] (3.350) -| (-0.7000000000000001,0.35);
\draw[latex-latex,gray] (4.190) -| (-4.5,0.35);
\draw[latex-latex,gray] (5.190) -| (-4.3,0.35);
\draw[latex-latex,gray] (6.190) -| (-4.1,0.35);
\draw[latex-latex,gray] (7.190) -| (-3.8999999999999995,0.35);
\draw[latex-latex,gray] (8.190) -| (0.1,0.35);
\draw[latex-latex,gray] (9.190) -| (0.30000000000000004,0.35);
\draw[latex-latex,gray] (10.190) -| (0.5,0.35);
\draw[latex-latex,gray] (11.190) -| (0.7000000000000001,0.35);
\draw[latex-latex,gray] (12.350) -| (4.5,0.35);
\draw[latex-latex,gray] (13.350) -| (4.3,0.35);
\draw[latex-latex,gray] (14.350) -| (4.1,0.35);
\draw[latex-latex,gray] (15.350) -| (3.8999999999999995,0.35);

\node[gray] at (-5.15, 2.6) {16GB/s};
\node[] at (-4.1, 4.26) {50GB/s};
\end{tikzpicture}
}
        \caption{AMD MI250}
        \label{fig:topomi250example}
    \end{subfigure}
    \caption{Network Topologies of NVIDIA DGX A100 and AMD MI250. \textnormal{The PCIe switches and IB NICs are omitted for simplicity.}}
    \label{fig:topologiesexample}
\end{figure}

Given the heterogeneity and diversity, traditional \emph{static} collective algorithms (e.g., ring, recursive halving/doubling), assuming a simple homogeneous network, are ill-suited for today's ML networks. The mismatch between the assumed homogeneity and the actual heterogeneity leads to network imbalance, congestion, and, ultimately, poor throughput. For instance, ring allreduce~\cite{ringallreduce} assumes a flat network where each node sends data to the next node at equal bandwidth. When applied to multi-box DGX A100, however, ring allreduce is bottlenecked by the slower inter-box bandwidth and underutilizes the faster intra-box bandwidth (see example in Fig~\ref{fig:ringexample}). Further, existing static algorithms (e.g., ring, recursive halving/doubling, Bruck algorithm, BlueConnect)~\cite{halving-doubling,blueconnect} assume that communicating with a single peer can saturate a node's bandwidth. Yet, today's ML networks often feature multi-ported nodes with connections to multiple GPUs/switches, which these static algorithms cannot fully exploit.

To address the heterogeneity and diversity of ML network topologies, recent works (e.g., SCCL~\cite{sccl}, TACCL~\cite{taccl}, TE-CCL~\cite{teccl}, TACOS~\cite{tacos}, BFB~\cite{bfb}, Blink~\cite{blink}, MultiTree~\cite{multitree}, TTO~\cite{TTO}, SyCCL~\cite{syccl}) seek to \emph{dynamically generate a collective communication schedule tailored to a given topology.} However, these schedule generation methods still face limitations. They either rely on NP-hard optimizations (SCCL, TACCL, TE-CCL, SyCCL), use suboptimal greedy algorithms (MultiTree, TACOS), support limited collective operations (Blink), or work only with specific topologies (TTO, BFB). Moreover, network switches pose challenges for all methods. Unlike direct GPU connections, switches support flexible traffic patterns that require specialized modeling in schedule generation. Existing methods either ignore switches (SCCL, TTO, BFB) or rely on suboptimal solutions (Blink, MultiTree, TACCL, TE-CCL, TACOS, SyCCL).

\begin{figure}[tb]
    \centering
    \begin{subfigure}{0.49\columnwidth}
        \centering
        \scalebox{0.6}{
\begin{tikzpicture}[node/.style={rectangle,draw=black,minimum size=7mm,align=center}]
    \node[node,line width=1.5pt]	(10)	at (-1.5,2.5-1) {\scalebox{0.8}[1]{GPU}};
    \node[node]	(11)	at (-0.5,2.5-1) {\scalebox{0.8}[1]{GPU}};
    \node[node]	(12)	at (0.5,2.5-1) {\scalebox{0.8}[1]{GPU}};
    \node[node]	(13)	at (1.5,2.5-1) {\scalebox{0.8}[1]{GPU}};
    \node[node,text width=4cm]	(18)	at (0,0.5+2.5) {NVSwitch};

    \draw[latex-] (11.north) -- ($(18.south -| 0, 0)$);
    \draw ($(18.south -| 0, 0)$) -- (12.north);
    \draw[latex-] (12.north) -- ($(18.south -| 1, 0)$);
    \draw ($(18.south -| 1, 0)$) -- (13.north);

    \node[node]	(0)	at (-1.5,1-2.5) {\scalebox{0.8}[1]{GPU}};
    \node[node]	(1)	at (-0.5,1-2.5) {\scalebox{0.8}[1]{GPU}};
    \node[node]	(2)	at (0.5,1-2.5) {\scalebox{0.8}[1]{GPU}};
    \node[node]	(3)	at (1.5,1-2.5) {\scalebox{0.8}[1]{GPU}};
    \node[node,text width=4cm]	(8)	at (0,-0.5-2.5) {NVSwitch};

    \path[-latex] (10.south) edge (0.north);

    \draw (0.south) -- ($(8.north -| -1, 0)$);
    \draw[-latex] ($(8.north -| -1, 0)$) -- (1.south);
    \draw (1.south) -- ($(8.north -| 0, 0)$);
    \draw[-latex] ($(8.north -| 0, 0)$) -- (2.south);
    \draw (2.south) -- ($(8.north -| 1, 0)$);
    \draw[-latex] ($(8.north -| 1, 0)$) -- (3.south);

    \path[latex-] (13.south) edge (3.north);

    \node[node,text width=4.5cm]	(20)	at (0,0) {InfiniBand Switch};
\end{tikzpicture}
}
        \caption{Ring}
    \end{subfigure}
    \begin{subfigure}{0.49\columnwidth}
        \centering
        \scalebox{0.6}{
\begin{tikzpicture}[node/.style={rectangle,draw=black,minimum size=7mm,align=center}]
    \node[node,line width=1.5pt]	(10)	at (-1.5,2.5-1) {\scalebox{0.8}[1]{GPU}};
    \node[node]	(11)	at (-0.5,2.5-1) {\scalebox{0.8}[1]{GPU}};
    \node[node]	(12)	at (0.5,2.5-1) {\scalebox{0.8}[1]{GPU}};
    \node[node]	(13)	at (1.5,2.5-1) {\scalebox{0.8}[1]{GPU}};
    \node[node,text width=4cm]	(18)	at (0,0.5+2.5) {NVSwitch};

    \draw (10.north) -- ($(18.south -| -1, 0)$);
    \draw[-latex] ($(18.south -| -1, 0)$) -- (11.north);
    \draw (11.north) -- ($(18.south -| 0, 0)$);
    \draw[-latex] ($(18.south -| 0, 0)$) -- (12.north);
    \draw (12.north) -- ($(18.south -| 1, 0)$);
    \draw[-latex] ($(18.south -| 1, 0)$) -- (13.north);

    \node[node]	(0)	at (-1.5,1-2.5) {\scalebox{0.8}[1]{GPU}};
    \node[node]	(1)	at (-0.5,1-2.5) {\scalebox{0.8}[1]{GPU}};
    \node[node]	(2)	at (0.5,1-2.5) {\scalebox{0.8}[1]{GPU}};
    \node[node]	(3)	at (1.5,1-2.5) {\scalebox{0.8}[1]{GPU}};
    \node[node,text width=4cm]	(8)	at (0,-0.5-2.5) {NVSwitch};

    \path[-latex] (10.south) edge (0.north);

    \draw (0.south) -- ($(8.north -| -1, 0)$);
    \draw[-latex] ($(8.north -| -1, 0)$) -- (1.south);
    \draw (1.south) -- ($(8.north -| 0, 0)$);
    \draw[-latex] ($(8.north -| 0, 0)$) -- (2.south);
    \draw (2.south) -- ($(8.north -| 1, 0)$);
    \draw[-latex] ($(8.north -| 1, 0)$) -- (3.south);

    \node[node,text width=4.5cm]	(20)	at (0,0) {InfiniBand Switch};
\end{tikzpicture}
}
        \caption{ForestColl}
        \label{fig:ringexampleoptimized}
    \end{subfigure}
    \caption{Example of ring's suboptimality. \textnormal{(a) and (b) show two broadcast paths from one of the GPUs in ring allgather and ForestColl, respectively. Note that in (a), ring’s path crosses IB switch twice, whereas the path in (b) crosses only once. In allgather, each GPU broadcasts a distinct shard of data to all other GPUs. When all GPUs broadcast simultaneously, ring allgather generates nearly twice the traffic across IB compared to (b), making it suboptimal due to IB’s much lower bandwidth compared to NVSwitch.}}
    \label{fig:ringexample}
\end{figure}
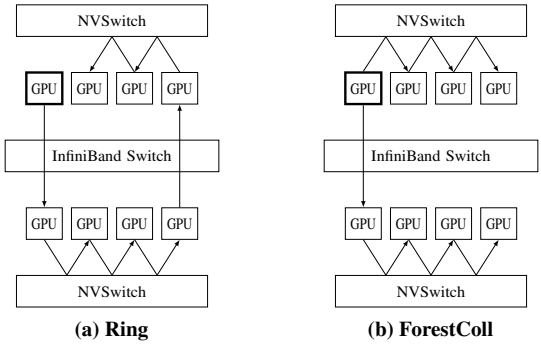

We argue that an ideal schedule generation method should achieve a \textbf{triathlon}: \emph{\ul{optimality}} to produce throughput-efficient schedules, \emph{\ul{generality}} to support heterogeneous and diverse topologies, and \emph{\ul{scalability}} to handle large topologies. In this work, we present ForestColl, a triathlete method capable of generating collective communication schedules with theoretically \textbf{optimal throughput} for any \textbf{heterogeneous topology} in \textbf{polynomial time}. Our approach leverages \textit{spanning tree packing} from graph theory for schedule generation. However, applying spanning tree packing directly is hindered by several key technical challenges, including deriving the throughput optimality, limiting the number of trees, and supporting traffic patterns of switches. ForestColl overcomes these challenges through algorithmic innovations, introducing the following key novelties:
\squishenum
\item \textbf{Optimality:} To the best of our knowledge, ForestColl is the first work capable of deriving the optimal throughput of any topology (\S\ref{sec:throughputoptimality} \& \ref{sec:binarysearch}) and generating optimal schedules for \emph{reduce-scatter}, \emph{allgather}, and \emph{allreduce}.
\item \textbf{Generality:} ForestColl introduces a novel transformation for switch topologies that supports their flexible traffic patterns while preserving optimal throughput (\S\ref{sec:switchremove}). ForestColl also leverages in-network multicast/aggregation capabilities when supported by switches (\S\ref{sec:innetwork}).
\item \textbf{Scalability:} Every part of ForestColl runs in polynomial time (Appendix~\ref{app:sec:runtime}), scalable to large topologies (\S\ref{sec:geneval}).
\squishenumend

We evaluated ForestColl on AMD MI250 and NVIDIA DGX A100 platforms, as well as on a large-scale 128-GPU DGX H100 cluster. At 1GB data size, ForestColl achieves 16\%$\sim$61\% higher throughput than state-of-the-art schedule generation methods on AMD and NVIDIA platforms, and 14\%$\sim$32\% higher throughput than NCCL on the 128-GPU cluster. ForestColl also reduces iteration time in PyTorch FSDP training by 20\% on 70B+ LLMs. In schedule generation, ForestColl is orders of magnitude faster than competing methods while maintaining throughput optimality.

\section{Background \& Related Work}

Based on the generated schedules, current schedule generation methods can be categorized into step schedules (SCCL, TACCL, TE-CCL, TACOS, BFB, SyCCL) and tree-flow schedules (Blink, MultiTree, TTO). \textbf{Step schedules} specify the data exchanged between GPUs at each step, with the entire network progressing through steps in sync. \textbf{Tree-flow schedules} let data flow fluidly through a set of spanning trees, either broadcasting from or reducing to the roots. In this section, we examine the three goals of the \emph{triathlon} and explain why existing methods fail to achieve all three simultaneously.

\textbf{Scalability:} Optimizing collective communication is computationally challenging. Unlike point-to-point traffic, where data dependencies can be enforced by flow conservation, collective operations involve one-to-many multicast and many-to-one aggregation, rendering flow conservation inapplicable. Step schedules like SCCL, TACCL, TE-CCL, and SyCCL choose to track data dependencies in discrete chunks and formulate the scheduling problem as NP-hard SMT or MILP. As a result, they struggle with even modestly sized topologies (\S\ref{sec:geneval}). In contrast, tree-flow schedules scale better as dependencies are naturally maintained through trees. However, existing methods fail to ensure optimality, as we discuss next.

\textbf{Optimality:} The performance of collective operations depends on two key metrics: \emph{throughput} and \emph{latency}. Latency, the fixed time cost incurred by send/recv hops, is critical for small data transfers. However, as data size grows, throughput becomes the dominant factor, as the bandwidth-bound transmission cost quickly outweighs the fixed latency. Step schedules are convenient for optimizing latency, as the number of steps directly corresponds to the number of hops. However, optimizing throughput with step schedules is challenging, requiring careful minimization of congestion across possibly heterogeneous links \emph{within} each step while maintaining data dependencies \emph{across} steps. Tree-flow schedules are better suited for throughput optimization, reducing the problem to minimizing congestion/overlap between trees. The remaining challenge lies in constructing optimal trees, where existing methods---such as Blink’s approximate tree packing and MultiTree’s greedy construction---are inherently suboptimal.

\textbf{Generality:} A general schedule generation should support topologies with (i) diverse graph structures, (ii) links with varying bandwidths, and (iii) switches.\footnote{Support for heterogeneous GPUs is captured by varying link bandwidths.} While most existing works address (i), support for (ii, iii) remains limited. Heterogeneous links present a challenge for step schedules, as they require synchronized step execution across the entire network. Switches add further complexity as they do not produce/consume data, and many cannot multicast/aggregate, necessitating an operating model different from GPUs.

\textbf{Related Work:} Apart from the scalability consideration discussed earlier, none of the existing methods simultaneously achieves both optimality and generality. SCCL can achieve optimality for a given number of data chunks, but the optimal chunking is unknown, and it does not support switch topologies. TACCL, TE-CCL, and SyCCL rely on heuristic tuning (e.g., sketches in TACCL and SyCCL, the reward-based objective in TE-CCL) that does not guarantee optimality. Beyond scalability issues, they are tuned and evaluated on only limited scales and topology types, offering no guarantees for broader settings (\S\ref{sec:geneval}). TACOS and MultiTree employ greedy approaches to assign traffic to links, which also do not ensure optimality. BFB and TTO provide optimality but only for specific types of non-switch topologies. Blink's tree packing constructs all trees rooted at a single node, lacking support for allgather and reduce-scatter. The single root becomes a bottleneck in allreduce, as Blink performs allreduce via reduce+broadcast. Its solution for multi-box switch settings is ad hoc and unrelated to its tree packing.

Other efforts to accelerate ML training communications, such as network infra optimizations, hybrid parallelism, and comp-comm overlap, are orthogonal and complementary to ForestColl. In particular, ForestColl's communication acceleration reduces the need for compute and memory trade-offs to hide communication costs in hybrid parallelism and comp-comm overlap. Detailed discussion is in Appendix~\ref{sec:otherrelated}.

\section{Overview of ForestColl}\label{sec:treepackingoverview}

We now introduce ForestColl. To ensure throughput optimality for throughput-bound LLM training~\cite{metardma,alibabahpn,trainium,kaichen}, ForestColl adopts tree-flow schedules. Unlike prior tree-flow methods, ForestColl leverages spanning tree packing to construct a ``forest'' of spanning trees---an equal number of trees rooted at each node---that achieves the triathlon for multi-root collectives. To accomplish this, ForestColl introduces several key techniques, including a method to compute the optimal throughput of any given topology and a topology transformation to support throughput optimality in switch networks.

\textbf{Spanning Tree Packing} is a well-studied topic in graph theory that focuses on determining the maximum number of spanning trees that can be constructed in a graph given edge capacities~\cite{bang-jensen, tarjan, edmonds, berczi2010packing, schrijver}. In ForestColl's tree-flow schedules, each tree occupies and utilizes an equal share of bandwidth, making spanning tree packing a useful tool for constructing trees that make optimal use of the available link bandwidths. While efficient and optimal tree-packing algorithms have been proposed in graph theory, these algorithms are not directly applicable to our schedule generation, leaving several challenges.

\begin{figure}[tb]
    \centering
    \begin{subfigure}{0.49\columnwidth}
        \centering
        \scalebox{0.6}{
\begin{tikzpicture}[node/.style={rectangle,draw=black,minimum size=7mm,align=center}]
    \node[node,line width=1.5pt]	(10)	at (-1.5,2.5-1) {\scalebox{0.8}[1]{GPU}};
    \node[node]	(11)	at (-0.5,2.5-1) {\scalebox{0.8}[1]{GPU}};
    \node[node]	(12)	at (0.5,2.5-1) {\scalebox{0.8}[1]{GPU}};
    \node[node]	(13)	at (1.5,2.5-1) {\scalebox{0.8}[1]{GPU}};
    \node[node,text width=4cm]	(18)	at (0,0.5+2.5) {NVSwitch};

    \path[-latex] (10.north) edge (10.north|-18.south);
    \path[latex-] (11.north) edge (11.north|-18.south);
    \path[latex-] (12.north) edge (12.north|-18.south);
    \path[latex-] (13.north) edge (13.north|-18.south);

    \node[node]	(0)	at (-1.5,1-2.5) {\scalebox{0.8}[1]{GPU}};
    \node[node]	(1)	at (-0.5,1-2.5) {\scalebox{0.8}[1]{GPU}};
    \node[node]	(2)	at (0.5,1-2.5) {\scalebox{0.8}[1]{GPU}};
    \node[node]	(3)	at (1.5,1-2.5) {\scalebox{0.8}[1]{GPU}};
    \node[node,text width=4cm]	(8)	at (0,-0.5-2.5) {NVSwitch};

    \node[node,text width=4.5cm]	(20)	at (0,0) {InfiniBand Switch};

    \path[-latex] (3.south) edge (3.south|-8.north);
    
    \path[latex-] (0.north) edge (0.north|-20.south);
    \path[latex-] (1.north) edge (1.north|-20.south);
    \path[latex-] (2.north) edge (2.north|-20.south);
    \path[latex-] (3.north) edge (3.north|-20.south);
    
    \path[-latex] (10.south) edge (10.south|-20.north);
\end{tikzpicture}
}
        \caption{Direct Tree Construction}
    \end{subfigure}
    \begin{subfigure}{0.49\columnwidth}
        \centering
        \scalebox{0.6}{
\begin{tikzpicture}[node/.style={rectangle,draw=black,minimum size=7mm,align=center}]
    \node[node,line width=1.5pt]	(10)	at (-1.5,2.5-1) {\scalebox{0.8}[1]{GPU}};
    \node[node]	(11)	at (-0.5,2.5-1) {\scalebox{0.8}[1]{GPU}};
    \node[node]	(12)	at (0.5,2.5-1) {\scalebox{0.8}[1]{GPU}};
    \node[node]	(13)	at (1.5,2.5-1) {\scalebox{0.8}[1]{GPU}};
    \node[node,gray!70,text width=4cm]	(18)	at (0,0.5+2.5) {NVSwitch};

    \draw (10.north) -- ($(18.south -| -1, 0)$);
    \draw[-latex] ($(18.south -| -1, 0)$) -- (11.north);
    \draw (11.north) -- ($(18.south -| 0, 0)$);
    \draw[-latex] ($(18.south -| 0, 0)$) -- (12.north);
    \draw (12.north) -- ($(18.south -| 1, 0)$);
    \draw[-latex] ($(18.south -| 1, 0)$) -- (13.north);

    \node[node]	(0)	at (-1.5,1-2.5) {\scalebox{0.8}[1]{GPU}};
    \node[node]	(1)	at (-0.5,1-2.5) {\scalebox{0.8}[1]{GPU}};
    \node[node]	(2)	at (0.5,1-2.5) {\scalebox{0.8}[1]{GPU}};
    \node[node]	(3)	at (1.5,1-2.5) {\scalebox{0.8}[1]{GPU}};
    \node[node,gray!70,text width=4cm]	(8)	at (0,-0.5-2.5) {NVSwitch};

    \path[-latex] (10.south) edge (0.north);

    \draw (0.south) -- ($(8.north -| -1, 0)$);
    \draw[-latex] ($(8.north -| -1, 0)$) -- (1.south);
    \draw (1.south) -- ($(8.north -| 0, 0)$);
    \draw[-latex] ($(8.north -| 0, 0)$) -- (2.south);
    \draw (2.south) -- ($(8.north -| 1, 0)$);
    \draw[-latex] ($(8.north -| 1, 0)$) -- (3.south);

    \node[node,gray!70,text width=4.5cm]	(20)	at (0,0) {InfiniBand Switch};
\end{tikzpicture}
}
        \caption{Desired ``Tree'' Construction}
    \end{subfigure}
    \caption{Example of spanning tree construction on a switch topology. \textnormal{(a) shows a spanning tree constructed directly on the input switch topology, resulting in two issues: (1) the construction assumes switches are capable of in-network multicast/aggregation, which is not always supported; (2) the tree unnecessarily spans the bottom NVSwitch, which does not consume data. (b)~shows the desired ``tree'' construction, as provided by ForestColl, which is a spanning tree among GPU nodes only. Switches are not part of the tree but serve only to provide connections between the GPUs.}}
    \label{fig:switchtree}
\end{figure}
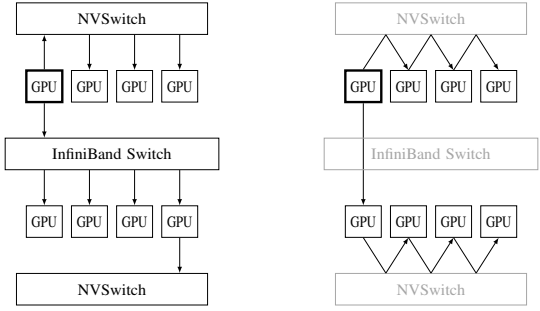

\textbf{Technical Challenges:} Traditional spanning tree packing defines link capacity as the number of trees a link can support. As a result, directly using link bandwidth as capacity in our case would require constructing hundreds or even thousands of trees, as bandwidth values are typically large. This presents a scalability challenge: \textit{\ul{How to achieve throughput optimality with a small number of trees?}} To overcome this challenge, we choose to first determine the optimal bandwidth each tree occupies and scale link capacities accordingly before applying tree packing. This, however, raises a broader, previously unsolved optimality challenge: \textit{\ul{How can the optimal collective communication throughput of a topology be determined?}} Finally, as shown in Figure~\ref{fig:switchtree}, the traffic patterns of network switches render the desired schedule no longer properly defined by spanning trees, raising a generality challenge: \textit{\ul{How to apply spanning tree packing while supporting the unique traffic patterns of switches?}}

\textbf{Overview:} ForestColl effectively solves the challenges. We studied the throughput optimality of a topology and found that it is determined by the \emph{throughput bottleneck cut}---the network cut with the highest ratio of minimal required traffic to available bandwidth (\S\ref{sec:throughputoptimality}). In ForestColl, we combine binary search and network flow to compute this optimality, determining the optimal number of trees and the bandwidth each tree occupies (\S\ref{sec:binarysearch}). Once edge capacities are scaled accordingly, the tree packing algorithm can be applied to construct the trees (\S\ref{sec:treeconstruct}). For switches, we devised a way to transform a switch topology into a switch-free logical topology before applying tree packing (\S\ref{sec:switchremove}). Unlike TACCL and TACOS, which remove switches and connect nodes in preset patterns that overlook the performance impact of losing switches' all-to-all connectivity, ForestColl's transformation ensures no compromise in performance. Together, ForestColl achieves the triathlon of ideal schedule generation.

\textbf{Limitations:} ForestColl does not solve all problems in collective communication. First, efficient communication requires not only effective scheduling but also optimized implementations for different hardware. ForestColl derives optimal schedules to guide such implementations. Second, ForestColl prioritizes throughput over latency. Latency is better optimized at the implementation level (e.g., through low-latency protocols and CUDA kernels), and low-latency scheduling has been extensively studied in step schedules. Although ForestColl does not guarantee latency optimality, it still achieves strong latency performance for small data sizes in practice (\S\ref{sec:evaluation}). Third, ForestColl is designed as an offline schedule generator that runs once per topology, rather than real-time online scheduling in subseconds. A generation time of minutes is trivial when amortized over cluster setup and model training that run for hours or even days. Fourth, ForestColl targets allreduce-type collectives used in data/tensor/context parallelism in ML, but not P2P all-to-all communication used in expert parallelism, which often requires workload-dynamic online scheduling. Finally, ForestColl does not exploit network symmetry to speedup scheduling. While it can generate symmetric schedules, preserving both symmetry and optimality does not yield performance benefits. Since the current algorithm has no scalability issues, leveraging network symmetry is left for future work.

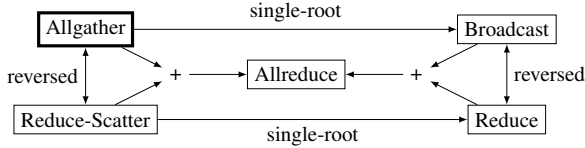
\begin{figure}[tb]
    \centering
    \scalebox{0.8}{
\begin{tikzpicture}
    \node[rectangle,draw=black, line width=1.5pt] (0) at (0,0) {Allgather};
    \node[rectangle,draw=black] (1) at (0,-1.5) {Reduce-Scatter};
    \node[rectangle,draw=black] (2) at (7,0) {Broadcast};
    \node[rectangle,draw=black] (3) at (7,-1.5) {Reduce};
    \node (4) at (1.5,-0.75) {$+$};
    \node[rectangle,draw=black] (5) at (3.5,-0.75) {Allreduce};
    \node (6) at (5.5,-0.75) {$+$};

    \path[latex-latex,anchor=east] (0) edge node {reversed} (1);
    \path[latex-latex,anchor=west] (2) edge node {reversed} (3);
    \path[-latex,anchor=south] (0) edge node {single-root} (2);
    \path[-latex,anchor=north] (1) edge node {single-root} (3);

    \path[-latex] (0) edge (4);
    \path[-latex] (1) edge (4);
    \path[-latex] (4) edge (5);
    \path[-latex] (2) edge (6);
    \path[-latex] (3) edge (6);
    \path[-latex] (6) edge (5);
\end{tikzpicture}
}
    \caption{Relationships between collective operations. \textnormal{Reduce and reduce-scatter can be constructed by reversing the communications of broadcast and allgather~\cite{conc_comp}. Reduce and broadcast are single-root versions of reduce-scatter and allgather. Finally, allreduce can be performed via reduce-scatter plus allgather or reduce plus broadcast. While this paper focuses on allgather, the method applies to other operations.}}
    \label{fig:collective-rel}
\end{figure}

\section{Throughput Optimality for Collectives}\label{sec:throughputoptimality}

Collective operations can be classified into \textit{aggregation only} (e.g., reduce, reduce-scatter), \textit{broadcast only} (e.g., broadcast, allgather), and \textit{aggregation plus broadcast} (e.g., allreduce). Aggregation requires \emph{in-trees}, with edge directions flowing from leaves to the root, while broadcast requires \emph{out-trees}, where edges flow from the root to leaves. In terms of roots, collective operations can also be categorized as \textit{single-root} (e.g., reduce, broadcast) or \textit{multi-root} (e.g., reduce-scatter, allgather, allreduce). Figure~\ref{fig:collective-rel} shows the relationships between operations. While we explain ForestColl in the context of allgather, it can be easily applied to other operations (\S\ref{sec:othercollectives}).

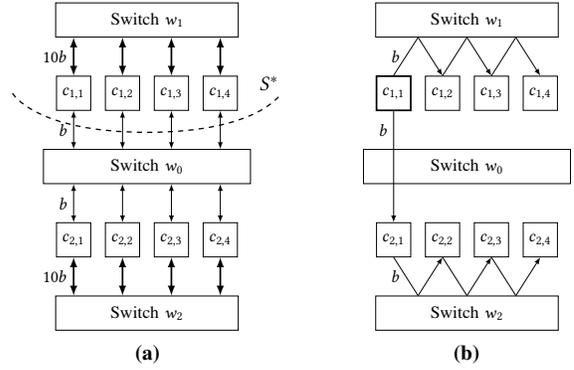
\begin{figure}[t]
    \centering
    \begin{subfigure}{0.49\columnwidth}
        \centering
        \scalebox{0.65}{
\begin{tikzpicture}[node/.style={rectangle,draw=black,minimum size=7mm,align=center}]
    \node[node]	(10)	at (-1.5,2.5-1) {$c_{1,1}$};
    \node[node]	(11)	at (-0.5,2.5-1) {$c_{1,2}$};
    \node[node]	(12)	at (0.5,2.5-1) {$c_{1,3}$};
    \node[node]	(13)	at (1.5,2.5-1) {$c_{1,4}$};
    \node[node,text width=3.5cm]	(14)	at (0,0.5+2.5) {Switch $w_1$};
    
    \path[latex-latex,anchor=east,line width=1pt] (10.north) edge node {$10b$} (10.north|-14.south);
    \path[latex-latex,line width=1pt] (11.north) edge (11.north|-14.south);
    \path[latex-latex,line width=1pt] (12.north) edge (12.north|-14.south);
    \path[latex-latex,line width=1pt] (13.north) edge (13.north|-14.south);
    
    \node[node]	(0)	at (-1.5,1-2.5) {$c_{2,1}$};
    \node[node]	(1)	at (-0.5,1-2.5) {$c_{2,2}$};
    \node[node]	(2)	at (0.5,1-2.5) {$c_{2,3}$};
    \node[node]	(3)	at (1.5,1-2.5) {$c_{2,4}$};
    \node[node,text width=3.5cm]	(4)	at (0,-0.5-2.5) {Switch $w_2$};
    
    \path[latex-latex,anchor=east,line width=1pt] (0.south) edge node {$10b$} (0.south|-4.north);
    \path[latex-latex,line width=1pt] (1.south) edge (1.south|-4.north);
    \path[latex-latex,line width=1pt] (2.south) edge (2.south|-4.north);
    \path[latex-latex,line width=1pt] (3.south) edge (3.south|-4.north);
            
    \node[node,text width=4cm]	(20)	at (0,0) {Switch $w_0$};
    
    \path[latex-latex,anchor=east] (0.north) edge node {$b$} (0.north|-20.south);
    \path[latex-latex] (1.north) edge (1.north|-20.south);
    \path[latex-latex] (2.north) edge (2.north|-20.south);
    \path[latex-latex] (3.north) edge (3.north|-20.south);
    
    \path[latex-latex,anchor=east] (10.south) edge node {$b$} (10.south|-20.north);
    \path[latex-latex] (11.south) edge (11.south|-20.north);
    \path[latex-latex] (12.south) edge (12.south|-20.north);
    \path[latex-latex] (13.south) edge (13.south|-20.north);

    \draw[line width=0.75pt, dashed] plot [smooth, tension=1.5] coordinates { (-2.75,1.5) (0,0.7) (2.75,1.5) };
    \node[font=\Large] at (2.5, 1.7) {$S^*$};
\end{tikzpicture}
}
        \caption{}
    \end{subfigure}
    \begin{subfigure}{0.49\columnwidth}
        \centering
        \scalebox{0.65}{
\begin{tikzpicture}[node/.style={rectangle,draw=black,minimum size=7mm,align=center}]
    \node[node, line width=1pt]	(10)	at (-1.5,2.5-1) {$c_{1,1}$};
    \node[node]	(11)	at (-0.5,2.5-1) {$c_{1,2}$};
    \node[node]	(12)	at (0.5,2.5-1) {$c_{1,3}$};
    \node[node]	(13)	at (1.5,2.5-1) {$c_{1,4}$};
    \node[node,text width=3.5cm]	(14)	at (0,0.5+2.5) {Switch $w_1$};
    
    \draw (10.north) -- node [anchor=east] {$b$} ($(14.south -| -1, 0)$);
    \draw[-latex] ($(14.south -| -1, 0)$) -- (11.north);
    
    \draw (11.north) -- ($(14.south -| 0, 0)$);
    \draw[-latex] ($(14.south -| 0, 0)$) -- (12.north);
    
    \draw (12.north) -- ($(14.south -| 1, 0)$);
    \draw[-latex] ($(14.south -| 1, 0)$) -- (13.north);

    \node[node]	(0)	at (-1.5,1-2.5) {$c_{2,1}$};
    \node[node]	(1)	at (-0.5,1-2.5) {$c_{2,2}$};
    \node[node]	(2)	at (0.5,1-2.5) {$c_{2,3}$};
    \node[node]	(3)	at (1.5,1-2.5) {$c_{2,4}$};
    \node[node,text width=3.5cm]	(4)	at (0,-0.5-2.5) {Switch $w_2$};
            
    \node[node,text width=4cm]	(20)	at (0,0) {Switch $w_0$};
    
    \path[-latex] (10.south) edge (0.north);
    
    \draw (0.south) -- node [anchor=east] {$b$} ($(4.north -| -1, 0)$);
    \draw[-latex] ($(4.north -| -1, 0)$) -- (1.south);
    
    \draw (1.south) -- ($(4.north -| 0, 0)$);
    \draw[-latex] ($(4.north -| 0, 0)$) -- (2.south);
    
    \draw (2.south) -- ($(4.north -| 1, 0)$);
    \draw[-latex] ($(4.north -| 1, 0)$) -- (3.south);

    \path[anchor=east] (10.south) edge [draw=none] node {$b$} (10.south|-20.north);
\end{tikzpicture}
}
        \caption{}
    \end{subfigure}
    \caption{Example of Spanning Out-Tree. \textnormal{(a) shows a 2-box 8-compute-node switch topology along with the throughput bottleneck cut. The intra-box connections (thick lines) have 10x the bandwidth of inter-box ones (thin lines). (b) shows one example of ForestColl's spanning out-trees that is rooted at compute node $c_{1,1}$.}}
    \label{fig:tree-example}
\end{figure}

Knowing the throughput optimality of a given network is crucial for optimizing schedules. Previous work often defines optimality as \mbox{$\frac{M(N\!-\!1)}{N}\!\cdot\!\beta$}~\cite{sccl,bfb,conc_comp}, where $\frac{M(N\!-\!1)}{N}$ is the amount of data each node must receive in allgather, and $\beta$ is the time cost per unit of data. However, this definition only holds when the bottleneck is each individual node's bandwidth. In ML hardware, the bottleneck is often the scale-out network, e.g., the IB bandwidth of a multi-GPU box. In this section, we introduce the concept of \textit{throughput bottleneck cut}, which determines the throughput optimality of allgather.

We model a network topology as a directed graph $G$, where edge capacities signify link bandwidths, and the vertex set $V$ consists of \textit{compute nodes}~$V_c$ (e.g., GPUs) and \textit{switch nodes}~$V_s$. Figure~\ref{fig:tree-example}(a) shows an example. In allgather, each compute node needs to broadcast an equal shard of data to all other compute nodes. We denote the total amount of data $M$, the number of compute nodes $|V_c|\!=\!N$, and thus shard size $\frac{M}{N}$.

In Figure~\ref{fig:tree-example}(a), consider the network cut $S^*$, which contains all nodes in the top box: compute nodes $c_{1,1},c_{1,2},c_{1,3},c_{1,4}$ and switch node $w_1$. To finish an allgather, each compute node within $S^*$ must send at least one copy of its shard across the cut to the bottom box; otherwise, $c_{2,1},c_{2,2},c_{2,3},c_{2,4}$ will fail to receive some shards. Therefore, at least $4\!\cdot\!\frac{M}{N}$ amount of data has to exit cut $S^*$. Note that the total bandwidth exiting $S^*$ is $4b$, counting all four links connecting $c_{1,*}$ to the inter-box switch $w_0$. Thus, a lower bound for the allgather communication time in this topology is $4\!\cdot\!\frac{M}{N}/(4b)\!=\!\frac{M}{8b}$.

The lower bound can be generalized to any topology $G$. Given an arbitrary network cut $S\!\subset\! V$ in $G$, if there is any compute node not in $S$ (i.e., $S\!\not\supseteq\! V_c$), then at least $\frac{M}{N}|S\cap V_c|$ amount of data has to exit $S$. Let $B^+(S)$ denote the exiting bandwidth of $S$, i.e., the sum of bandwidths of links going from $S$ to $V\!-\!S$, then $\frac{M}{N}\!\cdot\!\frac{|S\cap V_c|}{B^+(S)}$ is a lower bound for allgather communication time $T_{\text{comm}}$ in topology $G$. Consider all such cuts in $G$, then $T_{\text{comm}}$ satisfies
\begin{equation*}
    T_{\text{comm}}\geq\frac{M}{N}\max_{S\subset V,S\not\supseteq V_c}\frac{|S\cap V_c|}{B^+(S)}.
    \tag{$\star$}\label{eq:optimality}
\end{equation*}
We call any cut $S$ that maximizes $\frac{|S\cap V_c|}{B^+(S)}$, the ratio of compute nodes within the cut to the exiting bandwidth, as the \emph{throughput~bottleneck~cut}\footnote{Different from traditional min cut, which only minimizes $B^+(S)$.}. In this work, we present ForestColl, which can achieve the RHS of (\ref{eq:optimality}). Since (\ref{eq:optimality}) is a lower bound of allgather time, ForestColl achieves throughput optimality.
\section{Algorithm Design}\label{sec:algorithm}

In this section, we delve into the details of ForestColl's algorithm, which solves the following problem:

\begin{mdframed}[innerleftmargin=3pt,innerrightmargin=3pt,innertopmargin=3pt,innerbottommargin=3pt,nobreak=true]
\textbf{ForestColl Problem Definition}

\noindent\textbf{Input:} A topology\footnotemark\ modeled as a directed graph $G$ with integer link bandwidths and a vertex set $V$ consisting of compute nodes $V_c$ and switch nodes $V_s$.

\noindent\textbf{Output:} A set of spanning out-trees $\{T_{u,i}\}_{u\in V_c,i\in [k]}$ over compute nodes, where each tree occupies an equal amount of bandwidth and collectively, they achieve optimality~(\ref{eq:optimality}).
\end{mdframed}

\footnotetext{Each node must have equal total ingress and egress bandwidth. \textit{Does not exclude oversubscription,} as network tiers can still have varying bandwidths.}

The set of spanning out-trees $\{T_{u,i}\}_{u\in V_c,i\in [k]}$ consists of $k$ trees rooted at each compute node $u$, with $k$ determined algorithmically. Correspondingly, a $1/k$ shard of data is broadcast along each out-tree simultaneously.
Note that the out-trees are spanning trees of compute nodes only, as explained in Figure~\ref{fig:switchtree}. Figure~\ref{fig:tree-example}(b) shows an example of the out-tree with allocated bandwidth $b$.

This section introduces the high-level intuitions and steps of ForestColl's algorithms. We provide detailed mathematical analysis in Appendix~\ref{app:sec:algodesign} and proofs in Appendix~\ref{app:sec:proofs}. Appendix~\ref{app:sec:notations} includes a summary of notations used in this paper.

\subsection{Algorithm Overview}

ForestColl starts with a binary search to compute the optimality (\ref{eq:optimality}) established by throughput bottleneck cut. Iterating through all cuts to find the bottleneck cut is intractable due to the exponential number of possible cuts. Instead, we design an auxiliary network on which we can compute maxflow to determine if a given value is $\geq$ or $<$ than optimality, thus enabling a binary search. Knowing the optimality is crucial for deciding the number of trees per compute node (i.e., $k$) and the bandwidth per tree to achieve optimality.

In a switch-free topology, after knowing the bandwidth per tree and the number of trees, we directly apply \emph{spanning tree packing}~\cite{bang-jensen, tarjan, edmonds, berczi2010packing, schrijver} to construct the optimal set of out-trees. In a switch topology, however, we retrofit the \emph{edge splitting technique}~\cite{bang-jensen,frank,jackson} to eliminate switch nodes before constructing spanning trees. We replace each switch node with direct logical links between its neighboring nodes, creating a switch-free logical topology where spanning tree packing can be applied to construct throughput-optimal trees. A post-processing step can further enable in-network switch multicast/aggregation.
\begin{figure}[tb]
    \centering
    \begin{subfigure}{0.49\columnwidth}
        \centering
        \scalebox{0.65}{
\begin{tikzpicture}[node/.style={rectangle,draw=black,minimum size=7mm,align=center}]
    \node[node]	(10)	at (-1.5,2.5-1) {$c_{1,1}$};
    \node[node]	(11)	at (-0.5,2.5-1) {$c_{1,2}$};
    \node[node]	(12)	at (0.5,2.5-1) {$c_{1,3}$};
    \node[node]	(13)	at (1.5,2.5-1) {$c_{1,4}$};
    \node[node,text width=3.5cm]	(14)	at (0,0.5+2.5) {Switch $w_1$};
    
    \node[node]	(0)	at (-1.5,1-2.5) {$c_{2,1}$};
    \node[node]	(1)	at (-0.5,1-2.5) {$c_{2,2}$};
    \node[node]	(2)	at (0.5,1-2.5) {$c_{2,3}$};
    \node[node]	(3)	at (1.5,1-2.5) {$c_{2,4}$};
    \node[node,text width=3.5cm]	(4)	at (0,-0.5-2.5) {Switch $w_2$};
    
    \path[latex-,anchor=east,line width=1pt] (0.south) edge node {$10b$} (0.south|-4.north);
            
    \node[node,text width=4cm]	(20)	at (0,0) {Switch $w_0$};
    
    \path[latex-,anchor=east] (0.north) edge node {$b$} (0.north|-20.south);
    
    \path[-latex,anchor=east] (10.south) edge node {$b$} (10.south|-20.north);
    \path[-latex,anchor=east] (11.south) edge node {$b$} (11.south|-20.north);
    \path[-latex,anchor=east] (12.south) edge node {$b$} (12.south|-20.north);
    \path[-latex,anchor=east] (13.south) edge node {$b$} (13.south|-20.north);

    \draw[line width=0.75pt, dashed] plot [smooth, tension=1.5] coordinates { (-2.75,1.5) (0,0.7) (2.75,1.5) };
    \node[font=\Large] at (2.5, 1.7) {$S^*$};

    \draw[line width=0.75pt, dashed] plot [smooth, tension=3] coordinates { (-2.5,-0.65) (-0.95,-1.5) (-2.5,-2.35) };
    \node[font=\Large] at (-0.8,-0.75) {$S'$};
\end{tikzpicture}
}
        \caption{}
    \end{subfigure}
    \begin{subfigure}{0.49\columnwidth}
        \centering
        \scalebox{0.65}{
\begin{tikzpicture}[node/.style={rectangle,draw=black,minimum size=7mm,align=center}]
    \node[node]	(10)	at (-1.5,2.5-1) {$c_{1,1}$};
    \node[node]	(11)	at (-0.5,2.5-1) {$c_{1,2}$};
    \node[node]	(12)	at (0.5,2.5-1) {$c_{1,3}$};
    \node[node]	(13)	at (1.5,2.5-1) {$c_{1,4}$};
    \node[node,text width=3.5cm]	(14)	at (0,0.5+2.5) {Switch $w_1$};
    
    \path[latex-latex,line width=1pt,gray] (10.north) edge (10.north|-14.south);
    \path[latex-latex,line width=1pt,gray] (11.north) edge (11.north|-14.south);
    \path[latex-latex,line width=1pt,gray] (12.north) edge (12.north|-14.south);
    \path[latex-latex,line width=1pt,gray] (13.north) edge (13.north|-14.south);
    
    \node[node]	(0)	at (-1.5,1-2.5) {$c_{2,1}$};
    \node[node]	(1)	at (-0.5,1-2.5) {$c_{2,2}$};
    \node[node]	(2)	at (0.5,1-2.5) {$c_{2,3}$};
    \node[node]	(3)	at (1.5,1-2.5) {$c_{2,4}$};
    \node[node,text width=3.5cm]	(4)	at (0,-0.5-2.5) {Switch $w_2$};
    
    \path[latex-latex,line width=1pt,gray] (0.south) edge (0.south|-4.north);
    \path[latex-latex,line width=1pt,gray] (1.south) edge (1.south|-4.north);
    \path[latex-latex,line width=1pt,gray] (2.south) edge (2.south|-4.north);
    \path[latex-latex,line width=1pt,gray] (3.south) edge (3.south|-4.north);
            
    \node[node,text width=4cm]	(20)	at (0,0) {Switch $w_0$};
    
    \path[latex-latex,gray] (0.north) edge (0.north|-20.south);
    \path[latex-latex,gray] (1.north) edge (1.north|-20.south);
    \path[latex-latex,gray] (2.north) edge (2.north|-20.south);
    \path[latex-latex,gray] (3.north) edge (3.north|-20.south);
    
    \path[latex-latex,gray] (10.south) edge (10.south|-20.north);
    \path[latex-latex,gray] (11.south) edge (11.south|-20.north);
    \path[latex-latex,gray] (12.south) edge (12.south|-20.north);
    \path[latex-latex,gray] (13.south) edge (13.south|-20.north);

    \draw[line width=0.75pt, dashed] plot [smooth, tension=1.5] coordinates { (-2.75,1.5) (0,0.7) (2.75,1.5) };

    \node[circle,line width=1pt,draw=blue] (00)    at (2.5, 2.25) {\textcolor{blue}{$s$}};
    \draw[-latex,line width=1pt,blue] (00) to[bend right=28] node[above,pos=0.3]{$x$} (10);
    \draw[-latex,line width=1pt,blue] (00) to[bend right=25] (11);
    \draw[-latex,line width=1pt,blue] (00) to[bend right=25] (12);
    \draw[-latex,line width=1pt,blue] (00) to[bend right=25] (13);

    \draw[-latex,line width=1pt,blue] (00) to[bend left=35] (0);
    \draw[-latex,line width=1pt,blue] (00) to[bend left=35] (1);
    \draw[-latex,line width=1pt,blue] (00) to[bend left=35] (2);
    \draw[-latex,line width=1pt,blue] (00) to[bend left=25] node[right,pos=0.8]{$x$} (3);
\end{tikzpicture}
}
        \caption{}
    \end{subfigure}
    \caption{The auxiliary network for optimality binary search. \textnormal{(a) shows two cuts: $S^*,S'$, along with their exiting bandwidths. Note that $S'$ is $V\!-\! c_{2,1}$ instead of $\{c_{2,1}\}$. (b) shows the auxiliary network that there exists a set of spanning out-trees broadcasting $x$ amount of flow from each compute node if and only if the maxflow from $s$ to every compute node is $Nx$.}}
    \label{fig:binary}
\end{figure}
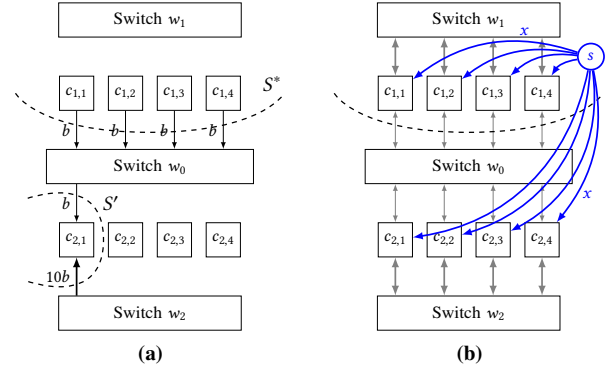

\subsection{Optimality Binary Search}\label{sec:binarysearch}

We present ForestColl's binary search to compute the throughput optimality~(\ref{eq:optimality}). 
Let the total bandwidth of the out-trees rooted at each compute node be $x$. As each node simultaneously broadcasts a shard of data, the communication time $T_{\text{comm}}\!=\!\frac{M}{N}\!\cdot\!\frac{1}{x}.$
Therefore, to minimize $T_{\text{comm}}$, we need to maximize $x$. The goal of the binary search is to \textit{find the maximum $x$ such that there exists a set of spanning out-trees broadcasting $x$ amount of flow from each compute node.} We denote the maximum such $x$ as $x^*$.
Before describing the binary search for computing $x^*$, we first show that $\frac{1}{x^*}$ is precisely the ratio of compute nodes to exiting bandwidth at the throughput bottleneck cut, i.e., $\frac{1}{x^*}\!=\!\max_{S\subset V,S\not\supseteq V_c}\frac{|S\cap V_c|}{B^+(S)}$ in optimality~(\ref{eq:optimality}).

Since each compute node broadcasts $x$ amount of flow to every other compute node, the exiting flow of any cut $S$ is at least $|S\cap V_c|x$ if there is a compute node outside of $S$.
Figure~\ref{fig:binary}(a) shows two such cuts: $S^*$ and $S'$. $S^*$ includes four compute nodes, resulting in an exiting flow of $4x$. $S'$ includes \textit{all} compute and switch nodes except $c_{2,1}$, with an exiting flow of $7x$ to $c_{2,1}$. Suppose $x\!=\!b$ (the inter-box link bandwidth). For cut $S'$, the exiting bandwidth $B^+(S')$ is $11b$, more than sufficient for the exiting flow $7b$. However, for cut $S^*$, the exiting bandwidth $B^+(S^*)$ is exactly $4b$, equal to the required amount of exiting flow. Thus, we are bottlenecked by $S^*$: if $x\!>\!b$, cut $S^*$ cannot sustain the cumulative exiting flow from $c_{1,*}$ to $c_{2,*}$ anymore. Consequently, $x^*\!=\!b$ bounds the maximum flow each compute node can simultaneously broadcast. In an arbitrary topology, as we increase $x$, we will always be bottlenecked by a cut like $S^*$. This cut is exactly the throughput bottleneck cut in optimality~(\ref{eq:optimality}), where $\frac{1}{x^*}\!=\!\frac{|S^*\cap V_c|}{B^+(S^*)}\!=\!\max_{S\subset V,S\not\supseteq V_c}\frac{|S\cap V_c|}{B^+(S)}$. Therefore, $x^*$ is the maximum $x$ that does not overwhelm any cut in the topology.

\textbf{Detect Overwhelmed Cut:} To conduct a binary search for $x^*$, given a value of $x$, we determine if $x\!\leq\! x^*$ or $>\!x^*$ by detecting if $x$ overwhelms any cut in the topology. This presents a complex problem because (i)~both the amount of exiting flow and the bandwidth of the cut need to be considered, e.g., $S'$'s bandwidth is not saturated when $x\!=\!b$ despite having a larger exiting flow than $S^*$; (ii)~testing every possible cut is intractable due to the exponential number of cuts. 

\textbf{Auxiliary Network:} To address the above issues, we construct an auxiliary network as in Figure~\ref{fig:binary}(b). We add a source node $s$ and connect $s$ to every compute node with capacity $x$. Suppose we want to check if $S^*$ is overwhelmed. We pick an arbitrary compute node outside of $S^*$, say $c_{2,2}$, and calculate the maxflow from $s$ to $c_{2,2}$. If no cut is overwhelmed, then $c_{2,2}$ should get all the flow that $s$ can emit, which equals~$8x$. However, if we set $x\!>\!b$, while the $4x$ amount of flow from $s$ to $c_{2,*}$ (compute nodes outside of $S^*$) can directly bypass $B^+(S^*)$, the $4x$ amount of flow from $s$ to $c_{1,*}$ (compute nodes within $S^*$) must pass through $B^+(S^*)$ to reach $c_{2,2}$, capped at~$4b$. Thus, if $x\!>\!b$, then the maxflow from $s$ to $c_{2,2}$ is capped at~$4x\!+\!4b$, which is less than $8x$, signaling a cut being overwhelmed. \textit{The maxflow to $c_{2,2}$ checks all cuts that do not contain $c_{2,2}$.} To check all the exponential number of cuts in the network, we only need to compute a maxflow from $s$ to every compute node $c$. If the maxflow from $s$ to any $c$ is $<\!Nx$, then some cut between $s$ and $c$ is overwhelmed, indicating $x\!>\!x^*$; otherwise, $x\!\leq\!x^*$ for the binary search.

{\small
\begin{algorithm}[tb]
    \small
    \caption{Optimality Binary Search}
    \label{algo:binarysearch}
    \SetAlgoLined
    \linespread{0.9}\selectfont
    \DontPrintSemicolon
    \KwIn{A directed graph $G=(V_s\cup V_c,E)$}
    \KwOut{$\frac{1}{x^*}\!=\!\max_{S\subset V,S\not\supseteq V_c}\frac{|S\cap V_c|}{B^+(S)}$}
    \Begin{
        $l\leftarrow\frac{N-1}{\min_{v\in V_c}B^-(v)}$\tcp*{\normalfont\textit{a lower bound of $\frac{1}{x^*}$}}
        $r\leftarrow N-1$\tcp*{\normalfont\textit{an upper bound of $\frac{1}{x^*}$}}
        \While{$r-l\geq 1/\min_{v\in V_c}B^-(v)^2$}{
            $\frac{1}{x}\leftarrow(l+r)/2$\;
            Add node $s$ to $G$.\;
            \ForEach{\normalfont compute node $c\in V_c$}{
                Add an edge from $s$ to $c$ with capacity $x$.\;
            }
            \eIf{\normalfont the maxflow from $s$ to each $c\in V_c$ is $Nx$}{
                $r\leftarrow\frac{1}{x}$\tcp*{\normalfont\textit{case $\frac{1}{x}\!\geq\!\frac{1}{x^*}$}}
            }{
                $l\leftarrow\frac{1}{x}$\tcp*{\normalfont\textit{case $\frac{1}{x}\!<\!\frac{1}{x^*}$}}
            }
        }
        Find the unique fractional number $\frac{p}{q}\in[l,r]$ such that denominator $q\leq\min_{v\in V_c}B^-(v)$.\;
        \Return{\normalfont $\frac{p}{q}$ as $\frac{1}{x^*}$}
    }
\end{algorithm}
}

\textbf{Binary Search:} Algorithm~\ref{algo:binarysearch} shows ForestColl's search for $\frac{1}{x^*}$.
We iteratively narrow $l$ and $r$ by adjusting the edge capacities from $s$ and recomputing the maxflows to determine if the midpoint $(l+r)/2$ is $\geq$ or $<$ than $\frac{1}{x^*}$.
Thus, we can shrink the range $[l,r]$ small enough for us to determine $\frac{1}{x^*}$ exactly by finding the unique fractional number $\frac{p}{q}$ within $[l,r]$ with denominator $q\!\leq\!\min_{v\in V_c}B^-(v)$ (details in Appendix~\ref{app:sec:binarysearch}).

\textbf{Determine $k$:} As previously mentioned, knowing the optimality $x^*$ helps us decide the number of trees rooted at each compute node (i.e., $k$) and the bandwidth allocated per tree. In the spanning tree packing and edge splitting algorithms that ForestColl will apply later, each unit of edge capacity is interpreted as the allocation of one tree instead of one unit of bandwidth. Suppose $y$ is the bandwidth of each tree. Then, we need to adjust the edge capacities by dividing the bandwidth of each edge $b_e$ by $y$, so that the new capacity $b_e/y$ is the number of trees edge $e$ can sustain. This leads to two requirements for $y$: (i) $k\!=\!x^*/y$ must be an integer, and (ii) $b_e/y$ must be an integer for all edge bandwidth $b_e$. In Algorithm~\ref{algo:binarysearch}, we have computed $\frac{1}{x^*}\!=\!\frac{p}{q}$. Thus, by setting $y\!=\!\gcd(q,\{b_e\}_{e\in E})/p$, we ensure that both requirements are satisfied, and $k$, the number of trees rooted at each compute node, is simply $x^*/y$. For example, the optimality of Figure~\ref{fig:tree-example}(a) is $\frac{1}{x^*}\!=\!\frac{4}{4b}\!=\!\frac{1}{b}$ bottlenecked by $S^*$. We have $y\!=\!\gcd\{b,b,10b\}\!=\!b$, so the bandwidths of edges are scaled from $\{b,10b\}$ to $\{1,10\}$, and $k\!=\!1$. Figure~\ref{fig:splitting}(a) shows the resulting topology. A detailed mathematical analysis of the binary search and the derivation of $k$ is included in Appendix~\ref{app:sec:binarysearch}.

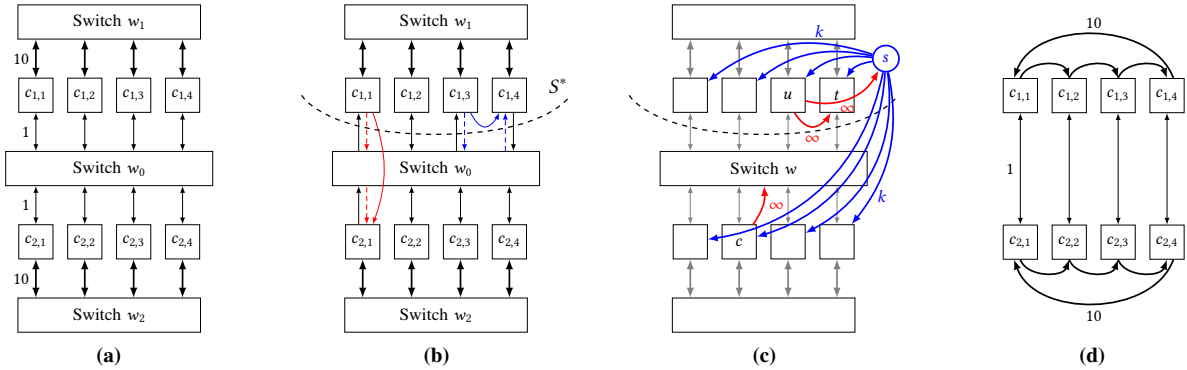
\begin{figure*}[t]
    \centering
    \begin{subfigure}{0.24\textwidth}
        \centering
        \scalebox{0.65}{
\begin{tikzpicture}[node/.style={rectangle,draw=black,minimum size=7mm,align=center}]
    \node[node]	(10)	at (-1.5,2.5-1) {$c_{1,1}$};
    \node[node]	(11)	at (-0.5,2.5-1) {$c_{1,2}$};
    \node[node]	(12)	at (0.5,2.5-1) {$c_{1,3}$};
    \node[node]	(13)	at (1.5,2.5-1) {$c_{1,4}$};
    \node[node,text width=3.5cm]	(14)	at (0,0.5+2.5) {Switch $w_1$};
    
    \path[latex-latex,anchor=east,line width=1pt] (10.north) edge node {$10$} (10.north|-14.south);
    \path[latex-latex,line width=1pt] (11.north) edge (11.north|-14.south);
    \path[latex-latex,line width=1pt] (12.north) edge (12.north|-14.south);
    \path[latex-latex,line width=1pt] (13.north) edge (13.north|-14.south);
    
    \node[node]	(0)	at (-1.5,1-2.5) {$c_{2,1}$};
    \node[node]	(1)	at (-0.5,1-2.5) {$c_{2,2}$};
    \node[node]	(2)	at (0.5,1-2.5) {$c_{2,3}$};
    \node[node]	(3)	at (1.5,1-2.5) {$c_{2,4}$};
    \node[node,text width=3.5cm]	(4)	at (0,-0.5-2.5) {Switch $w_2$};
    
    \path[latex-latex,anchor=east,line width=1pt] (0.south) edge node {$10$} (0.south|-4.north);
    \path[latex-latex,line width=1pt] (1.south) edge (1.south|-4.north);
    \path[latex-latex,line width=1pt] (2.south) edge (2.south|-4.north);
    \path[latex-latex,line width=1pt] (3.south) edge (3.south|-4.north);
            
    \node[node,text width=4cm]	(20)	at (0,0) {Switch $w_0$};
    
    \path[latex-latex,anchor=east] (0.north) edge node {$1$} (0.north|-20.south);
    \path[latex-latex] (1.north) edge (1.north|-20.south);
    \path[latex-latex] (2.north) edge (2.north|-20.south);
    \path[latex-latex] (3.north) edge (3.north|-20.south);
    
    \path[latex-latex,anchor=east] (10.south) edge node {$1$} (10.south|-20.north);
    \path[latex-latex] (11.south) edge (11.south|-20.north);
    \path[latex-latex] (12.south) edge (12.south|-20.north);
    \path[latex-latex] (13.south) edge (13.south|-20.north);
\end{tikzpicture}
}
        \caption{}
    \end{subfigure}
    \begin{subfigure}{0.24\textwidth}
        \centering
        \scalebox{0.65}{
\begin{tikzpicture}[node/.style={rectangle,draw=black,minimum size=7mm,align=center}]
    \node[node]	(10)	at (-1.5,2.5-1) {$c_{1,1}$};
    \node[node]	(11)	at (-0.5,2.5-1) {$c_{1,2}$};
    \node[node]	(12)	at (0.5,2.5-1) {$c_{1,3}$};
    \node[node]	(13)	at (1.5,2.5-1) {$c_{1,4}$};
    \node[node,text width=3.5cm]	(14)	at (0,0.5+2.5) {Switch $w_1$};
    
    \path[latex-latex,line width=1pt] (10.north) edge (10.north|-14.south);
    \path[latex-latex,line width=1pt] (11.north) edge (11.north|-14.south);
    \path[latex-latex,line width=1pt] (12.north) edge (12.north|-14.south);
    \path[latex-latex,line width=1pt] (13.north) edge (13.north|-14.south);
    
    \node[node]	(0)	at (-1.5,1-2.5) {$c_{2,1}$};
    \node[node]	(1)	at (-0.5,1-2.5) {$c_{2,2}$};
    \node[node]	(2)	at (0.5,1-2.5) {$c_{2,3}$};
    \node[node]	(3)	at (1.5,1-2.5) {$c_{2,4}$};
    \node[node,text width=3.5cm]	(4)	at (0,-0.5-2.5) {Switch $w_2$};
    
    \path[latex-latex,line width=1pt] (0.south) edge (0.south|-4.north);
    \path[latex-latex,line width=1pt] (1.south) edge (1.south|-4.north);
    \path[latex-latex,line width=1pt] (2.south) edge (2.south|-4.north);
    \path[latex-latex,line width=1pt] (3.south) edge (3.south|-4.north);
            
    \node[node,text width=4cm]	(20)	at (0,0) {Switch $w_0$};
    
    \path[latex-,densely dashed,red] (0.77) edge (0.77|-20.south);
    \path[latex-] (0.103|-20.south) edge (0.103);
    \path[latex-latex] (1.north) edge (1.north|-20.south);
    \path[latex-latex] (2.north) edge (2.north|-20.south);
    \path[latex-latex] (3.north) edge (3.north|-20.south);
    
    \path[-latex,densely dashed,red] (10.-77) edge (10.-77|-20.north);
    \path[-latex] (10.-103|-20.north) edge (10.-103);
    \path[latex-latex] (11.south) edge (11.south|-20.north);
    \path[-latex,densely dashed,blue] (12.-77) edge (12.-77|-20.north);
    \path[-latex] (12.-103|-20.north) edge (12.-103);
    \path[-latex] (13.-77) edge (13.-77|-20.north);
    \path[-latex,densely dashed,blue] (13.-103|-20.north) edge (13.-103);

    \path[latex-,red] (0.60) edge[bend right=20] (10.-60);
    \draw[-latex,color=blue] plot [smooth, tension=1] coordinates { (12.-60) (1,0.85) (13.-120) };

    \draw[line width=0.75pt, dashed] plot [smooth, tension=1.5] coordinates { (-2.75,1.5) (0,0.7) (2.75,1.5) };
    \node[font=\Large] at (2.5, 1.7) {$S^*$};
\end{tikzpicture}
}
        \caption{}
    \end{subfigure}
    \begin{subfigure}{0.24\textwidth}
        \centering
        \scalebox{0.65}{
\begin{tikzpicture}[node/.style={rectangle,draw=black,minimum size=7mm,align=center}]
    \node[node]	(10)	at (-1.5,2.5-1) {};
    \node[node]	(11)	at (-0.5,2.5-1) {};
    \node[node]	(12)	at (0.5,2.5-1) {$u$};
    \node[node]	(13)	at (1.5,2.5-1) {$t$};
    \node[node,text width=3.5cm]	(14)	at (0,0.5+2.5) {};
    
    \path[latex-latex,line width=1pt,gray] (10.north) edge (10.north|-14.south);
    \path[latex-latex,line width=1pt,gray] (11.north) edge (11.north|-14.south);
    \path[latex-latex,line width=1pt,gray] (12.north) edge (12.north|-14.south);
    \path[latex-latex,line width=1pt,gray] (13.north) edge (13.north|-14.south);
    
    \node[node]	(0)	at (-1.5,1-2.5) {};
    \node[node]	(1)	at (-0.5,1-2.5) {$c$};
    \node[node]	(2)	at (0.5,1-2.5) {};
    \node[node]	(3)	at (1.5,1-2.5) {};
    \node[node,text width=3.5cm]	(4)	at (0,-0.5-2.5) {};
    
    \path[latex-latex,line width=1pt,gray] (0.south) edge (0.south|-4.north);
    \path[latex-latex,line width=1pt,gray] (1.south) edge (1.south|-4.north);
    \path[latex-latex,line width=1pt,gray] (2.south) edge (2.south|-4.north);
    \path[latex-latex,line width=1pt,gray] (3.south) edge (3.south|-4.north);
            
    \node[node,text width=4cm]	(20)	at (0,0) {Switch $w$};
    
    \path[latex-latex,gray] (0.north) edge (0.north|-20.south);
    \path[latex-latex,gray] (1.north) edge (1.north|-20.south);
    \path[latex-latex,gray] (2.north) edge (2.north|-20.south);
    \path[latex-latex,gray] (3.north) edge (3.north|-20.south);
    
    \path[latex-latex,gray] (10.south) edge (10.south|-20.north);
    \path[latex-latex,gray] (11.south) edge (11.south|-20.north);
    \path[latex-latex,gray] (12.south) edge (12.south|-20.north);
    \path[latex-latex,gray] (13.south) edge (13.south|-20.north);

    \draw[line width=0.75pt, dashed] plot [smooth, tension=1.5] coordinates { (-2.75,1.5) (0,0.7) (2.75,1.5) };

    \node[circle,line width=1pt,draw=blue] (00)    at (2.5, 2.25) {\textcolor{blue}{$s$}};
    \draw[-latex,line width=1pt,blue] (00) to[bend right=28] node[above,pos=0.3]{$k$} (10);
    \draw[-latex,line width=1pt,blue] (00) to[bend right=25] (11);
    \draw[-latex,line width=1pt,blue] (00) to[bend right=25] (12);
    \draw[-latex,line width=1pt,blue] (00) to[bend right=25] (13);

    \draw[-latex,line width=1pt,blue] (00) to[bend left=35] (0);
    \draw[-latex,line width=1pt,blue] (00) to[bend left=35] (1);
    \draw[-latex,line width=1pt,blue] (00) to[bend left=35] (2);
    \draw[-latex,line width=1pt,blue] (00) to[bend left=25] node[right,pos=0.8]{$k$} (3);

    \draw[-latex,line width=1pt,color=red,anchor=north] plot [smooth, tension=1] coordinates { (12.-70) (1,0.85) (13.-110) };
    \node at (1,0.6) {\textcolor{red}{$\infty$}};
    \draw[-latex,line width=1pt,red,anchor=north] (12) to[bend right=38] node {$\infty$} (00);
    \draw[-latex,line width=1pt,red,anchor=west] (1) to[bend right=20] node {$\infty$} (20);
\end{tikzpicture}
}
        \caption{}
    \end{subfigure}
    \begin{subfigure}{0.24\textwidth}
        \centering
        \scalebox{0.65}{
\begin{tikzpicture}[node/.style={rectangle,draw=black,minimum size=7mm,align=center}]
    \node[node]	(10)	at (-1.5,2.5-1) {$c_{1,1}$};
    \node[node]	(11)	at (-0.5,2.5-1) {$c_{1,2}$};
    \node[node]	(12)	at (0.5,2.5-1) {$c_{1,3}$};
    \node[node]	(13)	at (1.5,2.5-1) {$c_{1,4}$};

    \node[node,text width=3.5cm,draw=none]	(14)	at (0,0.5+2.5) {};

    \draw[-latex,line width=1pt] (10.north) to[out=70, in=110] (11.north);
    \draw[-latex,line width=1pt] (11.north) to[out=70, in=110] (12.north);
    \draw[-latex,line width=1pt] (12.north) to[out=70, in=110] (13.north);
    \draw[-latex, anchor=south,line width=1pt] (13.70) to[out=110, in=70] node {$10$} (10.110);
    
    \node[node]	(0)	at (-1.5,1-2.5) {$c_{2,1}$};
    \node[node]	(1)	at (-0.5,1-2.5) {$c_{2,2}$};
    \node[node]	(2)	at (0.5,1-2.5) {$c_{2,3}$};
    \node[node]	(3)	at (1.5,1-2.5) {$c_{2,4}$};
    
    \node[node,text width=3.5cm,draw=none]	(4)	at (0,-0.5-2.5) {};
    
    \draw[-latex,line width=1pt] (0.south) to[out=-70, in=-110] (1.south);
    \draw[-latex,line width=1pt] (1.south) to[out=-70, in=-110] (2.south);
    \draw[-latex,line width=1pt] (2.south) to[out=-70, in=-110] (3.south);
    \draw[-latex, anchor=north,line width=1pt] (3.-70) to[out=-110, in=-70] node {$10$} (0.-110);
    
    \path[latex-latex, anchor=east] (0.north) edge node {$1$} (0.north|-10.south);
    \path[latex-latex] (1.north) edge (1.north|-11.south);
    \path[latex-latex] (2.north) edge (2.north|-12.south);
    \path[latex-latex] (3.north) edge (3.north|-13.south);
\end{tikzpicture}
}
        \caption{}
    \end{subfigure}
    \caption{Figures explaining the switch node removal process. \textnormal{(a) is the starting topology after optimality binary search scales edge capacities from $\{b,10b\}$ to $\{1,10\}$. (b) contains examples of replacing the ingress and egress capacities of a switch node with a direct capacity bypassing the switch. (c)~shows an example of the auxiliary network ForestColl uses to compute the $\gamma$ in Algorithm~\ref{algo:switchremove}. The $\infty$ edges are a maxflow trick to ensure $\{u,t,s\}$ and $\{w,c\}$ are on opposite sides of the min cut, as we only want to consider cuts that cut through both $(u,w),(w,t)$. (d) is the final resulting switch-free logical topology.}}
    \label{fig:splitting}
\end{figure*}

\subsection{Switch Node Removal}\label{sec:switchremove}

We introduce ForestColl's process to iteratively replace all switch nodes with direct logical links between their neighboring nodes. This allows us to subsequently apply the spanning tree packing algorithm on the resulting switch-free logical topology. The process ensures two key outcomes: (i) \textit{Equivalence:} The spanning trees generated in the logical topology can be mapped back to the original without violating capacity constraints; (ii) \textit{Optimality:} The logical topology retains the same optimal throughput~(\ref{eq:optimality}). Thus, mapping the optimal trees generated on the logical topology back to the original yields the optimal trees for the switch topology. Although TACCL~\cite{taccl} and TACOS~\cite{tacos} also proposed transforming a switch topology into a switch-free logical one, they replace each switch with fixed, preset connection patterns that guarantee only (i) but not (ii), leading to performance loss.

\textbf{Edge Splitting:} Originally a graph theory technique for proving connectivity properties~\cite{bang-jensen,frank,jackson}, we adapt edge splitting to handle switch topologies in the context of collective communications. Starting with the scaled topology as in Figure~\ref{fig:splitting}(a), for each switch node $w$, we pair one capacity of an egress link $(w,t)$ with one capacity of an ingress link $(u,w)$, and replace them with one capacity of a direct link $(u,t)$ that bypasses the switch node $w$. Figure~\ref{fig:splitting}(b) shows two such examples. In both the red and blue examples, we replace the dashed ingress and egress capacities of switch node $w_0$ with a direct unit of capacity bypassing $w_0$. By continuously doing so, we can eliminate all capacities to/from the switch node $w_0$, which is guaranteed by the assumption of equal ingress and egress bandwidth. Once isolated, $w_0$ can be safely removed from the topology. Note that $u,t$ do not have to be compute nodes; they can also be other switch nodes that have not yet been removed. By applying this process to each switch node, we derive a switch-free topology, as shown in Figure~\ref{fig:splitting}(d). The resulting logical topology guarantees \emph{equivalence} to the original, since \textit{we only allocate the capacities of switches to logical connections between compute nodes.}

\textbf{Choose Ingress Link:} Given an egress capacity, there are often multiple ingress links that we can pair and replace. However, arbitrarily choosing an ingress link may lead to performance sacrifice. In the two examples of Figure~\ref{fig:splitting}(b), the exiting capacity of $S^*$ remains unchanged in the red example but decreases from $4$ to $3$ in the blue example. This corresponds to decreasing the exiting bandwidth $B^+(S^*)$ from $4b$ to $3b$ in the original topology. Since $S^*$ is a throughput bottleneck cut, any decrease in its bandwidth $B^+(S^*)$ further bottlenecks the overall performance. \textit{Therefore, when replacing capacities, we must ensure that we do not create a bottleneck cut worse than the existing ones.} For an already bottlenecked cut, any decrease in exiting bandwidth is unacceptable. For a non-bottleneck cut, we can only reduce its exiting bandwidth to the point where its ratio of compute nodes to exiting bandwidth just becomes a bottleneck.

From the two examples in Figure~\ref{fig:splitting}(b), we observe that replacing capacities decreases the exiting capacities of cuts that cut through both the ingress and egress links. Consider replacing a certain capacity of $(u,w),(w,t)$ with $(u,t)$. If we compute among all cuts that cut through both $(u,w),(w,t)$, the minimum decrease $\gamma$ in exiting capacity that would turn any cut into a bottleneck, then by replacing at most this amount of capacity, we are safe from creating a worse bottleneck cut. Figure~\ref{fig:splitting}(c) shows the auxiliary network we use to compute $\gamma$. Similar to the optimality binary search, we compute maxflow with respect to each compute node and take the minimum. We leave the details to compute $\gamma$ in Theorem~\ref{app:thm:multicapasplit} in Appendix~\ref{app:sec:edgesplit}.

{\small
\begin{algorithm}[tb]
    \small
    \SetInd{0.4em}{0.5em}
    \caption{Switch Node Removal}
    \label{algo:switchremove}
    \SetAlgoLined
    \linespread{0.9}\selectfont
    \DontPrintSemicolon
    \KwIn{A directed graph $G=(V_s\cup V_c,E)$ and $k$.}
    \KwOut{A directed graph $H=(V_c,E')$.}
    \Begin{
        \ForEach{\normalfont switch node $w\in V_s$}{
            \ForEach{\normalfont egress edge $f=(w,t)\in E$}{
                \ForEach{\normalfont ingress edge $e=(u,w)\in E$}{
                    Compute $\gamma$, the maximum capacity we can safely replace $f,e$ by $(u,t)$, as in Theorem~\ref{app:thm:multicapasplit}.\;
                    \lIf{$\gamma=0$}{\textbf{continue}}
                    Decrease $f$'s and $e$'s capacity by $\gamma$. Remove $e$ if its capacity reaches 0.\;
                    Increase the capacity of $(u,t)$ by $\gamma$. Add the edge if $(u,t)\!\notin\! E$.\;
                    \lIf{\normalfont $f$'s capacity reaches 0}{\textbf{break}}
                }
                \tcp{\normalfont\textit{Edge $f$ should have 0 capacity at this point.}}
                Remove edge $f$ from $G$.\;
            }
            \tcp{\normalfont\textit{Node $w$ should be isolated at this point.}}
            Remove node $w$ from $G$.\;
        }
        \Return{\normalfont the resulting $G$ as $H$}
    }
\end{algorithm}
}

Algorithm~\ref{algo:switchremove} shows the pseudocode of the switch node removal process. For each switch node $w$ and each egress edge $f\!=\!(w,t)$, we pair it with each ingress edge $e\!=\!(u,w)$ and calculate $\gamma$, the maximum capacity we can safely replace $e,f$ by $(u,t)$. The process iteratively removes switch edges and nodes.
Once all switch nodes are removed, we obtain a switch-free logical topology $H$ that is equivalent to the original $G$ and has the same optimal throughput~(\ref{eq:optimality}).
Appendix~\ref{app:sec:edgesplit} provides more details of the algorithm.
\subsection{Spanning Tree Construction}\label{sec:treeconstruct}

Given the switch-free logical topology like Figure~\ref{fig:splitting}(d), ForestColl applies the spanning out-tree packing algorithm~\cite{berczi2010packing,schrijver}. Our earlier efforts have ensured that in the logical topology, there exist $k$ spanning out-trees rooted at each node, with respect to the scaled link capacities. With all switch nodes removed, every node is now a compute node, and the out-trees simply span all nodes in the topology. Because $k$ can potentially be large, constructing spanning trees one by one may be intractable and not within polynomial time. It turns out that these $k$ out-trees are often not distinct. For example, we may have a batch of $\frac{k}{2}\!-\!1$ identical out-trees and another batch of $\frac{k}{2}\!+\!1$ identical out-trees rooted at the same node. In the algorithm, we construct the out-trees in batches, or rather, trees with capacities. For each node $v$, the algorithm starts by initializing a $k$-capacity out-tree containing only a root node $\{v\}$. Then, it iteratively adds edges to each tree, expanding the tree until it spans all nodes in the graph. When adding an edge to an out-tree, the algorithm calculates the maximum capacity $\mu$ of the edge that can be added to the out-tree while maintaining the feasibility of constructing the remaining trees. If $\mu$ is less than the tree's capacity $m$, the algorithm splits the tree into two: one with capacity $m\!-\!\mu$ and another with capacity $\mu$, adding the edge to the latter. Appendix~\ref{app:sec:constructtree} describes the complete details of the algorithm.

Figure~\ref{fig:treeconstruct}(a) shows one example of the spanning out-trees constructed by applying the algorithm to Figure~\ref{fig:splitting}(d). Thanks to the equivalence guarantee of the logical topology, we can map the out-tree back to the original topology, resulting in the tree shown in Figure~\ref{fig:treeconstruct}(b).

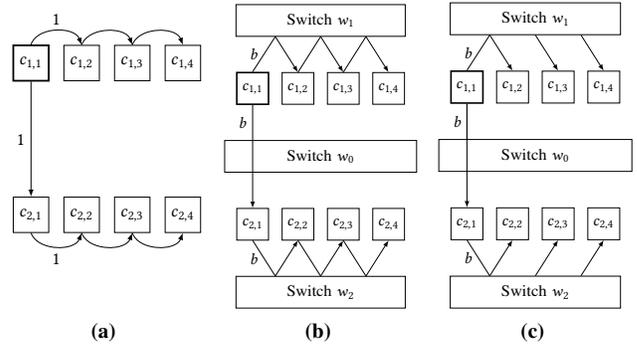
\begin{figure}[tb]
    \vspace{-1em}
    \centering
    \begin{subfigure}{0.327\columnwidth}
        \centering
        \resizebox{1.05\textwidth}{!}{
        \scalebox{0.7}{
\begin{tikzpicture}[node/.style={rectangle,draw=black,minimum size=7mm,align=center}]
    \node[node, line width=1pt]	(10)	at (-1.5,2.5-1) {$c_{1,1}$};
    \node[node]	(11)	at (-0.5,2.5-1) {$c_{1,2}$};
    \node[node]	(12)	at (0.5,2.5-1) {$c_{1,3}$};
    \node[node]	(13)	at (1.5,2.5-1) {$c_{1,4}$};

    \node[node,text width=3.5cm,draw=none]	(14)	at (0,0.5+2.5) {};

    \draw[-latex,anchor=south] (10.north) to[out=70, in=110] node {$1$} (11.north);
    \draw[-latex] (11.north) to[out=70, in=110] (12.north);
    \draw[-latex] (12.north) to[out=70, in=110] (13.north);
    \draw[-latex, draw=none] (13.70) to[out=110, in=70] (10.110);
    
    \node[node]	(0)	at (-1.5,1-2.5) {$c_{2,1}$};
    \node[node]	(1)	at (-0.5,1-2.5) {$c_{2,2}$};
    \node[node]	(2)	at (0.5,1-2.5) {$c_{2,3}$};
    \node[node]	(3)	at (1.5,1-2.5) {$c_{2,4}$};

    \node[node,text width=3.5cm,draw=none]	(4)	at (0,-0.5-2.5) {};
    
    \draw[-latex,anchor=north] (0.south) to[out=-70, in=-110] node {$1$} (1.south);
    \draw[-latex] (1.south) to[out=-70, in=-110] (2.south);
    \draw[-latex] (2.south) to[out=-70, in=-110] (3.south);
    \draw[-latex, draw=none] (3.-70) to[out=-110, in=-70] (0.-110);
    
    \path[latex-,anchor=east] (0.north) edge node {$1$} (0.north|-10.south);
\end{tikzpicture}
}
        }
        \vspace{1.2em}
        \caption{}
    \end{subfigure}
    \hfill
    \begin{subfigure}{0.327\columnwidth}
        \centering
        \resizebox{1.05\textwidth}{!}{
        \scalebox{0.65}{
\begin{tikzpicture}[node/.style={rectangle,draw=black,minimum size=7mm,align=center}]
    \node[node, line width=1pt]	(10)	at (-1.5,2.5-1) {$c_{1,1}$};
    \node[node]	(11)	at (-0.5,2.5-1) {$c_{1,2}$};
    \node[node]	(12)	at (0.5,2.5-1) {$c_{1,3}$};
    \node[node]	(13)	at (1.5,2.5-1) {$c_{1,4}$};
    \node[node,text width=3.5cm]	(14)	at (0,0.5+2.5) {Switch $w_1$};
    
    \draw (10.north) -- node [anchor=east] {$b$} ($(14.south -| -1, 0)$);
    \draw[-latex] ($(14.south -| -1, 0)$) -- (11.north);
    
    \draw (11.north) -- ($(14.south -| 0, 0)$);
    \draw[-latex] ($(14.south -| 0, 0)$) -- (12.north);
    
    \draw (12.north) -- ($(14.south -| 1, 0)$);
    \draw[-latex] ($(14.south -| 1, 0)$) -- (13.north);

    \node[node]	(0)	at (-1.5,1-2.5) {$c_{2,1}$};
    \node[node]	(1)	at (-0.5,1-2.5) {$c_{2,2}$};
    \node[node]	(2)	at (0.5,1-2.5) {$c_{2,3}$};
    \node[node]	(3)	at (1.5,1-2.5) {$c_{2,4}$};
    \node[node,text width=3.5cm]	(4)	at (0,-0.5-2.5) {Switch $w_2$};
            
    \node[node,text width=4cm]	(20)	at (0,0) {Switch $w_0$};
    
    \path[-latex] (10.south) edge (0.north);
    
    \draw (0.south) -- node [anchor=east] {$b$} ($(4.north -| -1, 0)$);
    \draw[-latex] ($(4.north -| -1, 0)$) -- (1.south);
    
    \draw (1.south) -- ($(4.north -| 0, 0)$);
    \draw[-latex] ($(4.north -| 0, 0)$) -- (2.south);
    
    \draw (2.south) -- ($(4.north -| 1, 0)$);
    \draw[-latex] ($(4.north -| 1, 0)$) -- (3.south);

    \path[anchor=east] (10.south) edge [draw=none] node {$b$} (10.south|-20.north);
\end{tikzpicture}
}
        }
        \vspace{1.2em}
        \caption{}
    \end{subfigure}
    \hfill
    \begin{subfigure}{0.327\columnwidth}
        \centering
        \resizebox{1.05\textwidth}{!}{
        \scalebox{0.7}{
\begin{tikzpicture}[node/.style={rectangle,draw=black,minimum size=7mm,align=center}]
    \node[node, line width=1pt]	(10)	at (-1.5,2.5-1) {$c_{1,1}$};
    \node[node]	(11)	at (-0.5,2.5-1) {$c_{1,2}$};
    \node[node]	(12)	at (0.5,2.5-1) {$c_{1,3}$};
    \node[node]	(13)	at (1.5,2.5-1) {$c_{1,4}$};
    \node[node,text width=3.5cm]	(14)	at (0,0.5+2.5) {Switch $w_1$};
    
    \draw (10.north) -- node [anchor=east] {$b$} ($(14.south -| -1, 0)$);
    \draw[-latex] ($(14.south -| -1, 0)$) -- (11.north);
    
    % \draw (11.north) -- ($(14.south -| 0, 0)$);
    \draw[-latex] ($(14.south -| 0, 0)$) -- (12.north);
    
    % \draw (12.north) -- ($(14.south -| 1, 0)$);
    \draw[-latex] ($(14.south -| 1, 0)$) -- (13.north);

    \node[node]	(0)	at (-1.5,1-2.5) {$c_{2,1}$};
    \node[node]	(1)	at (-0.5,1-2.5) {$c_{2,2}$};
    \node[node]	(2)	at (0.5,1-2.5) {$c_{2,3}$};
    \node[node]	(3)	at (1.5,1-2.5) {$c_{2,4}$};
    \node[node,text width=3.5cm]	(4)	at (0,-0.5-2.5) {Switch $w_2$};
            
    \node[node,text width=4cm]	(20)	at (0,0) {Switch $w_0$};
    
    \path[-latex] (10.south) edge (0.north);
    
    \draw (0.south) -- node [anchor=east] {$b$} ($(4.north -| -1, 0)$);
    \draw[-latex] ($(4.north -| -1, 0)$) -- (1.south);
    
    % \draw (1.south) -- ($(4.north -| 0, 0)$);
    \draw[-latex] ($(4.north -| 0, 0)$) -- (2.south);
    
    % \draw (2.south) -- ($(4.north -| 1, 0)$);
    \draw[-latex] ($(4.north -| 1, 0)$) -- (3.south);

    \path[anchor=east,line width=0pt] (10.south) edge node {$b$} (10.south|-20.north);
\end{tikzpicture}
}
        }
        \vspace{1.2em}
        \caption{}
    \end{subfigure}
    \caption{The constructed spanning out-tree. \textnormal{(a) shows one example of the spanning out-trees generated by applying the spanning tree packing algorithm to Figure~\ref{fig:splitting}(d). (b) shows the corresponding tree after mapping (a) back to the original topology. (c) shows the post-processed tree utilizing the in-network multicast/aggregation capabilities of switches $w_1,w_2$.}}
    \label{fig:treeconstruct}
\end{figure}
\subsection[Fixed-k Schedule Generation]{Fixed-$k$ Schedule Generation}

In \S\ref{sec:binarysearch}, the optimality binary search automatically determines $k$ (the number of trees rooted at each compute node) and $y$ (the bandwidth utilized by each tree) to achieve theoretically throughput-optimal allgather. However, the $k$ required by optimality can sometimes be a large number. Although the time complexity of ForestColl does not depend on $k$, a large $k$ may complicate the implementation of the schedule. To address this, ForestColl provides an option to generate the highest-throughput schedule given any fixed $k$. The method uses a binary search, similar to \S\ref{sec:binarysearch}, to determine the optimality for the fixed $k$, followed by the usual switch node removal and spanning tree construction to create the out-trees. Appendix~\ref{app:sec:fixedk} provides further details on the algorithm.

Fixed-$k$ schedule generation can significantly simplify the schedule when the optimal $k$ is large. A small $k$---much smaller than what is required for exact optimality---can still achieve performance very close to the optimal. For example, in the 2-box AMD MI250 topology, the theoretically optimal algorithmic bandwidth (algbw) of 354GB/s is attained with $k\!=\!83$ trees rooted at each compute node. Yet, with just $k\!=\!2$, we can already achieve 341GB/s theoretical algbw. In practice, if the optimality binary search gives a too large $k$, we opt to scan $k$ values within a much smaller range ($<\!10$) and pick the best $k$ for schedule construction.
\subsection{In-Network Multicast \& Aggregation}\label{sec:innetwork}

On the constructed trees, ForestColl applies a post-processing step to utilize the in-network multicast/aggregation of some switches. Counterintuitively, \textit{in-network multicast/aggregation does not affect allgather/reduce-scatter optimality}, as their optimality is determined by the throughput bottleneck cut in \S\ref{sec:throughputoptimality}, which is unaffected by the switches' capabilities. The intuition is that, in allgather, while in-network multicast can save GPUs from repeatedly sending the same data, each GPU still must receive $N\!-\!1$ distinct data shards, making ingress bandwidth the true bottleneck. Nonetheless, in-network multicast/aggregation is effective for offloading work from GPUs to switches and for reducing overall network traffic.

For each constructed tree, we traverse it from the root, removing traffic that becomes redundant due to in-network multicast of switches. Figure~\ref{fig:treeconstruct} gives an example. In Figure~\ref{fig:treeconstruct}(b), starting from the root, when we reach a node ($c_{2,1}$) sending data to a switch ($w_2$) capable of in-network multicast, we check if other nodes in the tree also send data to the same switch ($c_{2,2},c_{2,3}\!\to\! w_2$). As the data being sent is the same throughout the tree, such traffic can be deleted, resulting in the tree in Figure~\ref{fig:treeconstruct}(c). The same approach also applies to reduce-scatter using in-network aggregation, with everything in the reversed direction. ForestColl is, thus, fully compatible with switches w/ or w/o in-network multicast/aggregation, maximizing performance in all cases.
\subsection{Other Collective Operations}\label{sec:othercollectives}

While introduced in the context of allgather, ForestColl can be easily adapted for other collectives. For reduce-scatter, we reverse the allgather out-trees to create in-trees for aggregation. For allreduce, the in-trees and out-trees can be combined to first aggregate to the roots and then broadcast. However, simply combining reduce-scatter and allgather trees does not guarantee allreduce optimality, since (i) allreduce allows each root to reduce/broadcast variable amount of data, and (ii) congestion between in-trees and out-trees can be further optimized. To compute allreduce optimality, a linear program is detailed in Appendix~\ref{app:sec:allreducelp}. Nevertheless, in practice, directly combining reduce-scatter and allgather trees has been sufficient to achieve optimality in all topologies we have evaluated, and we hypothesize that this holds for any topology with equal bandwidth per compute node. For non-uniform allgather/reduce-scatter, where compute nodes broadcast/reduce varying amounts of data, the link capacities from source node $s$ to compute nodes in the auxiliary networks can be adjusted to accommodate such variations.
\section{Evaluation}\label{sec:evaluation}

We present a comprehensive evaluation of ForestColl. \S\ref{sec:implementation} describes our implementations of ForestColl’s schedules. \S\ref{sec:schedule_expt} compares the performance of various schedules on both AMD and NVIDIA hardware. \S\ref{sec:nvidia_expt} evaluates ForestColl against NCCL~\cite{nccl} on a large-scale GPU cluster. \S\ref{sec:evalfsdp} presents results from LLM training with PyTorch FSDP. Finally, \S\ref{sec:geneval} compares different methods for large-scale schedule generation.

\subsection{Schedule Implementation}\label{sec:implementation}

Given the generated trees, we adopted two implementations: (i) expressing the schedules in XMLs to be executed by the MSCCL runtime, and (ii) using the MSCCL++ library to implement the trees in customized CUDA kernels. MSCCL~\cite{msccl} is built on top of NCCL, sharing the same communication primitives. It is widely used by schedule generation methods, integrates seamlessly with PyTorch, but suffers from scalability limitations. We use it to compare schedule performance, eliminating any differences due to schedule implementation. MSCCL++~\cite{mscclpp} is a CUDA library that provides send/recv channels over NVLink and IB networks, supporting zero-copy communication and NVLink SHARP. For large-scale experiments, we use it to build our own customized CUDA kernels that scale effectively and deliver the best performance with ForestColl’s schedules.
\subsection{Schedule Performance Comparison}\label{sec:schedule_expt}

\textbf{Setup:} We evaluated ForestColl’s schedules against vendor libraries and other schedule generation methods on 2-box AMD MI250 and 2-box NVIDIA DGX A100 systems. To eliminate performance differences due to implementation, we uniformly use MSCCL to execute schedules from both ForestColl and the baselines. Thus, any observed performance difference can be attributed solely to the quality of the schedules.

\textbf{Baselines:} We evaluated ForestColl's schedules against TACCL, Blink, and NCCL/RCCL. Due to a runtime error in TACCL's code, we were only able to generate and compare its allgather schedules. For Blink, which lacks publicly available code, we implemented an optimal single-root spanning tree packing based on its paper. Since Blink does not support switch topology, we applied Blink's tree packing to ForestColl's switch-free logical topology to create the ``Blink+Switch'' baseline. Furthermore, Blink's tree packing is limited to single-root reduce+broadcast for allreduce, and it suggests performing allgather as allreduce without reduction, so we only evaluated Blink's allreduce. Both TACCL and Blink use MSCCL in our experiments. While TACCL's code generates MSCCL schedule XMLs, we used ForestColl's compiler to generate Blink's XMLs, as both are tree-flow schedules. Finally, on AMD hardware, we compared against RCCL~\cite{rccl} (ROCm Collective Communication Library), AMD's library optimized for its GPUs, instead of NCCL. Because TE-CCL's code lacks executable schedules and SyCCL's was released only shortly before submission, we evaluate them on theoretical performance (\S\ref{sec:geneval}).

\subsubsection{AMD MI250 Experiments}\label{sec:mi250eval}

\begin{figure*}
    \centering
    \begin{subfigure}{0.35\textwidth}
        \centering
        \include{figures/mi250_topo}
        \caption{MI250 Topology}
    \end{subfigure}
    \begin{subfigure}{0.32\textwidth}
        \centering
        \scalebox{0.58}{
\begin{tikzpicture}[node/.style={rectangle,draw=black,minimum size=7mm,align=center}]
\draw (-4,-0.35) rectangle (4,0.35);
% \node at (0,0) {InfiniBand Switch};

\node[node,line width=1.5pt] (0) at (-1.4, 4.890000000000001) {\scalebox{0.8}[1]{GPU}};
\node[node] (1) at (-1.4, 3.63) {\scalebox{0.8}[1]{GPU}};
\node[node] (2) at (-1.4, 2.37) {\scalebox{0.8}[1]{GPU}};
\node[node] (3) at (-1.4, 1.1099999999999999) {\scalebox{0.8}[1]{GPU}};
\node[node] (4) at (-2.8, 4.890000000000001) {\scalebox{0.8}[1]{GPU}};
\node[node] (5) at (-2.8, 3.63) {\scalebox{0.8}[1]{GPU}};
\node[node] (6) at (-2.8, 2.37) {\scalebox{0.8}[1]{GPU}};
\node[node] (7) at (-2.8, 1.1099999999999999) {\scalebox{0.8}[1]{GPU}};
\node[node] (8) at (1.4, 4.890000000000001) {\scalebox{0.8}[1]{GPU}};
\node[node] (9) at (1.4, 3.63) {\scalebox{0.8}[1]{GPU}};
\node[node] (10) at (1.4, 2.37) {\scalebox{0.8}[1]{GPU}};
\node[node] (11) at (1.4, 1.1099999999999999) {\scalebox{0.8}[1]{GPU}};
\node[node] (12) at (2.8, 4.890000000000001) {\scalebox{0.8}[1]{GPU}};
\node[node] (13) at (2.8, 3.63) {\scalebox{0.8}[1]{GPU}};
\node[node] (14) at (2.8, 2.37) {\scalebox{0.8}[1]{GPU}};
\node[node] (15) at (2.8, 1.1099999999999999) {\scalebox{0.8}[1]{GPU}};

\path[-latex] (0) edge (1);
\path[-latex] (0) edge (4);
\path[-latex] (0.10) edge (8.170);
\path[-latex] (1) edge (5);
\path[-latex] (1) edge (9);
\path[-latex] (1) edge (10);
\draw[-latex] (4.west) -| (-3.5, 2.43) |- (6.west);
\path[-latex] (8) edge (12);
\draw[-latex] (5.west) -| (-3.6999999999999997, 2.43) |- (7.west);
\path[-latex] (9) edge (2);
\path[-latex] (10) edge (11);
\path[-latex] (10) edge (14);
\path[-latex] (12) edge (13);
\path[-latex] (2) edge (3);
\path[-latex] (14) edge (15);

\node[node] (16) at (-1.4, -1.1099999999999999) {\scalebox{0.8}[1]{GPU}};
\node[node] (17) at (-1.4, -2.37) {\scalebox{0.8}[1]{GPU}};
\node[node] (18) at (-1.4, -3.63) {\scalebox{0.8}[1]{GPU}};
\node[node] (19) at (-1.4, -4.890000000000001) {\scalebox{0.8}[1]{GPU}};
\node[node] (20) at (-2.8, -1.1099999999999999) {\scalebox{0.8}[1]{GPU}};
\node[node] (21) at (-2.8, -2.37) {\scalebox{0.8}[1]{GPU}};
\node[node] (22) at (-2.8, -3.63) {\scalebox{0.8}[1]{GPU}};
\node[node] (23) at (-2.8, -4.890000000000001) {\scalebox{0.8}[1]{GPU}};
\node[node] (24) at (1.4, -1.1099999999999999) {\scalebox{0.8}[1]{GPU}};
\node[node] (25) at (1.4, -2.37) {\scalebox{0.8}[1]{GPU}};
\node[node] (26) at (1.4, -3.63) {\scalebox{0.8}[1]{GPU}};
\node[node] (27) at (1.4, -4.890000000000001) {\scalebox{0.8}[1]{GPU}};
\node[node] (28) at (2.8, -1.1099999999999999) {\scalebox{0.8}[1]{GPU}};
\node[node] (29) at (2.8, -2.37) {\scalebox{0.8}[1]{GPU}};
\node[node] (30) at (2.8, -3.63) {\scalebox{0.8}[1]{GPU}};
\node[node] (31) at (2.8, -4.890000000000001) {\scalebox{0.8}[1]{GPU}};

\path[-latex] (16) edge (17);
\path[-latex] (16) edge (20);
\path[-latex] (16.350) edge (24.190);
\path[-latex] (17) edge (25);
\path[-latex] (17) edge (26);
\path[-latex] (20) edge (21);
\draw[-latex] (20.west) -| (-3.5, -2.43) |- (22.west);
\path[-latex] (24) edge (28);
\path[-latex] (25) edge (18);
\path[-latex] (26) edge (27);
\path[-latex] (26) edge (30);
\path[-latex] (22) edge (23);
\path[-latex] (28) edge (29);
\path[-latex] (18) edge (19);
\path[-latex] (30) edge (31);

\draw[-latex] (0.350) -| (-0.5, 0) |- (16.10);
\end{tikzpicture}
}
        \caption{ForestColl 16+16}
    \end{subfigure}
    \begin{subfigure}{0.32\textwidth}
        \centering
        \scalebox{0.58}{
\begin{tikzpicture}[node/.style={rectangle,draw=black,minimum size=7mm,align=center}]
\draw (-4,-0.35) rectangle (4,0.35);
% \node at (0,0) {InfiniBand Switch};

\node[node,line width=1.5pt] (0) at (-1.4, 4.890000000000001) {\scalebox{0.8}[1]{GPU}};
\node[node] (1) at (-1.4, 3.63) {\scalebox{0.8}[1]{GPU}};
\node[node] (2) at (-1.4, 2.37) {\scalebox{0.8}[1]{GPU}};
\node[node] (3) at (-1.4, 1.1099999999999999) {\scalebox{0.8}[1]{GPU}};
\node[node] (4) at (-2.8, 4.890000000000001) {\scalebox{0.8}[1]{GPU}};
\node[node] (5) at (-2.8, 3.63) {\scalebox{0.8}[1]{GPU}};
\node[node] (6) at (-2.8, 2.37) {\scalebox{0.8}[1]{GPU}};
\node[node] (7) at (-2.8, 1.1099999999999999) {\scalebox{0.8}[1]{GPU}};
\node[node,gray] (8) at (1.4, 4.890000000000001) {\scalebox{0.8}[1]{GPU}};
\node[node,gray] (9) at (1.4, 3.63) {\scalebox{0.8}[1]{GPU}};
\node[node,gray] (10) at (1.4, 2.37) {\scalebox{0.8}[1]{GPU}};
\node[node,gray] (11) at (1.4, 1.1099999999999999) {\scalebox{0.8}[1]{GPU}};
\node[node,gray] (12) at (2.8, 4.890000000000001) {\scalebox{0.8}[1]{GPU}};
\node[node,gray] (13) at (2.8, 3.63) {\scalebox{0.8}[1]{GPU}};
\node[node,gray] (14) at (2.8, 2.37) {\scalebox{0.8}[1]{GPU}};
\node[node,gray] (15) at (2.8, 1.1099999999999999) {\scalebox{0.8}[1]{GPU}};

\path[-latex] (0) edge (1);
\path[-latex] (0) edge (4);
\path[-latex] (1) edge (5);
\draw[-latex] (4.west) -| (-3.5, 2.43) |- (6.west);
\draw[-latex] (5.west) -| (-3.6999999999999997, 2.43) |- (7.west);
\path[-latex] (6) edge (2);
\path[-latex] (2) edge (3);

\node[node] (16) at (-1.4, -1.1099999999999999) {\scalebox{0.8}[1]{GPU}};
\node[node] (17) at (-1.4, -2.37) {\scalebox{0.8}[1]{GPU}};
\node[node] (18) at (-1.4, -3.63) {\scalebox{0.8}[1]{GPU}};
\node[node] (19) at (-1.4, -4.890000000000001) {\scalebox{0.8}[1]{GPU}};
\node[node] (20) at (-2.8, -1.1099999999999999) {\scalebox{0.8}[1]{GPU}};
\node[node] (21) at (-2.8, -2.37) {\scalebox{0.8}[1]{GPU}};
\node[node] (22) at (-2.8, -3.63) {\scalebox{0.8}[1]{GPU}};
\node[node] (23) at (-2.8, -4.890000000000001) {\scalebox{0.8}[1]{GPU}};
\node[node,gray] (24) at (1.4, -1.1099999999999999) {\scalebox{0.8}[1]{GPU}};
\node[node,gray] (25) at (1.4, -2.37) {\scalebox{0.8}[1]{GPU}};
\node[node,gray] (26) at (1.4, -3.63) {\scalebox{0.8}[1]{GPU}};
\node[node,gray] (27) at (1.4, -4.890000000000001) {\scalebox{0.8}[1]{GPU}};
\node[node,gray] (28) at (2.8, -1.1099999999999999) {\scalebox{0.8}[1]{GPU}};
\node[node,gray] (29) at (2.8, -2.37) {\scalebox{0.8}[1]{GPU}};
\node[node,gray] (30) at (2.8, -3.63) {\scalebox{0.8}[1]{GPU}};
\node[node,gray] (31) at (2.8, -4.890000000000001) {\scalebox{0.8}[1]{GPU}};

\path[-latex] (16) edge (17);
\path[-latex] (16) edge (20);
\path[-latex] (17) edge (21);
\draw[-latex] (20.west) -| (-3.5, -2.43) |- (22.west);
\draw[-latex] (21.west) -| (-3.6999999999999997, -2.43) |- (23.west);
\path[-latex] (22) edge (18);
\path[-latex] (18) edge (19);

\draw[-latex] (0.east) -| (-0.5, 0) |- (16.east);
\end{tikzpicture}
}
        \caption{ForestColl 8+8}
    \end{subfigure}
    \caption{2-box AMD MI250 topology and examples of ForestColl's spanning out-trees in 16+16 and 8+8 settings. \textnormal{PCIe switches and IB NICs are omitted for simplicity. (b) and (c) showcase ForestColl's trees rooted at the bold GPU. The complete schedules have at least one tree rooted at each GPU.}}
    \label{fig:mi250topo}
\end{figure*}


\begin{figure}[tb]
    \centering
    \includegraphics[width=0.98\columnwidth]{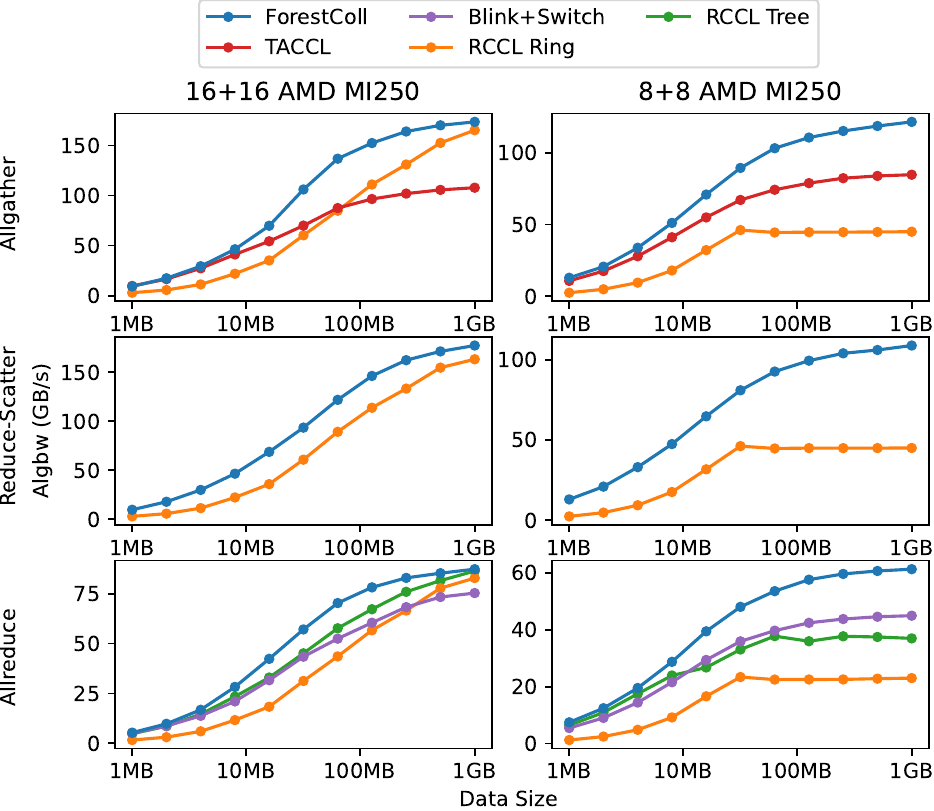}
    \caption{Comparing collective communication performance of TACCL, Blink+Switch, RCCL, and ForestColl in 16+16 and 8+8 settings on 2-box AMD MI250. \textnormal{The columns and rows correspond to different settings and collectives, respectively. ``Blink+Switch'' represents Blink augmented with our switch removal technique, enabling it to support switches.}}
    \label{fig:mi250perf}
\end{figure}

\textbf{Testbed Setup:} The AMD MI250’s complex topology presents significant challenges for schedule generation. Figure~\ref{fig:mi250topo}(a) shows the 2-box topology, characterized by a hybrid of direct intra-box connections and an inter-box switch network.
Each box contains 16 GPUs, each directly connected to three or four other GPUs through $7\times$ AMD Infinity Fabric links at 50GB/s per link. Each box also has $8\times$ IB NICs, offering 256GB/s total inter-box bandwidth and connected to GPUs via PCIe switches. Note that although ForestColl models PCIe switches and IB NICs as switch nodes in schedule generation, for simplicity of illustration, we omit these components and assume each GPU has 16GB/s bandwidth to the IB switch in Figure~\ref{fig:mi250topo}(a).

\textbf{Experiment Setup:} We tested allgather, reduce-scatter, and allreduce performance of ForestColl and the baselines in two settings: one involving all 32 GPUs (16+16) and another with 8 GPUs per box (8+8). In the 8+8 setting, we only enable GPUs $0\!\sim\! 7$ in each box, which corresponds to the left half of Figure~\ref{fig:mi250topo}(a). The 8+8 setting can result from hybrid training parallelism or bin-packing jobs in a cloud environment. Figure~\ref{fig:mi250topo}(b) and (c) show two examples of the trees ForestColl generated for the 16+16 and 8+8, respectively.

\textbf{16+16 Results:} The left column of Figure~\ref{fig:mi250perf} shows our experiment results in 16+16 setting. We compare performance by algorithmic bandwidth (algbw), calculated as data size divided by runtime. ForestColl consistently outperforms baselines. In allgather comparison with TACCL, ForestColl shows a 61\% higher algbw at 1GB data size and an average\footnote{The arithmetic mean percentage improvement across data sizes.} 36\% higher algbw from 1MB to 1GB. Against Blink+Switch in allreduce, ForestColl is 16\% faster at 1GB and 23\% faster on average. In Figure~\ref{fig:mi250perf}, allgather is generally twice as fast as allreduce, contradicting Blink's suggestion to perform allgather as allreduce. RCCL performs comparably to ForestColl at 1GB. However, in allgather and reduce-scatter, RCCL relies solely on the RCCL ring, which is the worst case for hop latency. Thus, ForestColl is much faster at smaller data sizes, outperforming RCCL by 91\% and 87\% on average in allgather and reduce-scatter, respectively. For allreduce, where RCCL tree is available, ForestColl still outperforms RCCL by 15\% on average.

\textbf{8+8 Results:} In 8+8 setting, ForestColl also outperforms the baselines. The comparison between ForestColl and TACCL remains similar, with ForestColl being 43\% faster at 1GB and 32\% faster on average from 1MB to 1GB. Against Blink+Switch, ForestColl is 36\% faster both at 1GB and on average. RCCL's performance, however, drops significantly in the 8+8 setting, with ForestColl being on average 2.98x, 2.86x, and 1.40x fast in allgather, reduce-scatter, and allreduce, respectively. At 1GB data size, ForestColl has 2.7x, 2.42x, and 1.66x algbws compared to RCCL's best-performing algorithm. RCCL's performance drops because it is hand-tuned for fixed topologies with full 16 GPUs per box. In contrast, ForestColl, TACCL, and Blink+Switch can dynamically generate schedules for the new 8+8 topology and have stable performance.

\subsubsection{NVIDIA DGX A100 Experiments}\label{sec:a100eval}

\textbf{Testbed Setup:} On a 2-box NVIDIA DGX A100 testbed, we evaluated ForestColl's schedules against TACCL and NCCL.
Each box has $8\times$ NVIDIA A100 GPUs, interconnected by an NVSwitch with 300GB/s intra-box bandwidth per GPU. Additionally, every two GPUs are connected to two IB NICs via a PCIe switch. Each NIC offers 25GB/s inter-box bandwidth.

\begin{figure}[tb]
    \centering
    \includegraphics[width=0.98\columnwidth]{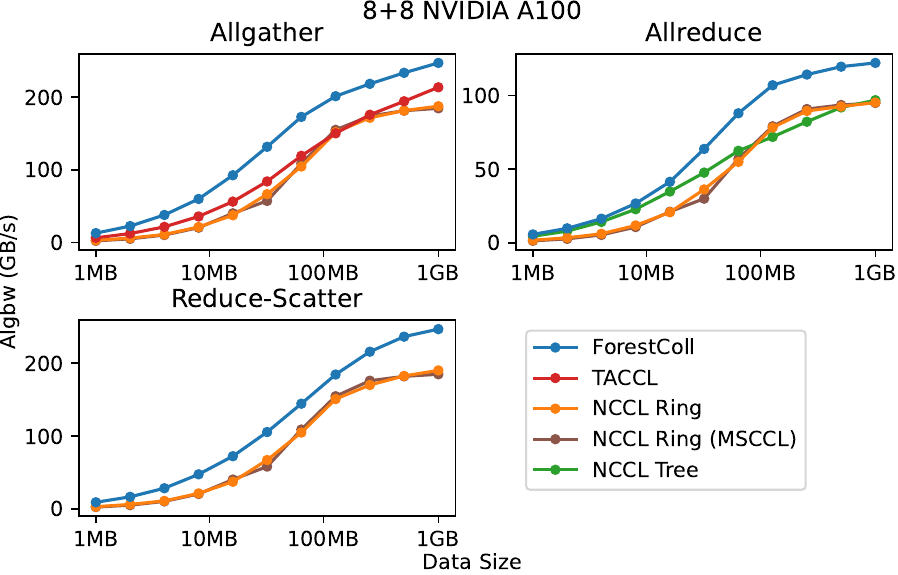}
    \caption{Comparing collective communication performance of TACCL, NCCL, and ForestColl on 2-box NVIDIA DGX A100. \textnormal{The ``NCCL Ring (MSCCL)'' results are obtained by implementing the NCCL ring in MSCCL XMLs to confirm no inherent performance difference between NCCL and MSCCL.}}
    \label{fig:a100perf}
\end{figure}

\textbf{Results:} Figure~\ref{fig:a100perf} presents our experiment results. ForestColl leads in all three collectives by a considerable margin over the closest baseline. While TACCL's performance improves in a switch-only topology, ForestColl still outperforms it by 16\% at 1GB and by an average of 53\% for sizes from 1MB to 1GB. The improvement of ForestColl over NCCL is even more pronounced. At 1GB, ForestColl achieves 32\%, 30\%, and 26\% higher algbws for allgather, reduce-scatter, and allreduce, respectively. Averaged across 1MB$\sim$1GB, ForestColl is 130\%, 85\%, and 27\% faster than NCCL.

In addition to comparing NCCL and ForestColl, we implemented the NCCL ring as MSCCL XML and tested its performance. In Figure~\ref{fig:a100perf}, the NCCL ring in MSCCL shows identical performance to the default NCCL ring in all collectives, showing that \ul{ForestColl's improvements stem solely from scheduling optimization, not runtime tuning or inherent performance difference between NCCL and MSCCL.}
\begin{figure*}[t]
    % \vspace{0.3in}
    \centering
    \begin{subfigure}{0.376\textwidth}
        \centering
        \includegraphics[width=\textwidth]{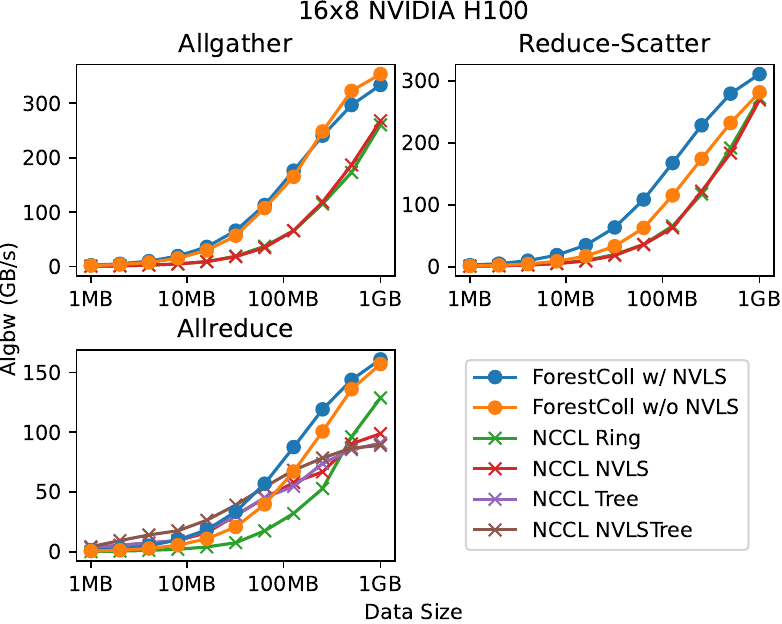}
        \vspace{1em}
        \caption{Allgather, reduce-scatter, allreduce algbws at 16x8 H100}
    \end{subfigure}
    \hfill
    \begin{subfigure}{0.604\textwidth}
        \centering
        \includegraphics[width=\textwidth]{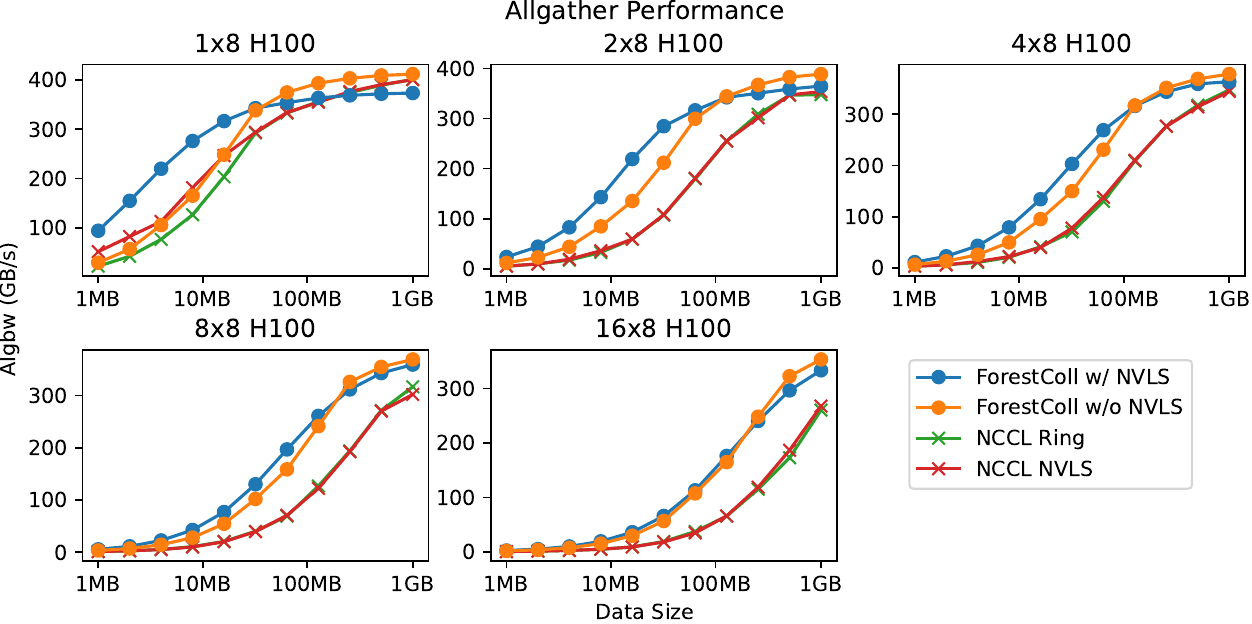}
        \vspace{1em}
        \caption{Allgather algbws at $\{1,2,4,8,16\}$x8 H100}
    \end{subfigure}
    \caption{Comparing NCCL and ForestColl collective communication performance on NVIDIA DGX H100.}
    \label{fig:h100results}
\end{figure*}

\subsection{Large-Scale GPU Cluster Experiments}\label{sec:nvidia_expt}

\textbf{Setup}: We evaluated ForestColl against NCCL on a testbed of $16\times$ DGX H100 boxes ($128\times$ H100 GPUs). Each DGX box is equipped with an NVSwitch, providing 450GB/s of intra-box bandwidth per GPU, and $8\times$ IB NICs, each offering 50GB/s of inter-box bandwidth. Due to scalability limitations of MSCCL---specifically, each SM can send/recv data from only one peer---we implemented customized CUDA kernels using MSCCL++ based on ForestColl's generated trees.

\textbf{16x8 Results:} Figure~\ref{fig:h100results}(a) shows allgather, reduce-scatter and allreduce performance on $16\times$ DGX H100 boxes ($128\times$ GPUs). ForestColl achieves substantially higher throughput than NCCL in all three collectives, benefiting from both more efficient scheduling and optimized implementation. For allgather and reduce-scatter, ForestColl delivers 32\% and 14\% higher throughput at 1GB data size. In allreduce, while NCCL's tree algorithms perform better at smaller, latency-sensitive data sizes, ForestColl still dominates at large data sizes, achieving 25\% higher throughput at 1GB. In production, existing runtime systems can seamlessly switch between low-latency schedules and ForestColl depending on the input data size, allowing the two to complement each other.

\textbf{1-16x8 Results:} Figure~\ref{fig:h100results}(b) presents allgather performance from 1 to $16\times$ DGX H100 boxes. At the 1x8 scale, where communication is intra-box only, ForestColl and NCCL have similar throughput, with ForestColl NVLS performing better at smaller data sizes due to the lower latency of zero-copy implementation. At larger scales, where inter-box bandwidth becomes the bottleneck, ForestColl's schedules have less cross-box traffic (as explained in Figure~\ref{fig:ringexample}) and outperform NCCL by larger margins. The results show that \ul{collective communication implementations guided by ForestColl's scheduling can achieve higher throughput than state-of-the-art communication libraries.}

\begin{figure}[tb]
    \centering
    \includegraphics[width=0.98\columnwidth]{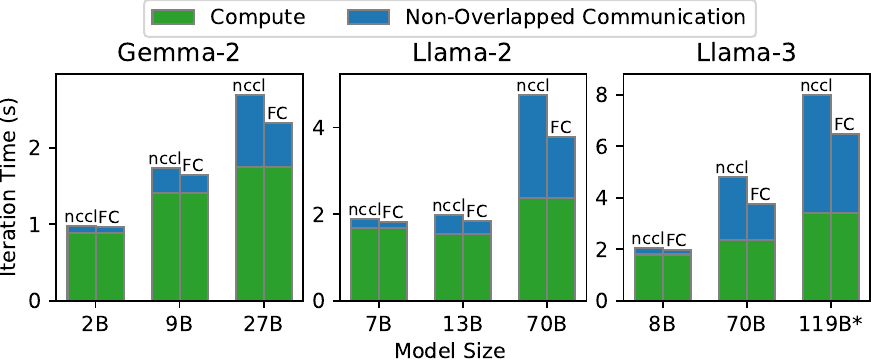}
    \caption{Comparing NCCL and ForestColl in Fully Sharded Data Parallel (FSDP) training. \textnormal{The training is on 2x DGX A100 with 16 GPUs using PyTorch FSDP~\cite{pytorchfsdp}. The compute times are measured by training with communications skipped. Given the limited scale of our testbed, context lengths are set to 2048 for Gemma and 1024 for Llama models, with batch sizes set to the maximum allowed by GPU memory (80GB per GPU). Models are from Hugging Face~\cite{hftransformers} and use FlashAttention~\cite{flashattention,flashattention2} with BFloat16.}}
    \label{fig:fsdptraining}
\end{figure}

\subsection{FSDP Training Experiments}\label{sec:evalfsdp}

To show that the communication speedup provided by ForestColl accelerates LLM training, we run Fully Sharded Data Parallel (FSDP) training~\cite{pytorchfsdp,zero} with open-source LLMs: Gemma-2~\cite{gemma2} from Google and Llama-2~\cite{touvron2023llama} \& 3~\cite{llama3} from Meta. FSDP is widely used for training large models that far exceed the memory capacity of a single GPU~\cite{touvron2023llama,llama3,olmo}. It shards model parameters across GPUs and allgathers them as needed. In LLM training, FSDP typically allgathers the weights at each layer, performs the computation, and discards the weights to free up memory for the next layer in the forward and backward passes. A reduce-scatter is also needed in the backward pass to aggregate the gradients of each layer.

With MSCCL's seamless integration with PyTorch, we use the same setup as in \S\ref{sec:a100eval}. Figure~\ref{fig:fsdptraining} shows our training results, comparing iteration times (forward+backward) using NCCL vs ForestColl. The iteration times are broken down into compute (comp) time\footnote{The comp time is measured by skipping comm operations in an iteration.} and communication (comm) time not overlapped by comp. For smaller models, such as Gemma-2-2b, Llama-2-7b, and Llama-3-8b, the improvements with ForestColl are minimal, showing reductions in iteration time of less than 5\%. With comp accounting for over 88\% of the iteration time, these small models are comp-bound, with speedup in comm having little effect on overall performance. However, as model size increases, the trend shifts toward becoming comm-bound. For Gemma-2-27b, Llama-2-70b, and Llama-3-119b\footnote{Due to the limited scale of our testbed, we reduce num\_hidden\_layers in Llama-3-405B to 36, creating the 119B model for our experiments.}, comp accounts for only 65\%, 50\%, and 43\% of the iteration times. As a result, compared to NCCL, ForestColl reduces iteration times by 14\% for Gemma-2-27b and 20\% for both Llama-2-70b and Llama-3-119b.

Large models are more comm-bound for two reasons. First, large models cannot be trained with large batch sizes due to higher GPU memory usage. In our experiments, while a small model like Llama-3-8b can be trained with a batch size of 8, Llama-3-70b is limited to a batch size of 1 to avoid GPU out-of-memory, even with memory-efficient techniques like FlashAttention~\cite{flashattention,flashattention2} and BF16 parameters. Second, large models have poor comp-comm overlap due to contention between comp and comm kernels for GPU resources. For example, the comp kernel in FlashAttention uses a number of Streaming Multiprocessors (SM) proportional to the number of attention heads, while comm kernel requires more SMs to saturate bandwidth for large data transfers. For large models, the combined demands of comp and comm kernels exceed a GPU's total SMs, forcing sequential execution.
\subsection{Large-Scale Schedule Generation Comparison}\label{sec:geneval}

In large-scale schedule generation, we compare ForestColl against MultiTree, TACCL, TE-CCL, and SyCCL. While Blink, TACOS, and BFB also conduct schedule generation, Blink and BFB lack support for switch topologies, and the released TACOS implementation does not support switches (as of submission). In Figure~\ref{fig:taccl_gen}, we compare MultiTree, TACCL, TE-CCL, and ForestColl in generating allgather schedules for NVIDIA A100 and AMD MI250 topologies, with SyCCL included for A100 only due to failures in schedule generation for MI250. Appendix~\ref{sec:gen_opt} details our implementation and parallelization of ForestColl's scheduling algorithm.

\begin{figure}[tb]
    \centering
    \includegraphics[width=0.98\columnwidth]{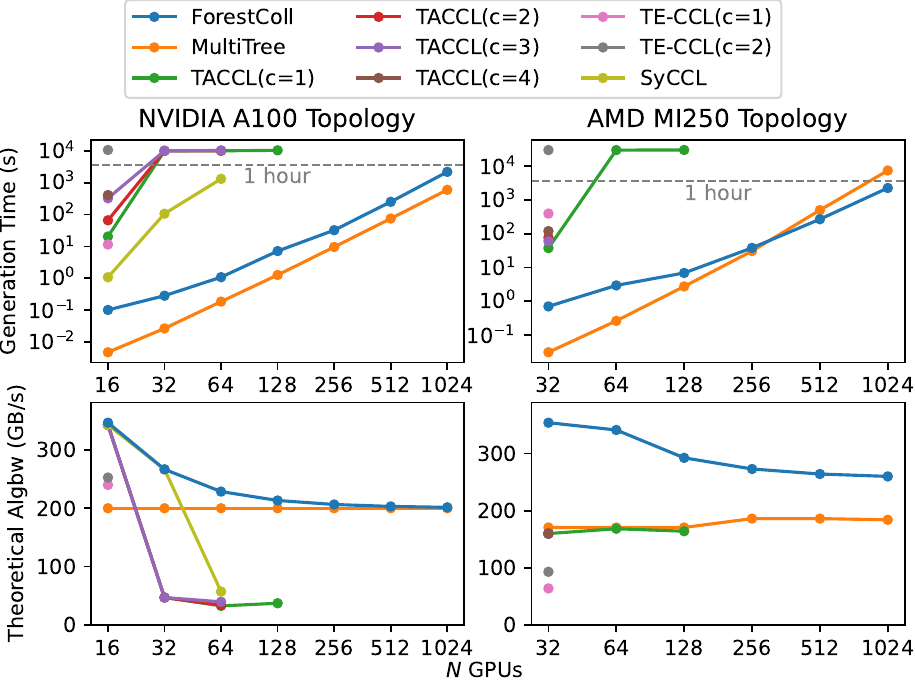}
    \caption{Large-scale schedule generation comparison between MultiTree, TACCL, TE-CCL, SyCCL, and ForestColl on NVIDIA A100 and AMD MI250 topologies. \textnormal{The top row compares the time spent on generation, and the bottom row compares the theoretical algorithmic bandwidth of the generated schedules. TACCL and TE-CCL are run with varying numbers of chunks. The time limit is set to $10^4$s for A100 and $3\!\times\! 10^4$s for MI250.}}
    \label{fig:taccl_gen}
\end{figure}

\textbf{Setup:} TACCL, TE-CCL, and SyCCL use mixed integer linear programming (MILP) to generate schedules. Since solving MILP to optimality is NP-hard and often extremely time-consuming, these methods support setting a time limit to stop early and return the best solution found so far. However, for large topologies, the solver may not find any solution within the time limit. We set a $10^4$s time limit for A100 topologies and $3\!\times\! 10^4$s (8.3 hours) for the more complicated MI250. MultiTree briefly mentions handling heterogeneity by creating multiedges with unit bandwidth but does not specify how to determine the unit bandwidth. If too small, all trees could be scheduled to use the same congested link. Here, we set the unit bandwidth to the bandwidth of the slowest link.

\textbf{Generation Runtime:} In Figure~\ref{fig:taccl_gen}, ForestColl is orders of magnitude faster than TACCL, TE-CCL, and SyCCL in schedule generation. On A100, while TACCL hits the time limit at 4 boxes (32 GPUs), ForestColl generates a schedule in under a second. For larger topologies, TACCL fails to generate any solution within the time limits for $>\!128$ GPUs, and TE-CCL cannot generate a schedule beyond two A100 or MI250 boxes. We also tried TE-CCL's $A^*$ technique, but it also failed to scale further. SyCCL runs faster by exploiting topology symmetry, but still fails to scale beyond 64 GPUs. Its paper reports scaling to 512 GPUs (37min) with heuristic tuning, yet ForestColl (4min) is still order of magnitude faster at the same scale. Compared to MultiTree, ForestColl is slower on the A100 topology but faster on the more complex MI250. Despite MultiTree's use of a much simpler greedy algorithm, ForestColl still achieves a similar scalability curve. Notably, ForestColl is the \textit{only} method able to generate both 1024-GPU schedules within 1 hour (A100: 37min, MI250: 38min). Although not generated in seconds, these runtimes are far more practical than MILP approaches and acceptable given that schedules are only precomputed once per topology.

\textbf{Theoretical Schedule Throughput:} ForestColl is always theoretically optimal, outperforming all other methods in Figure~\ref{fig:taccl_gen}. On A100, TACCL and SyCCL initially match ForestColl but their throughput drops significantly at larger scales. MultiTree starts with considerably lower throughput than ForestColl but asymptotically matches it as topology size scales, likely due to the simplicity of A100 topologies. On the more complex MI250, ForestColl outperforms MultiTree by 50\%+.
Meanwhile, TE-CCL lags behind all other methods.

Due to the scalability limitations of MILP, TACCL, TE-CCL, and SyCCL rely on extensive heuristic tuning with tens of parameters in their configs/sketches. Despite our tuning efforts, we still observed instability in scalability and schedule throughput. In contrast, ForestColl needs only the input topology as a capacitated graph to ensure both scalability and optimality. \ul{ForestColl makes finding optimal schedules for large-scale topologies mathematically feasible and achievable within tractable runtime bounds.}

\section{Concluding Remarks}

Collective communication has become a performance bottleneck in distributed ML. The heterogeneity and diversity of the network topologies pose significant challenges to designing efficient communication algorithms. In this paper, we proposed ForestColl, which \emph{efficiently} generates \emph{throughput-optimal} schedules for \emph{any type} of network topology. Experiments on popular ML hardware platforms have demonstrated our significant performance improvements over both the platforms' own optimized communication libraries and other state-of-the-art schedule generation techniques.

\newpage
\bibliographystyle{acm}
\bibliography{refs}

\newpage
\appendix
\tolerance=3000 % allow more flexible line break to accommodate long math inline equation

\clearpage

\noindent {\Large \textbf{Appendix}}

\vspace{0.1in}

\noindent In this appendix, we provide detailed mathematical analysis and proofs to supplement the main text. To summarize,
\squishlist
    \item \S\ref{app:sec:notations} provides a summary of notations used in the main text and appendix.
    \item \S\ref{sec:otherrelated} discusses related works beyond schedule generation.
    \item \S\ref{sec:gen_opt} describes the implementation and parallelization of schedule generation algorithm.
    \item \S\ref{app:sec:dilemma} shows a dilemma for step schedules to reach optimality.
    \item \S\ref{app:sec:algodesign} elaborates on the mathematical details of algorithm.
    \item \S\ref{app:sec:runtime} proves that every part of our algorithm runs in polynomial time.
    \item \S\ref{app:sec:allreducelp} describes a linear program to construct optimal allreduce schedule.
    \item \S\ref{app:sec:proofs} provides proofs of all theorems in this paper.
\squishend

\section{Notations}\label{app:sec:notations}

To ensure rigorous mathematical reasoning, we introduce the following notations:
\squishlist
    \item $G\!=\!(V\!=\!V_s\cup V_c,E)$: the input topology as a directed graph. $V_s$ and $V_c$ represent the switch nodes and compute nodes, respectively.
    \item $\vec{G}_x$: the auxiliary network constructed for optimality binary search. Defined in \S\ref{app:sec:binarysearch}.
    \item $G(\{Ub_e\})$: a graph obtained by multiplying each link bandwidth $b_e$ of $G$ by $U$. Defined in \S\ref{app:sec:binarysearch}.
    \item $G^{ef}$: a graph obtained by splitting off edge $e$ and $f$ of $G$. Defined in \S\ref{app:sec:edgesplit}.
    \item $G^*\!=\!(V_c,E^*)$: the graph after removing all switch nodes from $G$ using edge splitting technique. Defined in \S\ref{app:sec:edgesplit}.
    \item $\widehat{D}_{(u,w),v},\widehat{D}_{(w,t),v}$: the auxiliary networks for edge splitting (computing $\gamma$). Defined in \S\ref{app:sec:edgesplit}.
    \item $\overline{D}$: the auxiliary network for spanning tree construction (computing $\mu$). Defined in \S\ref{app:sec:constructtree}.
    \item $M$: total size of the data across all nodes.
    \item $N$: the number of compute nodes, i.e., $N=|V_c|$.
    \item $b_e$: the bandwidth of link $e$.
    \item $k$: the number of out-trees rooted at each compute node.
    \item $x$: the total bandwidth of the out-trees rooted at each compute node.
    \item $y$: the bandwidth utilized by each out-tree.
    \item $U$: equal to $1/y$. Used to scale edge capacities.
    \item $\gamma$: the maximum capacity we can safely replace (or split off) $(u,w),(w,t)$ with $(u,t)$. Defined in Theorem~\ref{app:thm:multicapasplit}.
    \item $\mu$: the maximum capacity we can add an edge into a tree. Defined in (\ref{app:eq:definemu}) of \S\ref{app:sec:constructtree}.
    \item $S,S^*$: a cut represented as a vertex subset. $S^*$ denotes the bottleneck cut, where $\frac{|S^*\cap V_c|}{B^+_G(S^*)}\!\geq\!\frac{|S\cap V_c|}{B^+_G(S)}$ for all $S\!\subset\! V,S\!\not\supseteq\! V_c$.
    \item $B^+_G(S),B^-_G(S)$: the exiting bandwidth and entering bandwidth of a vertex set $S$ on a graph $G$, i.e., sum of the bandwidths of all links exiting/entering $S$.
    \item $F(x,y;G)$: the maxflow from node $x$ to $y$ in graph $G$.
    \item $c(A,B;G)$: the total capacity from vertex set $A$ to $B$ in graph $G$, i.e., the sum of the capacities of directed edges going from $A$ to $B$.
    \item $\lambda(x,y;G)$: the edge connectivity from $x$ to $y$ in graph $G$. In integer-capacity graph, $\lambda(x,y;G)\!=\!F(x,y;G)$.
    \item $d^+(v),d^-(v)$: the in-degree and out-degree of node $v$. In integer-capacity graph, $d^+(v),d^-(v)$ are total ingress and egress capacity of $v$.
    \item $T_{u,i}$: the $i$-th out-tree rooted at node $u$.
    \item $R_{u,i},\cV(T_{u,i})$: the vertex set of the out-tree $T_{u,i}$.
    \item $\cE(T_{u,i})$: the edge set of the out-tree $T_{u,i}$.
    \item $m(R_{u,i}),g(x,y)$: notations for spanning tree construction. Defined in Theorem~\ref{app:thm:polynomialpacking}.
\squishend

\section{Other Related Work}\label{sec:otherrelated}

\noindent\textbf{Other Optimizations of Collective Communications:} Beyond schedule generation, other works optimize collective comms in ways orthogonal to ForestColl. TopoOpt~\cite{topoopt}, Rail-only~\cite{wang2024rail}, and BFB~\cite{bfb} optimize the underlying network topology, where ForestColl's optimality for any topology can be beneficial. C-Cube~\cite{c-cube} explores efficient comm on logical trees, such as overlapping reduction and broadcast, but does not mention tree construction. BlueConnect~\cite{blueconnect} proposes a collective algorithm for single hierarchical switching fabrics but is otherwise inapplicable. Recent works~\cite{inadt,roar,flare,inaatp,switchml,rina} explore prototype hardware/protocol designs for in-network multicast/aggregation under simplistic topologies. ForestColl is compatible with these designs, extending them to more complex topologies. Other works also optimize ML comm through network transport~\cite{mlt}, packet scheduling~\cite{tictac,xinjin}, and multi-tenancy~\cite{mccs,cassini}.

\noindent\textbf{Hybrid Parallelism:} A major line of work focuses on designing hybrid parallelism strategies to co-optimize the use of comp, comm, and memory resources. Megatron-LM~\cite{megatronlm,megatrondeepak} showcases how to combine data, tensor, and pipeline parallelisms to speed up LLM training. FlexFlow~\cite{flexflow} and Alpa~\cite{alpa} propose automated search for hybrid parallelism strategies, with Alpa specifically aiming to minimize comm cost. nnScaler~\cite{nnscaler} incorporates domain experts into the search process for more parallelization opportunities.

\noindent\textbf{Comp-Comm Overlap:} Another line of research aims to enhance the overlap between comp and comm in ML training. Some implementations of parallelisms, such as PyTorch DDP~\cite{pytorchddp}, FSDP~\cite{pytorchfsdp}, and Domino~\cite{domino}, exploit overlap opportunities within a single parallelism. More complicated approaches, including CoCoNet~\cite{coconet}, Syndicate~\cite{syndicate}, and Centauri~\cite{centauri}, optimize the scheduling of comp and comm operations to achieve overlap in hybrid parallelism. Recent works like CoCoNet~\cite{coconet}, \cite{overlapdecomp}, T3~\cite{T3}, and Flux~\cite{flux} overlap at the finest scale by fusing comp and comm kernels.

ForestColl complements both hybrid parallelism and comp-comm overlap. In hybrid parallelism, comm cost plays a pivotal role in overall performance~\cite{flexflow,alpa,nnscaler}. By speeding up comm, ForestColl alleviates comm bottlenecks and enables more optimizations for comp and memory access. Its adaptability to any topology---including subsets of a topology (\S\ref{sec:mi250eval})---enables more possibilities for hybrid parallelism. While comp-comm overlap aims to hide comm cost, the massive traffic required by LLMs means that comm cost remains substantial~\cite{domino,flux,overlapdecomp,T3}. Besides, resource contention on GPUs (e.g., comp units, memory bandwidth) between comp and comm operations underscores that overlap is not cost-free~\cite{overlapdecomp,compcommdma}. ForestColl helps reduce the non-overlapped comm cost and enhance comm efficiency in overlap (\S\ref{sec:evalfsdp}).
\section{Implementation of Schedule Generation}\label{sec:gen_opt}

In this section, we describe our implementation of ForestColl's schedule generation algorithm, focusing on how we parallelize key components to leverage multicore processors. The algorithm is implemented in Java with $\sim\!1100$ LOC. For maxflow, we use the push–relabel algorithm~\cite{preflowpush} provided by JGraphT library~\cite{jgrapht}. Table~\ref{tab:genbreakdown} shows the breakdown of schedule generation time for 1024-GPU topologies in \S\ref{sec:geneval}.

\begin{table}[h]
    \centering
    \resizebox{\columnwidth}{!}{
    \begin{tabular}{|c|c|c|c|c|}
        \hline
        \multirow{2}{*}{Topology} & \multirow{2}{*}{\shortstack[c]{Optimality \\ Binary Search}} & \multirow{2}{*}{\shortstack[c]{Switch Node \\ Removal}} & \multirow{2}{*}{\shortstack[c]{Spanning Tree \\ Construction}} & \multirow{2}{*}{Total Time} \\
        & & & & \\
        \hline
        1024-GPU A100 & 2.2s & 979s & 1209s & 36.5min \\
        \hline
        1024-GPU MI250 & 3.8s & 550s & 1708s & 37.7min \\
        \hline
    \end{tabular}
    }
    \caption{Breakdown of schedule generation time for 1024-GPU topologies in \S\ref{sec:geneval}. \textnormal{Runtimes were measured on a 128-core 2.2GHz CPU.}}
    \label{tab:genbreakdown}
    \vspace{-0.5em}
\end{table}

Optimality binary search is the fastest stage of the schedule generation algorithm. In Algorithm~\ref{algo:binarysearch}, we parallelize the computation of maxflows from node $s$ to each compute node $c$. With this parallelization, ForestColl can derive the optimal throughput of 1024-GPU topologies within seconds. For switch node removal, we similarly parallelize the computation of $\gamma$ in Algorithm~\ref{algo:switchremove}, which also requires independent maxflow computations for each compute node. The runtime of this stage depends on the amount of switches in the network; for example, switch node removal on the A100 topology takes longer than on the MI250 due to the additional NVSwitches.

Parallelizing spanning tree construction is more challenging. The main bottleneck lies in computing $\mu$ in Algorithm~\ref{app:algo:construct}. This step is difficult to parallelize because it requires only a single maxflow computation, and the push–relabel algorithm in JGraphT is not parallelized. The challenge is compounded by the sequential nature of edge additions: for each edge $e_i$, we must compute $\mu$, decide whether to add or skip $e_i$, and then move on to $e_{i+1}$. As a result, constructing $kN$ spanning trees requires computing $\mu$ for $\Omega(kN^2)$ times. To address this, we take inspiration from branch prediction. Specifically, thread $j$ speculatively computes $\mu$ for edge $e_{i+j}$ under the assumption that edges $e_i,\dots,e_{i+j-1}$ are all added. If all consecutive edges can indeed be added, they are incorporated in the wall-clock time of a single $\mu$ computation. To further accelerate the process, we also assign threads to test subsequent edges under the assumption that $e_i$ is rejected, ensuring that the next eligible edge is immediately available.
\section{Minimality-or-Saturation Dilemma}\label{app:sec:dilemma}

\begin{figure*}[t]
    \centering
    \begin{subfigure}{0.24\textwidth}
        \centering
        \scalebox{0.7}{
\begin{tikzpicture}[node/.style={rectangle,draw=black,minimum size=7mm,align=center}]
    \node[node]	(10)	at (-1.5,2.5-1) {$v^c_{1,1}$};
    \node[node]	(11)	at (-0.5,2.5-1) {$v^c_{1,2}$};
    \node[node]	(12)	at (0.5,2.5-1) {$v^c_{1,3}$};
    \node[node]	(13)	at (1.5,2.5-1) {$v^c_{1,4}$};
    \node[node,text width=3.5cm]	(14)	at (0,0.5+2.5) {Switch $v^s_1$};
    
    \path[latex-latex,anchor=east,line width=1pt] (10.north) edge node {$10b$} (10.north|-14.south);
    \path[latex-latex,line width=1pt] (11.north) edge (11.north|-14.south);
    \path[latex-latex,line width=1pt] (12.north) edge (12.north|-14.south);
    \path[latex-latex,line width=1pt] (13.north) edge (13.north|-14.south);
    
    \node[node]	(0)	at (-1.5,1-2.5) {$v^c_{2,1}$};
    \node[node]	(1)	at (-0.5,1-2.5) {$v^c_{2,2}$};
    \node[node]	(2)	at (0.5,1-2.5) {$v^c_{2,3}$};
    \node[node]	(3)	at (1.5,1-2.5) {$v^c_{2,4}$};
    \node[node,text width=3.5cm]	(4)	at (0,-0.5-2.5) {Switch $v^s_2$};
    
    \path[latex-latex,anchor=east,line width=1pt] (0.south) edge node {$10b$} (0.south|-4.north);
    \path[latex-latex,line width=1pt] (1.south) edge (1.south|-4.north);
    \path[latex-latex,line width=1pt] (2.south) edge (2.south|-4.north);
    \path[latex-latex,line width=1pt] (3.south) edge (3.south|-4.north);
            
    \node[node,text width=4cm]	(20)	at (0,0) {Switch $v^s_0$};
    
    \path[latex-latex,anchor=east] (0.north) edge node {$b$} (0.north|-20.south);
    \path[latex-latex] (1.north) edge (1.north|-20.south);
    \path[latex-latex] (2.north) edge (2.north|-20.south);
    \path[latex-latex] (3.north) edge (3.north|-20.south);
    
    \path[latex-latex,anchor=east] (10.south) edge node {$b$} (10.south|-20.north);
    \path[latex-latex] (11.south) edge (11.south|-20.north);
    \path[latex-latex] (12.south) edge (12.south|-20.north);
    \path[latex-latex] (13.south) edge (13.south|-20.north);
\end{tikzpicture}
}
        \caption{}
        \label{app:fig:topo}
    \end{subfigure}
    \begin{subfigure}{0.24\textwidth}
        \centering
        \scalebox{0.7}{
\begin{tikzpicture}[node/.style={rectangle,draw=black,minimum size=7mm,align=center}]
    \node[node]	(10)	at (-1.5,2.5-1) {$v^c_{1,1}$};
    \node[node]	(11)	at (-0.5,2.5-1) {$v^c_{1,2}$};
    \node[node]	(12)	at (0.5,2.5-1) {$v^c_{1,3}$};
    \node[node]	(13)	at (1.5,2.5-1) {$v^c_{1,4}$};
    \node[node,text width=3.5cm]	(14)	at (0,0.5+2.5) {Switch $v^s_1$};
    
    \path[latex-latex,line width=1pt] (10.north) edge (10.north|-14.south);
    \path[latex-latex,line width=1pt] (11.north) edge (11.north|-14.south);
    \path[latex-latex,line width=1pt] (12.north) edge (12.north|-14.south);
    \path[latex-latex,line width=1pt] (13.north) edge (13.north|-14.south);
    
    \node[node]	(0)	at (-1.5,1-2.5) {$v^c_{2,1}$};
    \node[node]	(1)	at (-0.5,1-2.5) {$v^c_{2,2}$};
    \node[node]	(2)	at (0.5,1-2.5) {$v^c_{2,3}$};
    \node[node]	(3)	at (1.5,1-2.5) {$v^c_{2,4}$};
    \node[node,text width=3.5cm]	(4)	at (0,-0.5-2.5) {Switch $v^s_2$};
    
    \path[latex-latex,line width=1pt] (0.south) edge (0.south|-4.north);
    \path[latex-latex,line width=1pt] (1.south) edge (1.south|-4.north);
    \path[latex-latex,line width=1pt] (2.south) edge (2.south|-4.north);
    \path[latex-latex,line width=1pt] (3.south) edge (3.south|-4.north);
            
    \node[node,text width=4cm]	(20)	at (0,0) {Switch $v^s_0$};
    
    \path[latex-latex] (0.north) edge (0.north|-20.south);
    \path[latex-latex] (1.north) edge (1.north|-20.south);
    \path[latex-latex] (2.north) edge (2.north|-20.south);
    \path[latex-latex] (3.north) edge (3.north|-20.south);
    
    \draw[line width=0.75pt, dashed] plot [smooth, tension=1.5] coordinates { (-2.75,1.5) (0,0.7) (2.75,1.5) };
    \node[font=\Large] at (2.5, 1.7) {$S^*$};
    
    \path[latex-latex] (10.south) edge (10.south|-20.north);
    \path[latex-latex] (11.south) edge (11.south|-20.north);
    \path[latex-latex] (12.south) edge (12.south|-20.north);
    \path[latex-latex] (13.south) edge (13.south|-20.north);
\end{tikzpicture}
}
        \caption{}
        \label{app:fig:topo_partition}
    \end{subfigure}
    \begin{subfigure}{0.24\textwidth}
        \centering
        \scalebox{0.7}{
\begin{tikzpicture}[node/.style={rectangle,draw=black,minimum size=7mm,align=center}]
    \node[node, line width=1pt]	(10)	at (-1.5,2.5-1) {$v^c_{1,1}$};
    \node[node]	(11)	at (-0.5,2.5-1) {$v^c_{1,2}$};
    \node[node]	(12)	at (0.5,2.5-1) {$v^c_{1,3}$};
    \node[node]	(13)	at (1.5,2.5-1) {$v^c_{1,4}$};
    \node[node,text width=3.5cm]	(14)	at (0,0.5+2.5) {Switch $v^s_1$};
    
    \path[-latex] (10.north) edge (10.north|-14.south);
    \path[latex-] (11.north) edge (11.north|-14.south);
    \path[latex-] (12.north) edge (12.north|-14.south);
    \path[latex-] (13.north) edge (13.north|-14.south);
    
    \node[node]	(0)	at (-1.5,1-2.5) {$v^c_{2,1}$};
    \node[node]	(1)	at (-0.5,1-2.5) {$v^c_{2,2}$};
    \node[node]	(2)	at (0.5,1-2.5) {$v^c_{2,3}$};
    \node[node]	(3)	at (1.5,1-2.5) {$v^c_{2,4}$};
    \node[node,text width=3.5cm]	(4)	at (0,-0.5-2.5) {Switch $v^s_2$};
            
    \node[node,text width=4cm]	(20)	at (0,0) {Switch $v^s_0$};
    
    \path[latex-] (0.north) edge (0.north|-20.south);
    \path[latex-] (1.north) edge (1.north|-20.south);
    \path[latex-] (2.north) edge (2.north|-20.south);
    \path[latex-] (3.north) edge (3.north|-20.south);
    
    \path[-latex] (10.south) edge (10.south|-20.north);
\end{tikzpicture}
}
        \caption{}
        \label{app:fig:topo_tree}
    \end{subfigure}
    \begin{subfigure}{0.24\textwidth}
        \centering
        \scalebox{0.7}{
\begin{tikzpicture}[node/.style={rectangle,draw=black,minimum size=7mm,align=center}]
    \node[node]	(10)	at (-1.5,2.5-1) {$v^c_{1,1}$};
    \node[node]	(11)	at (-0.5,2.5-1) {$v^c_{1,2}$};
    \node[node]	(12)	at (0.5,2.5-1) {$v^c_{1,3}$};
    \node[node]	(13)	at (1.5,2.5-1) {$v^c_{1,4}$};

    \node[node,text width=3.5cm,draw=none]	(14)	at (0,0.5+2.5) {};

    \draw[-latex,line width=1pt] (10.north) to[out=70, in=110] (11.north);
    \draw[-latex,line width=1pt] (11.north) to[out=70, in=110] (12.north);
    \draw[-latex,line width=1pt] (12.north) to[out=70, in=110] (13.north);
    \draw[-latex, anchor=south,line width=1pt] (13.70) to[out=110, in=70] node {$10b$} (10.110);
    
    \node[node]	(0)	at (-1.5,1-2.5) {$v^c_{2,1}$};
    \node[node]	(1)	at (-0.5,1-2.5) {$v^c_{2,2}$};
    \node[node]	(2)	at (0.5,1-2.5) {$v^c_{2,3}$};
    \node[node]	(3)	at (1.5,1-2.5) {$v^c_{2,4}$};
    
    \node[node,text width=3.5cm,draw=none]	(4)	at (0,-0.5-2.5) {};
    
    \draw[-latex,line width=1pt] (0.south) to[out=-70, in=-110] (1.south);
    \draw[-latex,line width=1pt] (1.south) to[out=-70, in=-110] (2.south);
    \draw[-latex,line width=1pt] (2.south) to[out=-70, in=-110] (3.south);
    \draw[-latex, anchor=north,line width=1pt] (3.-70) to[out=-110, in=-70] node {$10b$} (0.-110);

    \draw[latex-] (0.north) to[out=70, in=110] (1.north);
    \draw[latex-] (1.north) to[out=70, in=110] (2.north);
    \draw[latex-] (2.north) to[out=70, in=110] (3.north);
    \draw[latex-] (10.-110) to[out=-110, in=110] (0.110);
    \draw[-latex] (10.south) to[out=-70, in=-110] (11.south);
    \draw[-latex] (11.south) to[out=-70, in=-110] (12.south);
    \draw[-latex] (12.south) to[out=-70, in=-110] (13.south);
    \draw[-latex,anchor=west] (13.-70) to[out=-70, in=70] node {$b$} (3.70);

    \draw[line width=0.75pt, dashed] plot [smooth, tension=1.5] coordinates { (-2.75,1.3) (0,0.5) (2.75,1.3) };
    \node[font=\Large] at (2.5, 1.5) {$S^*$};
\end{tikzpicture}
}
        \caption{}
        \label{app:fig:topo_split2}
    \end{subfigure}
    \caption{An 8-compute-node switch topology in 2-box setting. \textnormal{The thick links have 10x the bandwidth of the thin ones. Figure (a) shows the original switch topology. Figure (b) shows the bottleneck cut in this topology. Figure (c) shows a spanning tree rooted at $v_{1,1}^c$ with switch-node broadcast. Figure (d) shows a suboptimal way of transforming the switch topology into a direct-connect logical topology (resulting in 4x worse optimal performance).}}
\end{figure*}
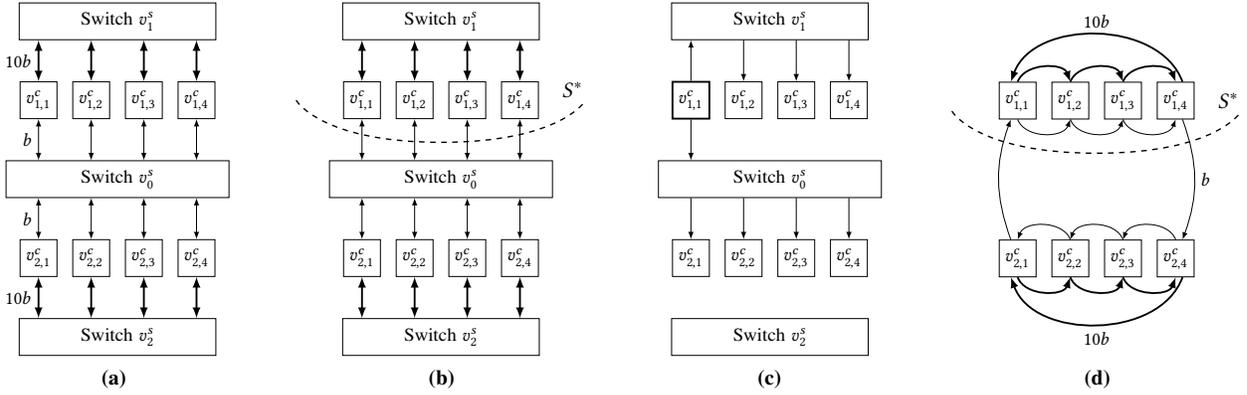

In this section, we discuss why we need a tree-flow schedule instead of an ordinary step schedule to achieve optimality. We show that in certain situations, tree-flow schedule is \textit{the only possible way} to achieve optimality. As shown in optimality~(\ref{eq:optimality}), the performance of a topology is bounded by a bottleneck cut $(S^*,\overline{S^*})$. Suppose we want to achieve the performance bound given by the bottleneck cut, i.e., $(M/N)|S^*\cap V_c|/B_G^+(S^*)$, then the schedule must satisfy two requirements: (a) the bandwidth of the bottleneck cut, i.e., $B_G^+(S^*)$, must be saturated at all times, and (b) only the minimum amount of data required, i.e., $(M/N)|S^*\cap V_c|$, is transmitted through the bottleneck cut.

Consider the switch topology in Figure~\ref{app:fig:topo}. The topology has 8 compute nodes and 3 switch nodes. The eight compute nodes are in two boxes. Each box has a switch $v_1^s$ or $v_2^s$ providing $10b$ egress/ingress bandwidth for each compute node in the box. The 8 compute nodes are also connected to a global switch $v_0^s$, providing $b$ egress/ingress bandwidth for each compute node. It is easy to check that the bottleneck cut in this topology is a box cut $S^*=\{v_1^s,v_{1,1}^c,v_{1,2}^c,v_{1,3}^c,v_{1,4}^c\}$ shown in Figure~\ref{app:fig:topo_partition}. The cut provides a communication time lower bound of $(M/N)(4/4b)$. In comparison, a single-compute-node cut provides a much lower communication time lower bound $(M/N)(1/11b)$.

Suppose we want to achieve the lower bound by bottleneck cut $S^*$. Let $C$ be the last chunk sent through the cut to box 2, and suppose it is sent to $v_{2,1}^c$. The first thing to try is to saturate the bandwidth. It means that the schedule terminates right after $C$ is sent, leaving no idle time for $B_G^+(S^*)$. Then, at least one of $v_{2,2}^c,v_{2,3}^c,v_{2,4}^c$ must get $C$ directly from box 1 because they have no time to get it from $v_{2,1}^c$. This violates minimality, however, because chunk $C$ got sent through the bottleneck cut at least twice.

Suppose we want to achieve minimality. Then, $v_{2,1}^c$ has to broadcast $C$ to $v_{2,2}^c,v_{2,3}^c,v_{2,4}^c$ within the box. However, because $C$ is the last chunk sent through the cut by assumption, the cut bandwidth $B_G^+(S^*)$ is idle during the broadcast. The saturation requirement is violated. Thus, we are in a minimality-or-saturation dilemma that we cannot achieve both at the same time. However, we can do infinitely close by making chunk $C$ infinitesimally small. By doing so, we transmit minimum data required, and we also make the idle time of bottleneck cut close to 0. In step schedule, one always needs to specify $C$ as a fixed fraction of the total data, so it is impossible to achieve optimality in such a case. In contrast, the size of one send/recv can be arbitrarily small in tree-flow schedule. Therefore, tree-flow schedule is the only way to achieve optimality.

\section{Algorithm Design}\label{app:sec:algodesign}

Let $G\!=\!(V\!=\!V_s\cup V_c,E)$ be an arbitrary network topology. We will compute an allgather schedule that reaches the optimal communication time~(\ref{eq:optimality}). We make two trivial assumptions about the topology: (a) all link bandwidths are integers and (b) $G$ is \textit{Eulerian}, i.e., the total egress bandwidth equals the total ingress bandwidth for any node. For (a), when bandwidths are rational numbers, one can always scale them up to become integers. For (b), we use $B^+_G(v)$ and $B^-_G(v)$ to denote the total egress and ingress bandwidth of node $v$ respectively. Since $G$ is Eulerian, we have $B^+_G(v)=B^-_G(v)$ for all $v\in V$ and, consequently, $B^+_G(S)=B^-_G(S)$ for any $S\subseteq V$.

In summary, the algorithm contains three parts:
\squishlist
    \item \S\ref{app:sec:binarysearch}: Conduct a binary search to compute the optimal communication time~(\ref{eq:optimality}). The binary search uses a network flow based oracle to test if a certain value is $\geq$ or $<$ than the true value of optimality~(\ref{eq:optimality}).
    \item \S\ref{app:sec:edgesplit}: Transform the switch topology into a direct-connect logical topology by using \textit{edge splitting} to remove switch nodes. The transformation is done without compromising optimal performance. This part can be skipped if the input topology is already direct-connect.
    \item \S\ref{app:sec:constructtree}: Construct spanning trees in direct-connect topology to achieve optimal performance. These spanning trees can then be mapped back to the original topology by reversing edge splitting, which determines the routing of communications between compute nodes.
\squishend
The algorithm design is centered on earlier graph theoretical results on constructing edge-disjoint out-trees in directed graph~\cite{bang-jensen, tarjan, edmonds, berczi2010packing, schrijver}. A key observation leading to this algorithm is that \textit{given a set of out-trees, there are at most $U$ out-trees congested on any edge of $G$, if and only if, the set of out-trees is edge-disjoint in a multigraph topology obtained by duplicating each of $G$'s edges $U$ times.} 

Another core design of our algorithm relies on \textit{edge splitting}, also a technique from graph theory~\cite{bang-jensen,frank,jackson}. It is used to transform the switch topology into a direct-connect topology so that one can construct compute-node-only spanning trees. Previous works such as TACCL~\cite{taccl} and TACOS~\cite{tacos} attempt to do this by ``unwinding'' switch topologies into predefined logical topologies, such as rings. However, their transformations often result in a loss of performance compared to the original switch topology. For example, the previous works may unwind all switches in Figure~\ref{app:fig:topo} into rings, resulting in Figure~\ref{app:fig:topo_split2}. However, it makes the bottleneck cut $S^*$ worse that the egress bandwidth of $S^*$ becomes $b$ instead of $4b$, causing optimality~(\ref{eq:optimality}) being $(M/N)(4/b)$ (4x worse). In contrast, our \textit{edge splitting} strategically removes switch nodes without sacrificing any overall performance. Our transformation generates direct-connect topology in Figure~\ref{app:fig:topo_split}, which has the same optimality as Figure~\ref{app:fig:topo}. 

In this paper, we make extensive use of network flow between different pairs of nodes. For any flow network $D$, we use $F(x,y;D)$ to denote the value of maxflow from $x$ to $y$ in $D$. For disjoint $A,B$, let $c(A,B;D)$ be the total capacity from $A$ to $B$ in $D$. By min-cut theorem, $F(x,y;D)\leq c(A,\bar{A};D)$ if $x\in A,y\in\bar{A}$, and there exists an $x$-$y$ cut $(A^*,\overline{A^*})$ that $F(x,y;D)=c(A^*,\overline{A^*};D)$.

\subsection{Optimality Binary Search}\label{app:sec:binarysearch}

In this section, we will show a way to compute the optimality~(\ref{eq:optimality}). Let $\{b_e\}_{e\in E}$ be the link bandwidths of $G$. By assumption, $\{b_e\}_{e\in E}$ are in $\Z_+$ and represented as capacities of edges in $G$. For any $x\in\Q$, we define $\vec{G}_x$ to be the flow network that (a) a source node $s$ is added and (b) an edge $(s,u)$ is added with capacity $x$ for every vertex $u\in V_c$. Now, we have the following theorem:

\begin{restatable}{theorem}{thmbinarysearch}\label{app:thm:binarysearch}
    $\min_{v\in V_c} F(s,v;\vec{G}_x)\geq|V_c|x$ if and only if $1/x\geq\max_{S\subset V,S\not\supseteq V_c}|S\cap V_c|/B^+_G(S)$.
\end{restatable}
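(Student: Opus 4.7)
The plan is to prove both directions by invoking the max-flow min-cut theorem on $\vec{G}_x$ and rewriting the resulting min-cut condition as a statement about subsets $S\subset V$.

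First, I would characterize $s$-$v$ cuts in $\vec{G}_x$. Any $s$-$v$ cut is a partition $(A,\bar A)$ with $s\in A$ and $v\in\bar A$. Writing $S:=A\setminus\{s\}\subset V$, the cut capacity decomposes into two parts: (i) the edges $(s,u)$ crossing the cut, namely those with $u\in V_c\setminus S$, contributing $(N-|S\cap V_c|)\,x$, and (ii) the edges of the original graph $G$ going from $S$ to $V\setminus S$, contributing exactly $B^+_G(S)$. Thus every $s$-$v$ cut has capacity $(N-|S\cap V_c|)\,x+B^+_G(S)$ where $S\subset V$ satisfies $v\notin S$, and conversely every such $S$ yields an $s$-$v$ cut.

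Second, by the max-flow min-cut theorem,
\begin{equation*}
F(s,v;\vec{G}_x)\;=\;\min_{S\subset V,\;v\notin S}\bigl[(N-|S\cap V_c|)\,x+B^+_G(S)\bigr].
\end{equation*}
Taking the minimum over $v\in V_c$ swaps the two minimizations: the combined condition ``$v\in V_c$ and $v\notin S$'' is equivalent to ``$S\not\supseteq V_c$'' (since we just need some compute node to witness $v\notin S$). Hence
\begin{equation*}
\min_{v\in V_c}F(s,v;\vec{G}_x)\;=\;\min_{S\subset V,\;S\not\supseteq V_c}\bigl[(N-|S\cap V_c|)\,x+B^+_G(S)\bigr].
\end{equation*}

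Third, I would translate the inequality $\min_{v\in V_c}F(s,v;\vec{G}_x)\geq Nx$ into the claimed bound. It holds iff, for every $S\subset V$ with $S\not\supseteq V_c$, we have $(N-|S\cap V_c|)\,x+B^+_G(S)\geq Nx$, equivalently $B^+_G(S)\geq |S\cap V_c|\,x$. Since we may assume $x>0$, this rearranges to $|S\cap V_c|/B^+_G(S)\leq 1/x$ whenever $B^+_G(S)>0$, yielding the stated max form after taking the supremum over $S$.

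The only subtlety—and the main thing to get right—is the degenerate case where $B^+_G(S)=0$ for some $S\not\supseteq V_c$. If additionally $|S\cap V_c|>0$, then the ratio is infinite and the right-hand inequality fails for every finite $x>0$; correspondingly the cut capacity is $(N-|S\cap V_c|)x<Nx$, so the left-hand inequality also fails. If $|S\cap V_c|=0$, both sides are trivially consistent. I would state this case analysis explicitly so that the $\max$ over $S$ and the minimum-cut argument both handle $B^+_G(S)=0$ cleanly; once this is settled, the two directions follow immediately from the algebraic rearrangement above.
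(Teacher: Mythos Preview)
Your proposal is correct and takes essentially the same approach as the paper: both arguments decompose the capacity of an $s$-$v$ cut in $\vec{G}_x$ as $(N-|S\cap V_c|)\,x+B^+_G(S)$ with $S=A\setminus\{s\}$, and then use max-flow min-cut to translate the flow condition into the bound on $|S\cap V_c|/B^+_G(S)$. The paper proves the two directions separately by contrapositive and by checking an arbitrary cut, whereas you derive the explicit closed form for $\min_{v\in V_c}F(s,v;\vec{G}_x)$ first and then rearrange; this is a cosmetic difference, and your explicit handling of the $B^+_G(S)=0$ case is a nice addition the paper leaves implicit.
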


The implication of Theorem~\ref{app:thm:binarysearch} is that we can do a binary search to get $1/x^*=\max_{S\subset V,S\not\supseteq V_c}|S\cap V_c|/B^+_G(S)$. The following initial range is trivial
\[
    \frac{N-1}{\min_{v\in V_c}B^-_G(v)}\leq\max_{S\subset V,S\not\supseteq V_c}\frac{|S\cap V_c|}{B^+_G(S)}\leq N-1.
\]
The lower bound corresponds to a partition containing all nodes except the compute node with minimum ingress bandwidth. The upper bound is due to the fact that $|S\cap V_c|\leq N-1$ and $B^+_G(S)\geq 1$. Starting with the initial range, one can then continuously test if $\min_{v\in V_c} F(s,v;\vec{G}_x)\geq|V_c|x$ for some midpoint $x$ to do a binary search. To find the exact $1/x^*$, let $S^*=\argmax_{S\subset V,S\not\supseteq V_c}|S\cap V_c|/B^+_G(S)$, then $1/x^*$ equals a fractional number with $B^+_G(S^*)$ as its denominator. Observe that $|S^*\cap V_c|\leq N-1$ and $|S^*\cap V_c|/B^+_G(S^*)\geq(N-1)/\min_{v\in V_c}B^-_G(v)$, so $B^+_G(S^*)\leq\min_{v\in V_c}B^-_G(v)$. Therefore, the denominator of $1/x^*$ is bounded by $\min_{v\in V_c}B^-_G(v)$.  Now, we use the following proposition:
\begin{restatable}{proposition}{lmfraction}\label{app:lm:fraction}
    Given two unequal fractional numbers $a/b$ and $c/d$ with $a,b,c,d\in\Z_+$, if denominators $b,d\leq X$ for some $X\in\Z_+$, then $|a/b-c/d|\geq 1/X^2$.
\end{restatable}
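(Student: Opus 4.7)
The plan is to reduce the inequality to an estimate on the numerator of a common-denominator representation, exploiting the fact that the two fractions are unequal rational numbers with bounded denominators. First I would rewrite
\[
    \left|\frac{a}{b} - \frac{c}{d}\right| = \frac{|ad - bc|}{bd}.
\]
This brings both the numerator and the denominator into $\Z_+$, which is the natural setting for a lower bound.

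Next, I would bound the numerator below. Since $a/b \neq c/d$ by hypothesis, the cross-product $ad - bc$ is nonzero; and because $a,b,c,d$ are positive integers, $ad$ and $bc$ are integers, so $|ad - bc|$ is a positive integer and hence $|ad - bc| \geq 1$. Then I would bound the denominator above using $b,d \leq X$, which gives $bd \leq X^2$. Combining the two estimates yields
\[
    \left|\frac{a}{b} - \frac{c}{d}\right| = \frac{|ad - bc|}{bd} \geq \frac{1}{X^2},
\]
which is exactly the claim.

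The argument is elementary and has no real obstacle. The only thing worth being explicit about is why $|ad - bc| \geq 1$ rather than merely positive, and that follows immediately from integrality together with the assumption $a/b \neq c/d$. I would present the proof as a two-line calculation flanked by these two observations.
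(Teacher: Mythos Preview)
Your proposal is correct and matches the paper's proof essentially line for line: both rewrite the difference as $|ad-bc|/(bd)$, use that $ad-bc$ is a nonzero integer to get $|ad-bc|\geq 1$, and bound $bd\leq X^2$.
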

The proposition implies that if $1/x^*=a/b$ for some $b\leq\min_{v\in V_c}B^-_G(v)$, then any $c/d\neq 1/x^*$ with $d\leq\min_{v\in V_c}B^-_G(v)$ satisfies $|c/d-1/x^*|\geq 1/\min_{v\in V_c}B^-_G(v)^2$. Thus, one can run binary search until the range is smaller than $1/\min_{v\in V_c}B^-_G(v)^2$. Then, $1/x^*$ can be computed exactly by finding the fractional number closest to the midpoint with a denominator not exceeding $\min_{v\in V_c}B^-_G(v)$. The latter can be done with the continued fraction algorithm or brute force search if $\min_{v\in V_c}B^-_G(v)$ is small.

At the point, we have already known the optimality of communication time given a topology $G$. For the remainder of this section, we will show that there exists a family of spanning trees that achieves this optimality. First of all, we have assumed that $G$'s links have the set of bandwidths $\{b_e\}_{e\in E}$. For the simplicity of notation, we use $G(\{c_e\})$ to denote the same topology as $G$ but with the set of bandwidths $\{c_e\}_{e\in E}$ instead. $\vec{G}_x(\{c_e\})$ is also defined accordingly. When $\{c_e\}_{e\in E}$ are integers, we say a family of out-trees $\cF$ is \textit{edge-disjoint} in $G(\{c_e\})$ if the number of trees using any edge $e\in E$ is less than or equal to $c_e$, i.e., $\sum_{T\in\cF}\I[e\in T]\leq c_e$ for all $e\in E$. The intuition behind this edge-disjointness is that \textit{the integer capacity $c_e$ represents the number of multiedges from the tail to the head of $e$.}

Now, we find $U\in\Q,k\in\N$ such that $U/k=1/x^*$ and $Ub_e\in\Z_+$ for all $e\in E$. For simplicity of schedule, we want $k$ to be as small as possible. The following proposition shows how to find such $U,k$:
\begin{restatable}{proposition}{lmsmallestUk}\label{app:lm:smallestUk}
     Given $\{b_e\}_{e\in E}\subset\Z_+$ and $1/x^*\in\Q$, let $p/q$ be the simplest fractional representation of $1/x^*$, i.e., $p/q=1/x^*$ and $\gcd(p,q)=1$. Suppose $k\in\N$ is the smallest such that there exists $U\in\Q$ satisfying $U/k=1/x^*$ and $Ub_e\in\Z_+$ for all $e\in E$, then $U=p/\gcd(q,\{b_e\}_{e\in E})$ and $k=Ux^*$.
\end{restatable}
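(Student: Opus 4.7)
The plan is to reduce everything to elementary divisibility arguments on the integers, since the relation $U/k = p/q$ with $\gcd(p,q)=1$ forces a clean number-theoretic structure. First I would write $U = kp/q$ directly from $U/k = 1/x^* = p/q$, so that the integrality condition $Ub_e \in \Z_+$ becomes $q \mid k b_e$ for every $e \in E$. Because $\gcd(p,q)=1$, the factor of $p$ plays no role in divisibility by $q$, which is why this simplification is valid.

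Next I would collapse the family of conditions $\{q \mid kb_e\}_{e \in E}$ into a single condition. The key elementary fact is that for fixed $k$, $q$ divides $kb_e$ for every $e$ if and only if $q$ divides $\gcd_{e \in E}(kb_e) = k\cdot\gcd_{e\in E} b_e$. Writing $g := \gcd_{e \in E} b_e$ and $d := \gcd(q,g) = \gcd(q,\{b_e\}_{e\in E})$, standard manipulations give $q \mid kg \iff (q/d) \mid k(g/d) \iff (q/d) \mid k$, where the last step uses $\gcd(q/d, g/d) = 1$. Hence the set of admissible $k$ is exactly the positive multiples of $q/d$, and the smallest such $k$ is $q/d$.

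Having pinned down $k = q/\gcd(q,\{b_e\}_{e\in E})$, the value of $U$ follows immediately from $U = kp/q = p/d = p/\gcd(q,\{b_e\}_{e\in E})$, which is the claimed formula. Finally, to verify the identity $k = Ux^*$, I would just compute $Ux^* = (p/d)\cdot(q/p) = q/d = k$, closing the proof.

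There is no real obstacle here; the only spot that deserves care is the step collapsing the per-edge divisibility conditions into one condition on $k \cdot \gcd_e b_e$, and the reduction $(q/d) \mid k(g/d) \Leftrightarrow (q/d) \mid k$, which must invoke coprimality of $q/d$ and $g/d$. Everything else is mechanical rewriting. The result confirms that the scaling $U$ and tree count $k$ used by the algorithm are the unique minimal choices compatible with integer edge capacities and the target ratio $1/x^*$.
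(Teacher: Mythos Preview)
Your proposal is correct and follows essentially the same elementary divisibility argument as the paper. The only cosmetic difference is that the paper parametrizes $U=\alpha/\beta$ in lowest terms and minimizes $U$ directly (finding the smallest $\alpha$ and largest $\beta$), whereas you substitute $U=kp/q$ and minimize $k$ directly; both routes reduce to the same gcd computation and arrive at $U=p/\gcd(q,\{b_e\})$, $k=q/\gcd(q,\{b_e\})$.
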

In Figure~\ref{app:fig:topo}'s example, we have $1/x^*\!=\!|S^*\cap V_c|/B_G^+(S^*)\!=\!4/4b\!=\!1/b$ and thus $U=1/b,k=1$.

Consider the digraph $G(\{Ub_e\})$. Each edge of $G(\{Ub_e\})$ has integer capacity. We will show that there exists a family of edge-disjoint out-trees $\{T_{u,i}\}_{u\in V_c,i\in[k]}$ in $G(\{Ub_e\})$ with $T_{u,i}$ rooted at $u$ and $\cV(T_{u,i})\supseteq V_c$. Here, $[k]=\{1,2,\dots,k\}$ and $\cV(T_{u,i})$ denotes the vertex set of $T_{u,i}$. We use the following theorem proven by Bang-Jensen et al.~\cite{bang-jensen}:
\begin{restatable}[Bang-Jensen et al.~\cite{bang-jensen}]{theorem}{thmbangjensentrees}\label{app:thm:bang-jensen}
    Let $n\geq 1$ and $D=(V,E)$ be a digraph with a special node $s$. Let $T'=\{v\ |\ v\in V-s,d^-(v)<d^+(v)\}$. If $\lambda(s,v;D)\geq n$ for all $v\in T'$, then there is a family $\cF$ of edge-disjoint out-trees rooted at $s$ such that every $v\in V$ belongs to at least $\min(n,\lambda(s,v;D))$ number of out-trees.
\end{restatable}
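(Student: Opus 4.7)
The plan is to reduce this statement to Edmonds' classical theorem on packing edge-disjoint spanning arborescences, which guarantees $n$ edge-disjoint spanning out-trees rooted at $s$ whenever $\lambda(s,v;D) \geq n$ for every $v \in V-s$. Our hypothesis is strictly weaker—only the vertices in $T'$ are required to satisfy the connectivity bound—so the strategy is to augment $D$ with fictitious edges, invoke Edmonds' theorem on the augmented graph, and then carefully strip the auxiliary material away.

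First, form an augmented digraph $D^+$ by adding, for each $v \in V-s$ with $\lambda(s,v;D) < n$, exactly $n - \lambda(s,v;D)$ parallel \emph{phantom} edges from $s$ to $v$. By hypothesis every such $v$ lies outside $T'$, so $d^-(v) \geq d^+(v)$ in $D$; a routine min-cut check then yields $\lambda(s,v;D^+) \geq n$ for every $v \in V - s$, so Edmonds' theorem produces $n$ edge-disjoint spanning out-trees $T_1,\dots,T_n$ rooted at $s$ in $D^+$. Among all such $n$-tuples, fix one that uses the fewest phantom edges in total.

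Next, for each $T_i$ let $T_i'$ be the out-arborescence obtained by deleting the phantom edges and keeping only the connected component containing $s$. The family $\cF := \{T_i'\}_{i=1}^n$ is edge-disjoint and lives in $D$. To verify the coverage bound, fix $v \in V$ and let $c(v)$ denote the number of indices $i$ for which the $s$-to-$v$ path in $T_i$ avoids every phantom edge; then $v \in T_i'$ exactly when this holds. We claim $c(v) \geq \min(n,\lambda(s,v;D))$. If instead $c(v) < \min(n,\lambda(s,v;D))$, then for some $i$ the path from $s$ to $v$ in $T_i$ enters a phantom edge $(s,u)$ before reaching $v$; the slack $\lambda(s,v;D) > c(v)$ supplies, via a Menger-type augmenting computation in the residual graph obtained by subtracting the $D$-edges used in the other $n-1$ trees, an $s$-to-$u$ path in $D$ conflicting with none of them. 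Rewiring $T_i$ along this path strictly decreases the number of phantom edges used, contradicting the choice of the $n$-tuple.

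The main obstacle is making the swap argument in the last paragraph rigorous: one must ensure that rerouting $T_i$ along the augmenting $D$-path keeps $T_i$ a spanning out-tree of $D^+$ while simultaneously preserving edge-disjointness with the other $n-1$ trees. The cleanest route is to invoke the matroid-partition viewpoint on arborescence packings, where the swap becomes a basis exchange in the arborescence matroid and the required augmenting structure exists by the matroid union theorem. An alternative is an inductive splitting-off argument: show that at $s$ one can split off a pair of edges preserving $\lambda(s,v;\cdot) \geq \min(n,\lambda(s,v;D))$ for every $v$, reducing to a strictly smaller instance and then lifting the resulting forest back.
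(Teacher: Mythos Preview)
The paper does not prove this theorem; it is quoted verbatim from Bang--Jensen, Frank, and Jackson and used as a black box to derive Theorem~\ref{app:thm:existtrees}. So there is no in-paper proof to compare against, and your proposal must be judged on its own.

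Your reduction to Edmonds' theorem via phantom edges is a natural idea, and the first four steps are sound: adding $n-\lambda(s,v;D)$ parallel $(s,v)$ edges for each deficient $v$ does bring every $\lambda(s,\cdot;D^+)$ up to $n$, Edmonds then yields $n$ edge-disjoint spanning out-trees, and pruning phantom edges leaves an edge-disjoint family of out-arborescences in $D$. The gap is exactly where you flag it, and it is real rather than cosmetic. Your swap step claims that $\lambda(s,v;D)>c(v)$ produces an $s$-to-$u$ path in $D$ disjoint from the $D$-edges of the other $n-1$ trees, but the slack is at $v$, not at $u$, and the residual $D\setminus E'$ can easily have $\lambda(s,u)=0$: the $n-1$ other trees may consume all $s$-$u$ connectivity in $D$ (indeed $\lambda(s,u;D)<n$ since $u$ received phantoms), and nothing in your minimality hypothesis prevents this. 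Even if an $s$-$u$ path existed, splicing it into $T_i$ while keeping $T_i$ an out-tree and edge-disjoint from the rest is not a single basis exchange in any obvious matroid; the arborescence-packing matroid structure lives on the full edge set, not on paths, and a local rewiring can cascade.

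Of the two escape routes you mention, the splitting-off one is the one that actually works and is how Bang--Jensen, Frank, and Jackson prove the result: one shows that at $s$ (or at a suitable auxiliary vertex) a pair of edges can be split off preserving $\min(n,\lambda(s,v))$ for every $v$, inducts on the edge count, and lifts back. The degree condition on $T'$ is what makes the splitting-off go through, and it is used there rather than in the min-cut check where you currently invoke it. If you want to complete the proof, abandon the phantom-edge extremal argument and develop the splitting-off induction directly.
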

Because we see integer capacity as the number of multiedges, here, the total in-degree $d^-(v)$ and out-degree $d^+(v)$ are simply the total ingress and egress capacity of $v$ in $G(\{Ub_e\})$. $\lambda(x,y;D)$ denotes the edge-connectivity from $x$ to $y$ in $D$, i.e., $\lambda(x,y;D)=\min_{x\in A,y\in\bar{A}}c(A,\bar{A};D)$. By min-cut theorem, $\lambda(x,y;D)$ is also equal to the maxflow from $x$ to $y$. Theorem~\ref{app:thm:bang-jensen} leads to the following:
\begin{restatable}{theorem}{thmexisttrees}\label{app:thm:existtrees}
    Given integer-capacity digraph $D=(V_s\cup V_c,E)$ and $k\in\N$, there exists a family of edge-disjoint out-trees $\{T_{u,i}\}_{u\in V_c,i\in[k]}$ in $D$ with $T_{u,i}$ rooted at $u$ and $\cV(T_{u,i})\supseteq V_c$ if and only if \mbox{$\min_{v\in V_c}F(s,v;\vec{D}_k)\geq|V_c|k$.}
\end{restatable}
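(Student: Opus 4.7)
The statement is an equivalence, so I would prove each direction separately.

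For the forward direction ($\Rightarrow$), given the family $\{T_{u,i}\}$, I would construct, for any target $v\in V_c$, an $s$-to-$v$ flow of value $|V_c|k$ in $\vec{D}_k$ as follows: for each $(u,i)\in V_c\times[k]$, push one unit of flow along $(s,u)$ followed by the unique directed path from $u$ to $v$ inside $T_{u,i}$. Each edge $(s,u)$ then carries exactly $k$ units, matching its capacity, and each edge of $D$ carries at most as many units as the number of trees using it, which respects the integer capacities by edge-disjointness. Hence $F(s,v;\vec{D}_k)\geq |V_c|k$ for every $v\in V_c$.

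For the backward direction ($\Leftarrow$), the plan is to invoke Theorem~\ref{app:thm:bang-jensen} on $\vec{D}_k$ with parameter $n=|V_c|k$ and then repackage the resulting trees. First I would verify the hypothesis: the set $T'$ of vertices (other than $s$) with $d^-<d^+$ is empty. Using the standing Eulerian assumption on $D$, every compute node $v\in V_c$ satisfies $d^-(v)=d^+_D(v)+k>d^+(v)$ in $\vec{D}_k$, and every switch node $w$ satisfies $d^-(w)=d^+(w)$; thus $T'=\emptyset$ and the hypothesis holds vacuously. Applying Theorem~\ref{app:thm:bang-jensen} then produces a family $\cF$ of edge-disjoint $s$-rooted out-trees in $\vec{D}_k$ in which every $v\in V_c$ belongs to at least $\min(|V_c|k,\lambda(s,v;\vec{D}_k))=|V_c|k$ trees, by the flow hypothesis.

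The remaining step, which I expect to be the main obstacle, is extracting the desired family from $\cF$ via a tight double-counting argument. Each non-trivial tree in $\cF$ uses at least one of the $|V_c|k$ units of capacity out of $s$, so $|\cF|\leq |V_c|k$; each tree also contains at most $|V_c|$ compute nodes, so total compute-node coverage is at most $|V_c|^2 k$. On the other hand, the per-vertex lower bound forces the total coverage to be at least $|V_c|^2 k$, so both inequalities must be tight. This pins down that $|\cF|=|V_c|k$, every tree spans $V_c$, and every tree uses exactly one edge out of $s$. Removing $s$ and its single outgoing edge from each $T\in\cF$ yields an out-tree in $D$ rooted at that unique child of $s$ (which lies in $V_c$) whose vertex set still contains all of $V_c$. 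Grouping by root, the capacity-$k$ edge $(s,u)$ contributes exactly $k$ trees rooted at $u$, delivering the family $\{T_{u,i}\}_{u\in V_c,i\in[k]}$; edge-disjointness in $D$ is inherited from edge-disjointness in $\vec{D}_k$.
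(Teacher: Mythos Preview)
Your proof is correct and follows essentially the same approach as the paper: the forward direction is identical, and the backward direction likewise applies Theorem~\ref{app:thm:bang-jensen} to $\vec{D}_k$ with $n=|V_c|k$ and then extracts the desired family by a tightness argument on $d^+(s)=|V_c|k$. Your verification that $T'=\emptyset$ under the standing Eulerian assumption is in fact cleaner than the paper's (which asserts $T'=V_c$), and your explicit double-counting to force every tree in $\cF$ to span $V_c$ and use exactly one edge out of $s$ spells out what the paper's proof leaves implicit.
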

Consider the flow network $\vec{G}_k(\{Ub_e\})$. It is trivial to see that each edge in $\vec{G}_k(\{Ub_e\})$ has exactly $U$ times the capacity as in $\vec{G}_{x^*}$, including the edges incident from $s$. Thus, we have
\begin{align*}
    \min_{v\in V_c} F(s,v;\vec{G}_k(\{Ub_e\}))
    &=U\cdot\min_{v\in V_c}F(s,v;\vec{G}_{x^*})\\
    &\geq U\cdot|V_c|x^*\\
    &=|V_c|k.
\end{align*}
By Theorem~\ref{app:thm:existtrees}, there exists a family of edge-disjoint out-trees $\{T_{u,i}\}_{u\in V_c,i\in[k]}$ in $G(\{Ub_e\})$ with $T_{u,i}$ rooted at $u$ and $\cV(T_{u,i})\supseteq V_c$. Observe that for any edge $e \in E$, at most $Ub_e$ number of trees from $\{T_{u,i}\}_{u\in V_c,i\in[k]}$ use edge $e$. For allgather, we make each tree broadcast $1/k$ of the root's data shard, then the communication time is
\begin{multline*}
    T_{\text{comm}}\leq\max_{e\in E}\frac{M}{Nk}\cdot\frac{Ub_e}{b_e}=\frac{M}{N}\cdot\frac{U}{k}\\
    =\frac{M}{N}\cdot\frac{1}{x^*}=\frac{M}{N}\max_{S\subset V,S\not\supseteq V_c}\frac{|S\cap V_c|}{B^+_G(S)}
\end{multline*}
reaching the optimality~(\ref{eq:optimality}) given topology $G$.

At this point, one may be tempted to construct and use $\{T_{u,i}\}_{u\in V_c,i\in[k]}$ to perform allgather. However, because $T_{u,i}$ can be arbitrary tree in $G(\{Ub_e\})$, it may force switch nodes to broadcast like $v_0^s,v_1^s$ in Figure~\ref{app:fig:topo_tree}. In the following section, we introduce a way to remove switch nodes from $G(\{Ub_e\})$, while preserving the existence of out-trees with the same communication time. Afterward, we construct out-trees in the compute-node-only topology and map the communications back to $G(\{Ub_e\})$. Thus, we are able to construct a schedule with the same optimal performance but without switch-node broadcast.

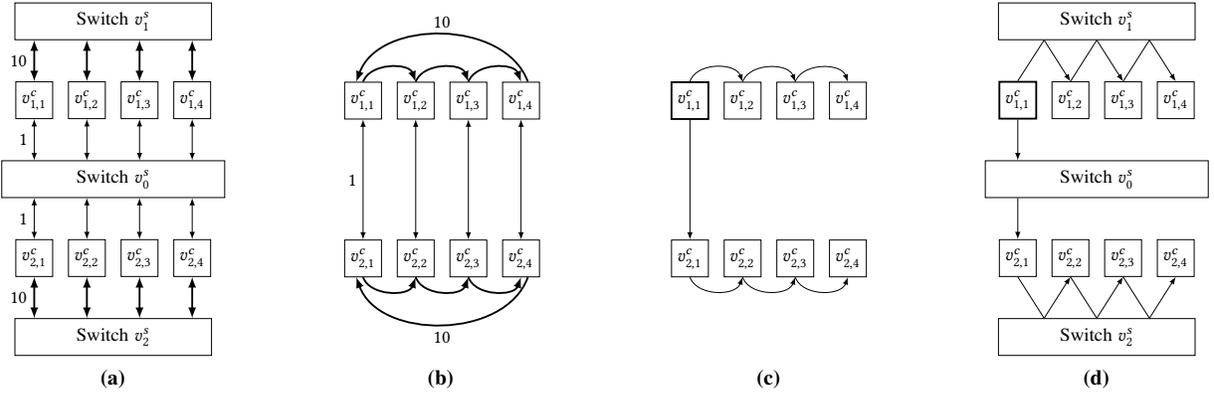
\begin{figure*}[t]
    \centering
    \begin{subfigure}{0.24\textwidth}
        \centering
        \scalebox{0.7}{
\begin{tikzpicture}[node/.style={rectangle,draw=black,minimum size=7mm,align=center}]
    \node[node]	(10)	at (-1.5,2.5-1) {$v^c_{1,1}$};
    \node[node]	(11)	at (-0.5,2.5-1) {$v^c_{1,2}$};
    \node[node]	(12)	at (0.5,2.5-1) {$v^c_{1,3}$};
    \node[node]	(13)	at (1.5,2.5-1) {$v^c_{1,4}$};
    \node[node,text width=3.5cm]	(14)	at (0,0.5+2.5) {Switch $v^s_1$};
    
    \path[latex-latex,anchor=east,line width=1pt] (10.north) edge node {$10$} (10.north|-14.south);
    \path[latex-latex,line width=1pt] (11.north) edge (11.north|-14.south);
    \path[latex-latex,line width=1pt] (12.north) edge (12.north|-14.south);
    \path[latex-latex,line width=1pt] (13.north) edge (13.north|-14.south);
    
    \node[node]	(0)	at (-1.5,1-2.5) {$v^c_{2,1}$};
    \node[node]	(1)	at (-0.5,1-2.5) {$v^c_{2,2}$};
    \node[node]	(2)	at (0.5,1-2.5) {$v^c_{2,3}$};
    \node[node]	(3)	at (1.5,1-2.5) {$v^c_{2,4}$};
    \node[node,text width=3.5cm]	(4)	at (0,-0.5-2.5) {Switch $v^s_2$};
    
    \path[latex-latex,anchor=east,line width=1pt] (0.south) edge node {$10$} (0.south|-4.north);
    \path[latex-latex,line width=1pt] (1.south) edge (1.south|-4.north);
    \path[latex-latex,line width=1pt] (2.south) edge (2.south|-4.north);
    \path[latex-latex,line width=1pt] (3.south) edge (3.south|-4.north);
            
    \node[node,text width=4cm]	(20)	at (0,0) {Switch $v^s_0$};
    
    \path[latex-latex,anchor=east] (0.north) edge node {$1$} (0.north|-20.south);
    \path[latex-latex] (1.north) edge (1.north|-20.south);
    \path[latex-latex] (2.north) edge (2.north|-20.south);
    \path[latex-latex] (3.north) edge (3.north|-20.south);
    
    \path[latex-latex,anchor=east] (10.south) edge node {$1$} (10.south|-20.north);
    \path[latex-latex] (11.south) edge (11.south|-20.north);
    \path[latex-latex] (12.south) edge (12.south|-20.north);
    \path[latex-latex] (13.south) edge (13.south|-20.north);
\end{tikzpicture}
}
        \caption{}
        \label{app:fig:topo2}
    \end{subfigure}
    \begin{subfigure}{0.24\textwidth}
        \centering
        \scalebox{0.7}{
\begin{tikzpicture}[node/.style={rectangle,draw=black,minimum size=7mm,align=center}]
    \node[node]	(10)	at (-1.5,2.5-1) {$v^c_{1,1}$};
    \node[node]	(11)	at (-0.5,2.5-1) {$v^c_{1,2}$};
    \node[node]	(12)	at (0.5,2.5-1) {$v^c_{1,3}$};
    \node[node]	(13)	at (1.5,2.5-1) {$v^c_{1,4}$};

    \node[node,text width=3.5cm,draw=none]	(14)	at (0,0.5+2.5) {};

    \draw[-latex,line width=1pt] (10.north) to[out=70, in=110] (11.north);
    \draw[-latex,line width=1pt] (11.north) to[out=70, in=110] (12.north);
    \draw[-latex,line width=1pt] (12.north) to[out=70, in=110] (13.north);
    \draw[-latex, anchor=south,line width=1pt] (13.70) to[out=110, in=70] node {$10$} (10.110);
    
    \node[node]	(0)	at (-1.5,1-2.5) {$v^c_{2,1}$};
    \node[node]	(1)	at (-0.5,1-2.5) {$v^c_{2,2}$};
    \node[node]	(2)	at (0.5,1-2.5) {$v^c_{2,3}$};
    \node[node]	(3)	at (1.5,1-2.5) {$v^c_{2,4}$};
    
    \node[node,text width=3.5cm,draw=none]	(4)	at (0,-0.5-2.5) {};
    
    \draw[-latex,line width=1pt] (0.south) to[out=-70, in=-110] (1.south);
    \draw[-latex,line width=1pt] (1.south) to[out=-70, in=-110] (2.south);
    \draw[-latex,line width=1pt] (2.south) to[out=-70, in=-110] (3.south);
    \draw[-latex, anchor=north,line width=1pt] (3.-70) to[out=-110, in=-70] node {$10$} (0.-110);
    
    \path[latex-latex, anchor=east] (0.north) edge node {$1$} (0.north|-10.south);
    \path[latex-latex] (1.north) edge (1.north|-11.south);
    \path[latex-latex] (2.north) edge (2.north|-12.south);
    \path[latex-latex] (3.north) edge (3.north|-13.south);
\end{tikzpicture}
}
        \caption{}
        \label{app:fig:topo_split}
    \end{subfigure}
    \begin{subfigure}{0.24\textwidth}
        \centering
        \scalebox{0.7}{
\begin{tikzpicture}[node/.style={rectangle,draw=black,minimum size=7mm,align=center}]
    \node[node, line width=1pt]	(10)	at (-1.5,2.5-1) {$v^c_{1,1}$};
    \node[node]	(11)	at (-0.5,2.5-1) {$v^c_{1,2}$};
    \node[node]	(12)	at (0.5,2.5-1) {$v^c_{1,3}$};
    \node[node]	(13)	at (1.5,2.5-1) {$v^c_{1,4}$};

    \node[node,text width=3.5cm,draw=none]	(14)	at (0,0.5+2.5) {};

    \draw[-latex] (10.north) to[out=70, in=110] (11.north);
    \draw[-latex] (11.north) to[out=70, in=110] (12.north);
    \draw[-latex] (12.north) to[out=70, in=110] (13.north);
    \draw[-latex, draw=none] (13.70) to[out=110, in=70] (10.110);
    
    \node[node]	(0)	at (-1.5,1-2.5) {$v^c_{2,1}$};
    \node[node]	(1)	at (-0.5,1-2.5) {$v^c_{2,2}$};
    \node[node]	(2)	at (0.5,1-2.5) {$v^c_{2,3}$};
    \node[node]	(3)	at (1.5,1-2.5) {$v^c_{2,4}$};

    \node[node,text width=3.5cm,draw=none]	(4)	at (0,-0.5-2.5) {};
    
    \draw[-latex] (0.south) to[out=-70, in=-110] (1.south);
    \draw[-latex] (1.south) to[out=-70, in=-110] (2.south);
    \draw[-latex] (2.south) to[out=-70, in=-110] (3.south);
    \draw[-latex, draw=none] (3.-70) to[out=-110, in=-70] (0.-110);
    
    \path[latex-] (0.north) edge (0.north|-10.south);
\end{tikzpicture}
}
        \caption{}
        \label{app:fig:topo_split_tree}
    \end{subfigure}
    \begin{subfigure}{0.24\textwidth}
        \centering
        \scalebox{0.7}{
\begin{tikzpicture}[node/.style={rectangle,draw=black,minimum size=7mm,align=center}]
    \node[node, line width=1pt]	(10)	at (-1.5,2.5-1) {$v^c_{1,1}$};
    \node[node]	(11)	at (-0.5,2.5-1) {$v^c_{1,2}$};
    \node[node]	(12)	at (0.5,2.5-1) {$v^c_{1,3}$};
    \node[node]	(13)	at (1.5,2.5-1) {$v^c_{1,4}$};
    \node[node,text width=3.5cm]	(14)	at (0,0.5+2.5) {Switch $v^s_1$};
    
    \draw (10.north) -- ($(14.south -| -1, 0)$);
    \draw[-latex] ($(14.south -| -1, 0)$) -- (11.north);
    
    \draw (11.north) -- ($(14.south -| 0, 0)$);
    \draw[-latex] ($(14.south -| 0, 0)$) -- (12.north);
    
    \draw (12.north) -- ($(14.south -| 1, 0)$);
    \draw[-latex] ($(14.south -| 1, 0)$) -- (13.north);

    \node[node]	(0)	at (-1.5,1-2.5) {$v^c_{2,1}$};
    \node[node]	(1)	at (-0.5,1-2.5) {$v^c_{2,2}$};
    \node[node]	(2)	at (0.5,1-2.5) {$v^c_{2,3}$};
    \node[node]	(3)	at (1.5,1-2.5) {$v^c_{2,4}$};
    \node[node,text width=3.5cm]	(4)	at (0,-0.5-2.5) {Switch $v^s_2$};
            
    \node[node,text width=4cm]	(20)	at (0,0) {Switch $v^s_0$};
    
    \path[latex-] (0.north) edge (0.north|-20.south);
    \path[-latex] (10.south) edge (10.south|-20.north);
    
    \draw (0.south) -- ($(4.north -| -1, 0)$);
    \draw[-latex] ($(4.north -| -1, 0)$) -- (1.south);
    
    \draw (1.south) -- ($(4.north -| 0, 0)$);
    \draw[-latex] ($(4.north -| 0, 0)$) -- (2.south);
    
    \draw (2.south) -- ($(4.north -| 1, 0)$);
    \draw[-latex] ($(4.north -| 1, 0)$) -- (3.south);
\end{tikzpicture}
}
        \caption{}
        \label{app:fig:topo_tree2}
    \end{subfigure}
    \caption{Different stages of the topology in schedule construction. \textnormal{Figure (a) shows the topology of $G(\{Ub_e\})$. Note that the link capacities no longer have $b$ as a multiplier. Figure (b) shows the topology $G^*$ after edge splitting removes all switch nodes. Figure (c) shows a spanning tree constructed in $G^*$. Figure (d) shows the routings in $G$ corresponding to the spanning tree.}}
    \label{app:fig:states}
\end{figure*}

\subsection{Edge Splitting}\label{app:sec:edgesplit}

{\small
\begin{algorithm}[tb]
    \small
    \SetInd{0.4em}{0.5em}
    \caption{Remove Switch Nodes}
    \label{app:algo:removeswitch}
    \SetAlgoLined
    \linespread{0.9}\selectfont
    \DontPrintSemicolon
    \KwIn{Integer-capacity Eulerian digraph $D=(V_s\cup V_c,E)$ and $k\in\N$.}
    \KwOut{Direct-connect digraph $D^*=(V_c,E^*)$ and path recovery table \emap.}
    \Begin{
        Initialize table \emap\;
        \ForEach{\normalfont switch node $w\in V_s$}{
            \ForEach{\normalfont egress edge $f=(w,t)\in E$}{
                \ForEach{\normalfont ingress edge $e=(u,w)\in E$}{
                    Compute $\gamma$ as in (\ref{app:eq:maxsplitoff}).\;
                    \If{$\gamma>0$}{
                        Decrease $f$'s and $e$'s capacity by $\gamma$. Remove $e$ if its capacity reaches 0.\;
                        Increase capacity of $(u,t)$ by $\gamma$. Add the edge if $(u,t)\notin E$.\;
                        $\emap[(u,t)][w]\leftarrow\emap[(u,t)][w]+\gamma$\;
                        \lIf{\normalfont $f$'s capacity reaches 0}{\textbf{break}}
                    }
                }
                \tcp{\normalfont\textit{Edge $f$ should have 0 capacity at this point.}}
                Remove edge $f$ from $D$.\;
            }
            \tcp{\normalfont\textit{Node $w$ should be isolated at this point.}}
            Remove node $w$ from $D$.\;
        }
        \Return{\normalfont the latest $D$ as $D^*$ and table \emap}
    }
\end{algorithm}
}

To remove the switch nodes from $G(\{Ub_e\})$, we apply a technique called \textit{edge splitting}. Consider a vertex $w$ and two incident edges $(u,w),(w,t)$. The operation of edge splitting is to replace $(u,w),(w,t)$ by a direct edge $(u,t)$ while maintaining edge-connectivities in the graph. In our context, $w$ is a switch node. We continuously split off one capacity of an incoming edge to $w$ and one capacity of an outgoing edge from $w$ until $w$ is isolated and can be removed from the graph. Because the edge-connectivities are maintained, we are able to show that $\min_{v\in V_c} F(s,v;\vec{G}_k(\{Ub_e\}))\geq|V_c|k$ is maintained in the process. Thus, by Theorem~\ref{app:thm:existtrees}, the existence of spanning trees with the same optimal performance is also preserved.

We start with the following theorem from Bang-Jensen et al.~\cite{bang-jensen}. The theorem was originally proven by Frank~\cite{frank} and Jackson~\cite{jackson}.
\begin{restatable}[Bang-Jensen et al.~\cite{bang-jensen}]{theorem}{thmbangjensenedgesplit}\label{app:thm:BJedgesplitting}
    Let $D=(V+w,E)$ be a directed Eulerian graph, that is, $d^-(x)=d^+(x)$ for every node $x$ of $D$. Then, for every edge $f=(w,t)$ there is an edge $e=(u,w)$ such that $\lambda(x,y;D^{ef})=\lambda(x,y;D)$ for every $x,y\in V$, where $D^{ef}$ is the resulting graph obtained by splitting off $e$ and $f$ in $D$.
\end{restatable}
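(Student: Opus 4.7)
The plan is to prove this classical splitting-off theorem of Frank~\cite{frank} and Jackson~\cite{jackson} by contradiction, exploiting submodularity of the directed cut function together with the Eulerian hypothesis. First, a direct case analysis on where $w,u,t$ sit relative to a subset $S\subseteq V\cup\{w\}$ shows that splitting off $e=(u,w)$ and $f=(w,t)$ changes the outgoing cut capacity $c(S,(V\cup\{w\})\setminus S;D)$ by at most one, and decreases it (by exactly one) precisely when $w$ is on the opposite side of $S$ from both $u$ and $t$. Call such an $S$ an \emph{$(e,f)$-dangerous cut}.

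Fix $f=(w,t)$ and suppose, for contradiction, that no ingress edge $e=(u,w)$ yields a valid splitting. Then for each ingress edge $e_i=(u_i,w)$ there exists a dangerous cut $S_i$ and a witness pair $x_i,y_i\in V$ with $x_i\in S_i$, $y_i\notin S_i$, and $c(S_i)=\lambda(x_i,y_i;D)$; otherwise splitting with $e_i$ would leave every $V\times V$ connectivity intact. Because $D$ is Eulerian, we have $c(S)=c((V\cup\{w\})\setminus S)$, and hence $\lambda(x,y;D)=\lambda(y,x;D)$; by complementing each $S_i$ if necessary I can therefore assume $w\in S_i$, which forces $u_i,t\notin S_i$ for every $i$.

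The heart of the argument is an uncrossing step. Given two tight dangerous cuts $S_1,S_2$, submodularity
\[
    c(S_1)+c(S_2)\;\geq\;c(S_1\cap S_2)+c(S_1\cup S_2),
\]
combined with lower bounds $c(S_1\cap S_2)\geq\lambda(x',y';D)$ and $c(S_1\cup S_2)\geq\lambda(x'',y'';D)$ for witness pairs chosen from $\{x_1,y_1,x_2,y_2\}$ according to their distribution across the Venn regions $S_1\cap S_2$, $S_1\setminus S_2$, $S_2\setminus S_1$, $(V\cup\{w\})\setminus(S_1\cup S_2)$, forces equality and thus makes both the intersection and the union themselves tight dangerous cuts. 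Iterating this uncrossing over all ingress edges of $w$ yields a single tight dangerous cut $S^{\star}$ with $w\in S^{\star}$ that excludes $t$ together with every tail $u_i$ of an ingress edge of $w$. But then all $d^-(w)$ ingress edges of $w$ enter $S^{\star}$ at $w$; comparing this count with the Eulerian balance $c^+(S^{\star})=c^-(S^{\star})$ and the tightness of $S^{\star}$ for some $V\times V$ pair produces the contradiction, completing the proof.

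The main obstacle will be the bookkeeping in the uncrossing step: in each configuration one must pick a witness pair $(x',y')\in V\times V$ (never $w$) that certifies tightness of $S_1\cap S_2$ or $S_1\cup S_2$, possibly after reorienting via $\lambda(x,y;D)=\lambda(y,x;D)$, and must verify that at least one of the two new cuts stays in the normalized ``$w\in S$'' type so that the induction variant is preserved. A routine but careful four-way case split over the Venn regions discharges all subcases; once this is in hand the final degree-count contradiction at $w$ is immediate.
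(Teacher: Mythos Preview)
The paper does not prove this theorem at all: it is quoted as a result of Bang-Jensen et al.\ (originally Frank and Jackson) and immediately used as a black box to derive Theorem~\ref{app:thm:edgesplit}. There is therefore no paper-proof to compare against.

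That said, your plan is the standard uncrossing argument and is sound in outline. The identification of dangerous cuts, the Eulerian normalisation to $w\in S_i$, and the final contradiction via $c^+(S^\star\setminus\{w\})<c^+(S^\star)=\lambda(x^\star,y^\star)$ (using that $(w,t)$ leaves $S^\star$ while no edge enters $w$ from inside $S^\star$) are all correct. The one place I would not call ``routine'' is the crossed-witness subcase of the uncrossing: when $x_1\in S_1\setminus S_2$ and $x_2\in S_2\setminus S_1$, the ordinary submodular inequality need not certify that $S_1\cap S_2$ is tight, because $S_1\cap S_2$ may collapse to $\{w\}$ and then carries no $V$-witness at all. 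The fix, which is where the Eulerian hypothesis does real work beyond the symmetry $\lambda(x,y)=\lambda(y,x)$, is to invoke the companion inequality
\[
c^+(S_1)+c^+(S_2)\ \ge\ c^+(S_1\setminus S_2)+c^+(S_2\setminus S_1),
\]
valid here because $c^+=c^-$ on every set. Equivalently, one fixes an inclusion-minimal tight dangerous set $S^\star$, shows it must contain some in-neighbour $u$ of $w$ (else the final contradiction already fires), and then uncrosses $S^\star$ against the witness $S_u$ for $u$ to contradict minimality. With that subcase handled your proof goes through.
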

In our case, we are not concerned with any edge-connectivity other than from $s$. In other words, we allow $\lambda(x,y;D^{ef})\neq\lambda(x,y;D)$ as long as $\min_{v\in V_c}F(s,v;\vec{D}^{ef}_{k})=\min_{v\in V_c}\lambda(s,v;\vec{D}^{ef}_{k})\geq|V_c|k$ holds after splitting. Theorem~\ref{app:thm:BJedgesplitting} is used to derive the following theorem:
\begin{restatable}{theorem}{thmedgesplit}\label{app:thm:edgesplit}
    Given integer-capacity Eulerian digraph $D=(V_s\cup V_c,E)$ and $k\in\N$ with $\min_{v\in V_c}F(s,v;\vec{D}_k)\geq|V_c|k$, for every edge $f=(w,t)$ $(w\in V_s)$ there is an edge $e=(u,w)$ such that $\min_{v\in V_c}F(s,v;\vec{D}^{ef}_{k})\geq|V_c|k$.
\end{restatable}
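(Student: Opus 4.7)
The plan is to reduce Theorem~\ref{app:thm:edgesplit} to the classical splitting theorem of Bang-Jensen et al.\ (Theorem~\ref{app:thm:BJedgesplitting}). The obstacle is that Bang-Jensen requires an Eulerian host graph, whereas $\vec{D}_k$ is not Eulerian: the source $s$ has out-degree $|V_c|k$ but in-degree $0$. My idea is to symmetrize $\vec{D}_k$ into an Eulerian graph $\widetilde D$ by adding, for every $v\in V_c$, a reverse arc $(v,s)$ of capacity $k$. Since $D$ is Eulerian and the added arcs contribute $|V_c|k$ on each side at $s$ and $k$ on each side at every $v\in V_c$, the resulting $\widetilde D$ is Eulerian; crucially, no new arcs are incident to any switch node.

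Next I would verify that the reverse arcs never contribute to any directed $s$-to-$v$ cut. For any bipartition $(\{s\}\cup A,\bar A)$ with $v\in\bar A\cap V_c$, each reverse arc $(v',s)$ either stays inside $\{s\}\cup A$ (when $v'\in A$) or crosses $\bar A\to\{s\}\cup A$ in the wrong direction (when $v'\in\bar A$). Thus the directed cut capacity, and hence $\lambda(s,v;\cdot)$, agrees between $\vec{D}_k$ and $\widetilde D$, and by the same argument between $\vec{D}_k^{ef}$ and $\widetilde D^{ef}$ once the splitting has been performed.

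Then I would invoke Theorem~\ref{app:thm:BJedgesplitting} on $\widetilde D$ at the switch node $w$: for the given $f=(w,t)$ there exists an incoming edge $e=(u,w)$ such that $\lambda(x,y;\widetilde D^{ef})=\lambda(x,y;\widetilde D)$ for every pair of vertices $x,y$ other than $w$, in particular for $x=s$ and $y\in V_c$. The tail $u$ must lie in $V$ rather than be $s$, because $\widetilde D$ has no arc from $s$ to any switch node, so $e$ is a genuine edge of the original $D$. Chaining the four equalities,
\[
  \lambda(s,v;\vec{D}_k^{ef}) = \lambda(s,v;\widetilde D^{ef}) = \lambda(s,v;\widetilde D) = \lambda(s,v;\vec{D}_k) \ge |V_c|k
\]
for every $v\in V_c$, which is exactly what is required.

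The main technical hurdle is the Eulerian mismatch between the hypothesis of Bang-Jensen and the one-sided flow condition we actually care about; once the symmetrization trick is identified, the rest is routine cut-by-cut bookkeeping together with the small observation that the arc produced by Bang-Jensen's theorem necessarily lives in $D$. I do not anticipate further difficulties in converting this outline into a fully rigorous proof.
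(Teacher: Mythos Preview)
Your proposal is correct and follows essentially the same approach as the paper: both symmetrize $\vec{D}_k$ into an Eulerian graph by adding $k$-capacity arcs $(v,s)$ for each $v\in V_c$, apply Theorem~\ref{app:thm:BJedgesplitting} at the switch node $w$, and then argue that these reverse arcs do not affect $\lambda(s,v;\cdot)$. Your write-up is in fact slightly more careful than the paper's, since you explicitly justify why the edge $e=(u,w)$ produced by Bang-Jensen must already lie in $D$ (no arc from $s$ enters a switch node), a point the paper leaves implicit.
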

Note that here, $f$ and $e$ each represent one of the multiedges (or one capacity) between $w,t$ and $u,w$, respectively. Observe that edge splitting does not affect a graph being Eulerian. Thus, in $G(\{Ub_e\})$, we can iteratively replace edges $e=(u,w),f=(w,t)$ by $(u,t)$ for each switch node $w\in V_s$, while maintaining $\min_{v\in V_c}F(s,v;\vec{G}^{ef}_{k}(\{Ub_e\}))\geq|V_c|k$. The resulting graph will have all nodes in $V_s$ isolated. By removing $V_s$, we get a graph $G^*=(V_c,E^*)$ having compute nodes only. Because of Theorem~\ref{app:thm:existtrees}, there exists a family of edge-disjoint out-trees in $G^*=(V_c,E^*)$ that achieves the optimal performance.

While one can split off one capacity of $(u,w),(w,t)$ at a time, this becomes inefficient if the capacities of edges are large. Here, we introduce a way to split off $(u,w),(w,t)$ by maximum capacity at once. Given edges $(u,w),(w,t)\in E$, we construct a flow network $\widehat{D}_{(u,w),v}$ from $\vec{D}_k$ for each $v\in V_c$ that $\widehat{D}_{(u,w),v}$ connects $(u,s),(u,t),(v,w)$ with $\infty$ capacity. Similarly, we construct a flow network $\widehat{D}_{(w,t),v}$ that connects $(w,s),(u,t),(v,t)$ with $\infty$ capacity.
\begin{restatable}{theorem}{thmmulticapasplit}\label{app:thm:multicapasplit}
    Given integer-capacity Eulerian digraph $D\!=\!(V_s\cup V_c,E)$ and $k\!\in\!\N$ with $\min_{v\in V_c}F(s,v;\vec{D}_k)\!\geq\!|V_c|k$, the maximum capacity that $e\!=\!(u,w),f\!=\!(w,t)$ can be split off with the resulting graph $D^{ef}$ satisfying $\min_{v\in V_c}F(s,v;\vec{D}^{ef}_{k})\!\geq\!|V_c|k$ is
    \begin{align}\label{app:eq:maxsplitoff}
    \begin{split}
        \gamma=\min\bigg\{&c(u,w;D)\ ,\ c(w,t;D)\ ,\\
        &\min_{v\in V_c} F(u,w;\widehat{D}_{(u,w),v})-|V_c|k\ ,\\
        &\min_{v\in V_c} F(w,t;\widehat{D}_{(w,t),v})-|V_c|k\bigg\}.
    \end{split}
    \end{align}
\end{restatable}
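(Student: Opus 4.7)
The plan is to reduce the preservation of $\min_{v\in V_c}F(s,v;\vec{D}^{ef}_k)\geq|V_c|k$ after splitting off $\gamma$ capacity from $e,f$ to a family of cut-capacity constraints via min-cut/max-flow, analyze which cuts can drop in capacity, and then express the binding constraints as maxflows in the two auxiliary networks $\widehat{D}_{(u,w),v}$ and $\widehat{D}_{(w,t),v}$.

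First, by min-cut/max-flow, the desired property is equivalent to: for every $v\in V_c$ and every cut $(A,\bar{A})$ with $s\in A,v\in\bar{A}$, we have $c(A,\bar{A};\vec{D}^{ef}_k)\geq|V_c|k$. Since splitting off $\gamma$ from $(u,w),(w,t)$ and adding $\gamma$ to $(u,t)$ modifies only those three edges, the cut capacity changes by
\[
\Delta c(A,\bar{A})=\gamma\bigl(\I[u\!\in\! A,t\!\in\!\bar{A}]-\I[u\!\in\! A,w\!\in\!\bar{A}]-\I[w\!\in\! A,t\!\in\!\bar{A}]\bigr).
\]
A case check over the eight placements of $(u,w,t)$ relative to $(A,\bar{A})$ shows $\Delta c\geq 0$ except in (a) $\{u,t\}\subseteq A,\ w\in\bar{A}$ and (b) $w\in A,\ \{u,t\}\subseteq\bar{A}$, each giving $\Delta c=-\gamma$. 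The nontrivial preservation constraints therefore reduce to $\gamma\leq c(A,\bar{A};\vec{D}_k)-|V_c|k$ over cuts of type (a) and of type (b), together with the trivial edge-capacity ceilings $\gamma\leq c(u,w;D)$ and $\gamma\leq c(w,t;D)$.

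Second, I would translate each family of binding cuts into a maxflow. For a fixed $v\in V_c$, the type-(a) cuts relevant for $v$ are precisely the cuts $(A,\bar{A})$ with $\{s,u,t\}\subseteq A$ and $\{w,v\}\subseteq\bar{A}$. Augmenting $\vec{D}_k$ with infinite-capacity edges $u\!\to\! s$, $u\!\to\! t$, $v\!\to\! w$ forces exactly these inclusions in every finite $u$-to-$w$ cut (any $\infty$ edge would contribute infinitely if its tail were on the source side and its head on the sink side), so by min-cut/max-flow the minimum of $c(A,\bar{A};\vec{D}_k)$ over type-(a) cuts equals $F(u,w;\widehat{D}_{(u,w),v})$. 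A symmetric argument using edges $w\!\to\! s$, $u\!\to\! t$, $v\!\to\! t$ gives $F(w,t;\widehat{D}_{(w,t),v})$ for type-(b). Taking the minimum over $v\in V_c$ and combining with the two trivial ceilings yields the four-term formula~(\ref{app:eq:maxsplitoff}); tightness follows because at $\gamma$ equal to this minimum some cut attains $|V_c|k$ exactly or some edge capacity is exhausted, while any larger $\gamma$ violates at least one constraint.

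The main obstacle I expect is the careful bookkeeping in the case enumeration, particularly the degenerate placements of $v$ (for example $v\in\{u,t\}$ in type (a), where the $\infty$ edges would force $v$ on both sides and make the auxiliary maxflow infinite). These cases must be reconciled with the fact that no genuine type-(a) cut exists for such $v$, so the infinite auxiliary maxflow correctly yields a vacuous constraint; confirming this correspondence and verifying the forcing argument for the $\infty$ edges is where the details get delicate.
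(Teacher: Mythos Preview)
Your proposal is correct and takes essentially the same approach as the paper: both reduce the preservation condition to the two cut types where capacity can drop (your types~(a) and~(b)), and both use the $\infty$ edges in the auxiliary networks to force exactly those cut structures so that the binding constraint becomes a single maxflow. Your upfront eight-case enumeration of $\Delta c$ is a slightly more systematic packaging of what the paper handles inline within its $\geq$ and $\leq$ directions, but the argument is the same.
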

Based on Theorem~\ref{app:thm:multicapasplit}, we are able to develop Algorithm~\ref{app:algo:removeswitch}. What is remarkable about Algorithm~\ref{app:algo:removeswitch} is that its runtime does not depend on the capacities of the digraph. One should also note that we update a table \emap while splitting. After edge splitting, we are ready to construct spanning trees that only use compute nodes for broadcast. \emap is then used to convert the spanning trees back to paths in $G$ that use switch nodes for send/receive between compute nodes.

Figure~\ref{app:fig:states} gives an example of edge splitting. In Figure~\ref{app:fig:topo2}, within each box $i\in\{1,2\}$, we split off 10 capacity of $(v^c_{i,j},v^s_{i}),(v^s_{i},v^c_{i,(j\bmod 4)+1})$ for $j=1,2,3,4$ to form a ring topology. Across boxes, we split off $1$ capacity of $(v^c_{i,j},v^s_0),(v^s_0,v^c_{(i\bmod 2)+1,j})$ for $j=1,2,3,4$. The resulting topology Figure~\ref{app:fig:topo_split} has compute nodes only, and the optimal communication time is still $(M/N)(4/4b)$ if bandwidth multiplier $b$ is added. Note that for this example, in the innermost foreach loop of Algorithm~\ref{app:algo:removeswitch}, we adjusted the order of iterating through $e$s to prioritize splitting off $(u,v_0^s),(v_0^s,t)$ pairs with $u,t$ in different boxes. The adjustment is not for performance-related reasons, but rather to simplify routing by scheduling all intra-box traffic through the in-box switch. We successfully achieved this goal: the capacity of each $f$ reaches 0 before we iterate to an $e$ with $u,t$ in the same box.

\subsection{Spanning Tree Construction}\label{app:sec:constructtree}

{\small
\begin{algorithm}[tb]
    \small
    \SetInd{0.4em}{0.5em}
    \caption{Spanning Tree Construction}
    \label{app:algo:construct}
    \SetAlgoLined
    \linespread{0.9}\selectfont
    \DontPrintSemicolon
    \KwIn{Integer-capacity digraph $D^*=(V_c,E^*)$ and $k\in\N$.}
    \KwOut{Spanning tree $(R_{u,i},\cE(R_{u,i}))$ for each $u\in V_c,i\in[n_u]$. Subgraph $(R_{u,i},\cE(R_{u,i}))$s satisfy $\forall u\in V_c:\sum_{i=1}^{n_u}m(R_{u,i})=k$ and $\forall e\in E^*:\sum\{m(R_{u,i})\ |\ e\in \cE(R_{u,i})\}\leq c(e;D^*)$.}
    \Begin{
        Initialize $R_{u,1}=\{u\},\cE(R_{u,1})=\emptyset,m(R_{u,1})=k,n_u=1$ for all $u\in V_c$.\;
        Initialize $g(e)=c(e;D^*)$ for all $e\in E^*$.\;
        \While{\normalfont there exists $R_{u,i}\neq V_c$}{
            \While{\normalfont $R_{u,i}\neq V_c$}{
                Pick an edge $(x,y)$ in $D^*$ that $x\in R_{u,i},y\notin R_{u,i}$.\;
                Compute $\mu$ as in (\ref{app:eq:computemu}).\;
                \lIf{\normalfont $\mu=0$}{\textbf{continue}}
                \If{\normalfont $\mu<m(R_{u,i})$}{
                    $n_{u}\leftarrow n_{u}+1$\;
                    Create a new copy $R_{u,n_{u}}=R_{u,i},\cE(R_{u,n_{u}})=\cE(R_{u,i}),m(R_{u,n_{u}})=m(R_{u,i})-\mu$.\;
                    $m(R_{u,i})\leftarrow\mu$\;
                }
                $\cE(R_{u,i})\leftarrow\cE(R_{u,i})+(x,y)$\;
                $R_{u,i}\leftarrow R_{u,i}+y$\;
                $g(x,y)\leftarrow g(x,y)-\mu$.\;
                Remove $(x,y)$ if $g(x,y)$ reaches $0$.\;
            }
        }
    }
\end{algorithm}
}

At this point, we have a digraph $G^*=(V_c,E^*)$ with only compute nodes. In this section, we construct $k$ out-trees from every node that span all nodes $V_c$ in $G^*$. We start by showing the existence of spanning trees with the following theorem in Tarjan~\cite{tarjan}. The theorem was originally proven by Edmonds~\cite{edmonds}.
\begin{restatable}[Tarjan~\cite{tarjan}]{theorem}{thmtarjan}\label{app:thm:tarjan}
    For any integer-capacity digraph $D=(V,E)$ and any sets $R_i\subseteq V$, $i\in[k]$, there exist $k$ edge-disjoint spanning out-trees $T_i$, $i\in[k]$, rooted respectively at $R_i$, if and only if for every $S\neq V$,   
    \begin{equation}\label{app:eq:tjcondition}
        c(S,\bar{S};D)\geq|\{i\ |\ R_i\subseteq S\}|.
    \end{equation}
\end{restatable}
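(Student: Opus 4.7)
The necessity direction is immediate. Given edge-disjoint spanning out-trees $T_1,\ldots,T_k$ with $T_i$ rooted at $R_i$ and any cut $S\subset V$ with $S\neq V$, every $T_i$ satisfying $R_i\subseteq S$ must reach every vertex of the non-empty $\bar{S}$, so at least one of its edges crosses from $S$ to $\bar{S}$; edge-disjointness of the family then yields $c(S,\bar{S};D)\geq\abs{\cur{i:R_i\subseteq S}}$.

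For sufficiency, my plan is to induct on the potential $\Phi=\sum_{i=1}^{k}(\abs{V}-\abs{R_i})$, the total number of tree edges still to commit. The base case $\Phi=0$ forces every $R_i=V$, so each $T_i$ is the empty spanning forest on $V$. For the inductive step, I pick any index $i$ with $R_i\neq V$ and any $v\in V\setminus R_i$, then commit an incoming edge $e=(u,v)\in E$ to $T_i$; formally this updates $R_i\leftarrow R_i\cup\cur{v}$ and $E\leftarrow E\setminus\cur{e}$, decreasing $\Phi$ by one. A short case analysis of the changes to both sides of~(\ref{app:eq:tjcondition}) shows that the update can break the hypothesis at a cut $S$ only when $S$ is \emph{tight} before the update (i.e., $c(S,\bar{S};D)=\abs{\cur{j:R_j\subseteq S}}$), with $u\in S$, $v\in\bar{S}$, and $R_i\not\subseteq S$; in every other combination of $\Delta_{\text{LHS}}\in\cur{0,-1}$ and $\Delta_{\text{RHS}}\in\cur{0,-1}$ the condition is preserved. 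The induction therefore closes as long as some pair $(e,i)$ is unblocked by any such bad tight cut.

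The main obstacle is showing that such a safe pair always exists, which I plan to prove by contradiction via the standard uncrossing technique. The cut function $S\mapsto c(S,\bar{S};D)$ is submodular while the counting function $f(S)=\abs{\cur{j:R_j\subseteq S}}$ is supermodular (if $R_j\subseteq S_1\cup S_2$ but $R_j$ lies in neither $S_1$ nor $S_2$ alone, the supermodular inequality gains an extra count). Chaining submodularity of $c$, supermodularity of $f$, and the global hypothesis $c\geq f$ into the sandwich $c(S_1)+c(S_2)\geq c(S_1\cup S_2)+c(S_1\cap S_2)\geq f(S_1\cup S_2)+f(S_1\cap S_2)\geq f(S_1)+f(S_2)$ forces both $S_1\cup S_2$ and $S_1\cap S_2$ to be tight whenever $S_1$ and $S_2$ are. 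Under the contradiction assumption, every candidate pair $(e,i)$ is blocked by some witness tight cut, so I would uncross these witnesses down to a minimal tight cut $S^\star$ that excludes $v$, contains every in-neighbor of $v$, and separates $R_i$ for some uncovered $i$. Comparing the tightness of $S^\star$ against the global hypothesis at the related cut $S^\star\cup\cur{v}$, and using that all in-edges of $v$ originate in $S^\star$, should yield a numerical inconsistency on the counts of indices $j$ with $R_j\subseteq S^\star$ versus $R_j\subseteq S^\star\cup\cur{v}$, contradicting the witness family and thereby producing the required safe pair and completing the induction.
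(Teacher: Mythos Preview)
The paper does not supply a proof of this theorem; it is cited as a result of Tarjan (originally Edmonds) and used as a black box, so there is no paper proof to compare against. Your necessity direction is correct. The sufficiency outline, however, has a genuine gap.

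In the inductive step you commit an edge $e=(u,v)$ with $v\notin R_i$ but place no constraint on $u$; in particular you never require $u\in R_i$. Without that, the committed edges need not form an out-tree rooted at the original $R_i$. Concretely: take $V=\{a,b,c,d\}$, $k=1$, $R_1=\{a\}$, and unit-capacity edges $a\!\to\! b$, $b\!\to\! c$, $c\!\to\! b$, $a\!\to\! d$. Condition~(\ref{app:eq:tjcondition}) holds. Your step may choose $v=b$ and commit $(c,b)$; one checks directly that the updated instance with $R_1'=\{a,b\}$ still satisfies~(\ref{app:eq:tjcondition}). But the residual graph $\{a\!\to\! b,\,b\!\to\! c,\,a\!\to\! d\}$ has a \emph{unique} spanning out-forest rooted at $\{a,b\}$, namely $\{b\!\to\! c,\,a\!\to\! d\}$, and adding $(c,b)$ back yields $\{(c,b),(b,c),(a,d)\}$, which contains the cycle $b\!\to\! c\!\to\! b$ unreachable from $a$. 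So preserving~(\ref{app:eq:tjcondition}) is not, by itself, a strong enough inductive invariant for your reduction.

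If you patch this by insisting $u\in R_i$, then ``pick any $v\in V\setminus R_i$'' becomes too strong: with $V=\{a,b,c\}$, $R_1=\{a\}$, edges $a\!\to\! b$ and $b\!\to\! c$, choosing $v=c$ leaves no edge from $R_1$ into $c$ at all. The correct claim, which is what Tarjan's algorithm actually establishes (and what the paper sketches in the text preceding Algorithm~\ref{app:algo:construct}), is that for fixed $i$ with $R_i\neq V$ there exists \emph{some} edge $(x,y)$ with $x\in R_i$, $y\notin R_i$ whose removal together with $R_i\leftarrow R_i\cup\{y\}$ preserves~(\ref{app:eq:tjcondition}). The uncrossing argument must then be run over the tight sets blocking the various edges \emph{leaving} $R_i$, not the edges entering a preselected $v$; your closing ``should yield a numerical inconsistency'' is precisely where the real combinatorics lives and would need to be carried out in that corrected setting.
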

A spanning out-tree is \textit{rooted at $R_i$} if for every $v\in V-R_i$, there is exactly one directed path from a vertex in $R_i$ to $v$ within the acyclic subgraph of out-tree. To see there exists a family of edge-disjoint spanning out-trees $\{T_{u,i}\}_{u\in V_c,i\in[k]}$ in $G^*$, observe that each $T_{u,i}$ is rooted at $R_{u,i}=\{u\}$, so $|\{(u,i)\ |\ R_{u,i}\subseteq S\}|=|S|k$ for any $S\subset V_c,S\neq V_c$. We show the following theorem:
\begin{restatable}{theorem}{thminitialexist}\label{app:thm:initialexist}
    Given integer-capacity digraph $D=(V_c,E)$ and $k\in\N$, $c(S,\bar{S};D)\geq|S|k$ for all $S\subset V_c,S\neq V_c$ if and only if $\min_{v\in V_c}F(s,v;\vec{D}_k)\geq|V_c|k$.
\end{restatable}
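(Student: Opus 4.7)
My plan is to reduce both sides to statements about cuts via the max-flow min-cut theorem, and then show that the two cut conditions are in fact the same after accounting for the $k$-capacity edges from the auxiliary source $s$.

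The forward direction I would handle as follows. Let $S\subset V_c$ with $S\neq V_c$, and pick any $v\in V_c\setminus S$. Consider the $s$-$v$ cut $A=\{s\}\cup S$ in $\vec{D}_k$. The edges crossing $(A,\bar A)$ are precisely (i)~the edges from $s$ to each vertex of $V_c\setminus S$, contributing $k(|V_c|-|S|)$, and (ii)~the edges of $D$ going from $S$ to $V_c\setminus S=\bar S$, contributing $c(S,\bar S;D)$. Hence
\[
c(A,\bar A;\vec{D}_k)=k(|V_c|-|S|)+c(S,\bar S;D).
\]
By max-flow min-cut, $F(s,v;\vec{D}_k)\leq c(A,\bar A;\vec{D}_k)$. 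The hypothesis $\min_{v\in V_c}F(s,v;\vec{D}_k)\geq|V_c|k$ therefore forces $c(S,\bar S;D)\geq|S|k$.

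For the reverse direction I would take an arbitrary $v\in V_c$ and an arbitrary $s$-$v$ cut $(A,\bar A)$ in $\vec{D}_k$, and set $S=A\cap V_c$. Then $v\in\bar A\cap V_c$, so $S\subsetneq V_c$, and as above
\[
c(A,\bar A;\vec{D}_k)=k(|V_c|-|S|)+c(S,\bar S;D)\geq k(|V_c|-|S|)+|S|k=|V_c|k,
\]
using the cut hypothesis on $D$. Since this holds for every $s$-$v$ cut, max-flow min-cut gives $F(s,v;\vec{D}_k)\geq|V_c|k$; taking the minimum over $v\in V_c$ finishes the proof.

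The argument is essentially bookkeeping, and I do not anticipate a substantive obstacle: the only thing to be careful about is matching the quantifiers correctly, namely that $S$ ranging over all proper subsets of $V_c$ in the graph-theoretic condition corresponds exactly to $A\cap V_c$ for $s$-$v$ cuts with $v$ ranging over $V_c$, and that the $k$-capacity edges from $s$ contribute exactly $k(|V_c|-|S|)$ to each cut so that the inequality $c(A,\bar A;\vec{D}_k)\geq|V_c|k$ rearranges cleanly into $c(S,\bar S;D)\geq|S|k$.
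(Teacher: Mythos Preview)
Your proof is correct and is essentially the same argument as the paper's: both compute the capacity of an $s$-$v$ cut $A=\{s\}\cup S$ in $\vec{D}_k$ as $c(S,\bar S;D)+k(|V_c|-|S|)$ and rearrange via max-flow min-cut. The only cosmetic differences are that the paper argues each implication by contrapositive whereas you argue directly, and your ``forward'' and ``reverse'' labels are swapped relative to the order in which the statement is written.
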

Since we ensured $\min_{v\in V_c}F(s,v;\vec{G}^*_k)\geq|V_c|k$, condition~(\ref{app:eq:tjcondition}) is satisfied. Spanning tree construction essentially involves iteratively expanding each $R_{u,i}=\cV(T_{u,i})$ from $\{u\}$ to $V_c$ by adding edges to $T_{u,i}$, while maintaining condition~(\ref{app:eq:tjcondition}). Tarjan~\cite{tarjan} has proposed such an algorithm. For each $T_{u,i}$, the algorithm continuously finds an edge $(x,y)$ with $x\in R_{u,i},y\notin R_{u,i}$ that adding this edge to $T_{u,i}$ does not violate (\ref{app:eq:tjcondition}). It is proven that such an edge is guaranteed to exist. However, the runtime of the algorithm quadratically depends on the total number of spanning trees, i.e., $Nk$ in our case. This becomes problematic when $k$ is large, as $k$ can get up to $\min_{v\in V_c}B^-_G(v)/\gcd(\{b_e\}_{e\in E})$. Fortunately, B{\'e}rczi \& Frank~\cite{berczi2010packing} has proposed a strongly polynomial time algorithm based on Schrijver~\cite{schrijver}. The runtime of the algorithm does not depend on $k$ at all. In particular, the following theorem has been shown:
\begin{restatable}[B{\'e}rczi \& Frank~\cite{berczi2010packing}]{theorem}{thmpolynomialpacking}\label{app:thm:polynomialpacking}
    Let $D\!=\!(V,E)$ be a digraph, $g:E\!\to\!\Z_+$ a capacity function, $\mathcal{R}\!=\!\{R_1,\dots,R_n\}$ a list of root-sets, $\mathcal{U}\!=\!\{U_1,\dots,U_n\}$ a set of convex sets with $R_i\!\subseteq\! U_i$, and $m:\mathcal{R}\!\to\!\Z_+$ a demand function. There is a strongly polynomial time algorithm that finds (if there exist) $m(\mathcal{R})$ out-trees so that $m(R_i)$ of them are spanning $U_i$ with root-set $R_i$ and each edge $e\in E$ is contained in at most $g(e)$ out-trees.
\end{restatable}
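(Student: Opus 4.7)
The plan is to lift Tarjan's constructive packing (Theorem~\ref{app:thm:tarjan}) from unit multiplicities and $U_i = V$ to the general setting with a demand function $m$ and convex spanning targets $U_i$, and then engineer a batched execution whose running time is independent of $m(\mathcal{R})$. First I would establish the right feasibility criterion: the desired trees exist if and only if, for every proper $S \subset V$,
\[
c(S,\bar S;D) \geq \sum_{i\,:\,R_i \subseteq S,\,U_i \not\subseteq S} m(R_i).
\]
Necessity is immediate, since each tree required to span $U_i$ with $R_i \subseteq S$ but $U_i \not\subseteq S$ must cross the cut $(S,\bar S)$. Sufficiency would follow from Edmonds' theorem combined with an uncrossing argument that uses convexity of each $U_i$ to keep the constructed out-trees inside their allowed vertex sets.

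Given feasibility, I would generalize Tarjan's greedy packing into the batched procedure of Algorithm~\ref{app:algo:construct}. Rather than growing $m(\mathcal{R})$ individual out-trees one at a time, I maintain a family of partial prototypes $\{T_j\}$, each rooted at some $R_{i(j)}$ with spanning target $U_{i(j)}$ and tagged with a multiplicity $m_j$. At each step I pick a non-spanning prototype $T_j$, choose a boundary edge $(x,y)$ with $x \in \cV(T_j)$, $y \notin \cV(T_j)$, $y \in U_{i(j)}$, and add it to $T_j$ at the largest multiplicity $\mu$ that preserves the cut condition in the residual instance. If $\mu < m_j$, I split $T_j$ into two copies of sizes $\mu$ and $m_j - \mu$, and only the former receives the new edge. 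The loop invariant is feasibility of the residual instance, so Tarjan's exchange argument ensures that some admissible boundary edge with $\mu \geq 1$ always exists for any non-spanning prototype.

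The main obstacle, and what distinguishes this theorem from Tarjan's, is computing $\mu$ in strongly polynomial time and bounding the number of batch-splits. For $\mu$, I would work in an auxiliary flow network obtained by attaching a super-source $s$ to each remaining root $R_i$ with capacity equal to its residual demand and by contracting each prototype $T_j$'s vertex set to its root; the cut condition then reads as saturation of $s$-to-$v$ maxflow for every $v$ in every $U_i$, and $\mu$ becomes the largest integer for which these saturations persist after subtracting $\mu$ from both $g(x,y)$ and $m_j$, obtainable via one parametric or binary-search maxflow computation. For termination, define $\Phi = \sum_j m_j \cdot (|U_{i(j)}| - |\cV(T_j)|)$: every successful edge addition strictly decreases $\Phi$ by at least $\mu$, while each split event can be charged either to a prototype that subsequently completes or to an edge whose capacity $g(e)$ is driven to zero. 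This bounds the number of combinatorial updates by $O((|V|+|E|)\cdot|\mathcal{R}|)$, and multiplying by the cost of one maxflow yields a runtime polynomial in $|V|$, $|E|$, $|\mathcal{R}|$, with no dependence on the magnitudes of $g$ or $m$.
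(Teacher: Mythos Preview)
The paper does not prove this theorem at all: it is stated as a black-box citation to B\'erczi and Frank~\cite{berczi2010packing} (building on Schrijver~\cite{schrijver}), and in the proofs section the restatable command \texttt{\textbackslash thmpolynomialpacking*} is followed by no proof environment. The paper's own contribution around this theorem is only the explicit computation of $\mu$ via Theorem~\ref{app:thm:computemu}, which the cited sources left implicit. So there is no ``paper's own proof'' to compare against; you have written a proof sketch for an external result that the authors simply import.

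That said, your sketch has a real gap in the strong-polynomiality argument. The potential $\Phi = \sum_j m_j\cdot(|U_{i(j)}|-|\cV(T_j)|)$ depends on the demand values $m_j$, so bounding the number of steps by the initial value of $\Phi$ gives only pseudo-polynomial, not strongly polynomial, time. Your attempted repair via charging is circular: you charge each split to ``a prototype that subsequently completes,'' but every split creates exactly one new prototype, so the number of completed prototypes equals the initial count plus the number of splits---this bounds nothing. The alternative charge, to an edge whose capacity is driven to zero, only covers splits where $\mu=g(x,y)$; it misses the case where $\mu$ is determined by the tight-cut term in (\ref{app:eq:definemu}) with $\mu<\min\{g(x,y),m(R_1)\}$. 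The actual B\'erczi--Frank/Schrijver argument bounds the number of such tight-cut-induced splits by a structural uncrossing count (yielding the $O(mn^2)$ bound the paper quotes in \S\ref{app:sec:runtime}), and that piece is absent from your outline.
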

In our context, we start with $\mathcal{R}=\{R_u\ |\ u\in V_c\}$ and $R_u=\{u\},U_u=V_c,m(R_u)=k$. We define $\cE(R_i)$ to be the edge set of the $m(R_i)$ out-trees corresponding to $R_i$, so $\cE(R_u)=\emptyset$ is initialized. Given $\mathcal{R}=\{R_1,\dots,R_n\}$, we pick an $R_i\neq V_c$, say $R_1$. Then, we find an edge $(x,y)$ such that $x\in R_1,y\notin R_1$ and $(x,y)$ can be added to $\mu:0<\mu\leq\min\{g(x,y),m(R_1)\}$ copies of the $m(R_1)$ out-trees without violating (\ref{app:eq:tjcondition}). If $\mu=m(R_1)$, then we directly add $(x,y)$ to $\cE(R_1)$ and $R_1=R_1+y$. If $\mu<m(R_1)$, then we add a copy $R_{n+1}$ of $R_1$ that $\cE(R_{n+1})=\cE(R_1),m(R_{n+1})=m(R_1)-\mu$. We revise $m(R_1)$ to $\mu$, add $(x,y)$ to $\cE(R_1)$, and $R_1=R_1+y$. Finally, we update $g(x,y)=g(x,y)-\mu$. Now, given $\mathcal{R}=\{R_1,\dots,R_{n+1}\}$, we can apply the step repeatedly until $R_i=V_c$ for all $R_i\in\mathcal{R}$. According to B{\'e}rczi \& Frank~\cite{berczi2010packing}, $\mu$ is defined as followed:
\begin{align}\label{app:eq:definemu}
\begin{split}
    \textstyle\mu=&\min\big\{g(x,y)\ ,\ m(R_1)\ ,\\
    &\textstyle\quad\min\{c(S,\bar{S};D)-p(S;D):x\in S,y\in\bar{S},R_1\not\subseteq S\}\big\}
\end{split}
\end{align}
where $p(S;D)=\sum\{m(R_i)\ |\ R_i\subseteq S\}$. Neither B{\'e}rczi \& Frank~\cite{berczi2010packing} nor Schrijver~\cite{schrijver} explicitly mentioned how to compute $\mu$ in polynomial time. Therefore, we describe a method for doing so. We construct a flow network $\overline{D}$ such that (a) a node $s_i$ is added for each $R_i$ except $i=1$, (b) connect $x$ to each $s_i$ with capacity $m(R_i)$, and (c) connect each $s_i$ to every vertex in $R_i$ with $\infty$ capacity. We then show the following result:
\begin{restatable}{theorem}{thmcomputemu}\label{app:thm:computemu}
    For any edge $(x,y)$ in $D$ with $x\in R_1,y\notin R_1$,
    \begin{equation}\label{app:eq:computemu}
        \textstyle\mu=\min\left\{g(x,y)\ ,\ m(R_1)\ ,\ F(x,y;\overline{D})-\sum_{i\neq 1}m(R_i)\right\}.
    \end{equation}
\end{restatable}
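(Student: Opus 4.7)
The plan is to apply the max-flow min-cut theorem to $\overline{D}$ and show that every minimum $x$-to-$y$ cut corresponds cleanly to a cut in $D$ with an additive term counting which root-sets $R_i$ the cut leaves intact. Writing $\alpha := \min\{c(S,\bar{S};D) - p(S;D) : x \in S,\ y \in \bar S,\ R_1 \not\subseteq S\}$ for the third argument of the outer $\min$ in (\ref{app:eq:definemu}), the target is to show that $F(x,y;\overline{D}) - \sum_{i \neq 1} m(R_i)$ can be substituted for $\alpha$ inside that outer $\min$ without changing the value.

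First I would take an arbitrary $x$-to-$y$ cut $(A, \bar A)$ in $\overline{D}$ of finite capacity and set $S := A \cap V$, so $x \in S$ and $y \in \bar S$. For each $i \neq 1$, the edge $(x, s_i)$ carries capacity $m(R_i)$ while every edge from $s_i$ into $R_i$ has infinite capacity. Finiteness therefore forces: if $s_i \in A$ then $R_i \subseteq S$, contributing nothing to the cut; if $s_i \in \bar A$ it contributes exactly $m(R_i)$. Thus any finite cut has capacity $c(S, \bar S; D) + \sum_{i \neq 1,\, s_i \in \bar A} m(R_i)$, and for fixed $S$ the minimum is achieved by placing $s_i$ in $A$ whenever $R_i \subseteq S$, giving
$$c(S, \bar S; D) + \sum_{i \neq 1,\, R_i \not\subseteq S} m(R_i) = c(S, \bar S; D) - p(S;D) + \sum_{i \neq 1} m(R_i) + m(R_1)\cdot\I[R_1 \subseteq S],$$
where the rewrite uses $p(S;D) = m(R_1)\I[R_1 \subseteq S] + \sum_{i \neq 1,\, R_i \subseteq S} m(R_i)$.

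Next I would minimize over $S$ with $x \in S,\ y \in \bar S$ and split on whether $R_1 \subseteq S$. Letting $\beta := \min\{c(S,\bar S; D) - p(S;D) : x \in S,\ y \in \bar S,\ R_1 \subseteq S\}$, the split yields
$$F(x, y; \overline{D}) - \sum_{i \neq 1} m(R_i) = \min\{\alpha,\ \beta + m(R_1)\}.$$
The closing step is to invoke the invariant that the Edmonds/Tarjan condition $c(S, \bar S; D) \geq p(S;D)$ of Theorem~\ref{app:thm:tarjan} is maintained by the packing algorithm at every iteration (this is precisely the hypothesis under which Theorem~\ref{app:thm:polynomialpacking} operates, and it holds throughout by the choice of $\mu$). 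Since $y \in \bar S$ guarantees $S \neq V$, this gives $\beta \geq 0$, hence $\beta + m(R_1) \geq m(R_1)$, and absorbing into the outer minimum kills the second branch:
$$\min\bigl\{g(x,y),\ m(R_1),\ \min\{\alpha, \beta + m(R_1)\}\bigr\} = \min\{g(x,y),\ m(R_1),\ \alpha\} = \mu.$$

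The main obstacle I anticipate is the bookkeeping in the second step: correctly identifying which $s_i$ placements minimize the cut for a given $S$, and re-expressing the resulting sum in terms of $p(S;D)$ so that the comparison with (\ref{app:eq:definemu}) reduces to a single indicator variable. Once the invariant $\beta \geq 0$ is secured, the remainder is a purely algebraic manipulation with $\min$, and the max-flow min-cut theorem applied to $\overline{D}$ finishes the proof.
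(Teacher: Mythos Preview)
Your proposal is correct and follows essentially the same approach as the paper: both analyze $x$--$y$ cuts in $\overline{D}$ via the induced $S = A \cap V$, split on whether $R_1 \subseteq S$, and invoke the maintained Tarjan invariant $c(S,\bar S;D) \ge p(S;D)$ to handle the $R_1 \subseteq S$ branch. Your packaging is slightly more streamlined---you derive the exact identity $F(x,y;\overline{D}) - \sum_{i\neq 1} m(R_i) = \min\{\alpha,\,\beta + m(R_1)\}$ first and then absorb $\beta + m(R_1) \ge m(R_1)$ into the outer $\min$, whereas the paper does a case analysis on which type of cut achieves the minimum---but the substance is the same.
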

Thus, $\mu$ can be calculated by computing a single maxflow from $x$ to $y$ in $\overline{D}$. The complete algorithm is described in Algorithm~\ref{app:algo:construct}. The resulting $\mathcal{R}$ can be indexed as $\mathcal{R}\!=\!\bigcup_{u\in V_c}\{R_{u,1},\dots,R_{u,n_{u}}\}$, where $R_{u,i}$ corresponds to $m(R_{u,i})$ number of identical out-trees rooted at $u$ and specified by edge set $\cE(R_{u,i})$. We have $\sum_{i=1}^{n_u}m(R_{u,i})=k$ for all $u$. Thus, $\mathcal{R}$ can be decomposed into $\{T_{u,i}\}_{u\in V_c,i\in[k]}$. However, since all spanning trees within $R_{u,i}$ are identical, the allgather schedule can simply be specified in terms of $\cE(R_{u,i})$ and $m(R_{u,i})$.

After construction, we have edge-disjoint spanning trees $\{T_{u,i}\}_{u\in V_c,i\in [k]}$ in $G^*$. Each of the edge $(u,v)$ in $T_{u,i}$ may correspond to a path $u\!\to\! w_1\!\to\!\dots\!\to\! w_n\!\to\! v$ in $G$ with $w_1,\dots,w_n$ being switch nodes. In other words, edges in $T_{u,i}$ only specify the source and destination of send/recv between compute nodes. Thus, one needs to use the \emap in Algorithm~\ref{app:algo:removeswitch} to recover the paths in $G$. For any edge $(u,t)$ in $G^*$, $\emap[(u,t)][w]$ denotes the amount of capacity from $u$ to $t$ that is going through $(u,w),(w,t)$. It should be noted that \emap may be recursive, meaning that $(u,w),(w,t)$ may also go through some other switches. Because each capacity of $(u,t)$ corresponds to one capacity of a path from $u$ to $t$ in $G$, the resulting schedule in $G$ has the same performance in $G^*$, achieving the optimal performance~(\ref{eq:optimality}). 

In Figure~\ref{app:fig:states}'s example, we construct a spanning tree like \ref{app:fig:topo_split_tree} for each of the compute node. By reversing the edge splitting with \emap, the spanning tree becomes the schedule in \ref{app:fig:topo_tree2}. Note that the corresponding schedule of a spanning tree in $G^*$ is not necessarily a tree in $G$. For example, the schedule in \ref{app:fig:topo_tree2} visits switches $v_1^s,v_2^s$ multiple times. To obtain a complete allgather schedule with optimal communication time $(M/N)(4/4b)$, one can apply the similar schedule for each of the compute nodes in \ref{app:fig:topo_tree2}.

One may be tempted to devise a way to construct spanning trees with low heights. This has numerous benefits such as lower latency at small data sizes and better convergence towards optimality. Although there is indeed potential progress to be made in this direction, constructing edge-disjoint spanning trees of minimum height has been proven to be NP-complete~\cite{npcomplete}.

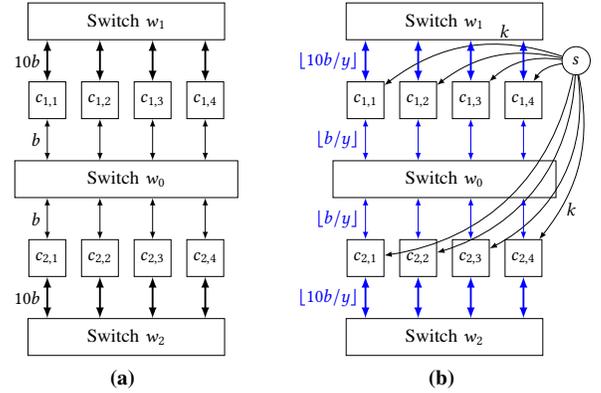
\begin{figure}[tb]
    \centering
    \begin{subfigure}{0.49\columnwidth}
        \centering
        \scalebox{0.7}{
\begin{tikzpicture}[node/.style={rectangle,draw=black,minimum size=7mm,align=center}]
    \node[node]	(10)	at (-1.5,2.5-1) {$c_{1,1}$};
    \node[node]	(11)	at (-0.5,2.5-1) {$c_{1,2}$};
    \node[node]	(12)	at (0.5,2.5-1) {$c_{1,3}$};
    \node[node]	(13)	at (1.5,2.5-1) {$c_{1,4}$};
    \node[node,text width=3.5cm]	(14)	at (0,0.5+2.5) {Switch $w_1$};
    
    \path[latex-latex,anchor=east,line width=1pt] (10.north) edge node {$10b$} (10.north|-14.south);
    \path[latex-latex,line width=1pt] (11.north) edge (11.north|-14.south);
    \path[latex-latex,line width=1pt] (12.north) edge (12.north|-14.south);
    \path[latex-latex,line width=1pt] (13.north) edge (13.north|-14.south);
    
    \node[node]	(0)	at (-1.5,1-2.5) {$c_{2,1}$};
    \node[node]	(1)	at (-0.5,1-2.5) {$c_{2,2}$};
    \node[node]	(2)	at (0.5,1-2.5) {$c_{2,3}$};
    \node[node]	(3)	at (1.5,1-2.5) {$c_{2,4}$};
    \node[node,text width=3.5cm]	(4)	at (0,-0.5-2.5) {Switch $w_2$};
    
    \path[latex-latex,anchor=east,line width=1pt] (0.south) edge node {$10b$} (0.south|-4.north);
    \path[latex-latex,line width=1pt] (1.south) edge (1.south|-4.north);
    \path[latex-latex,line width=1pt] (2.south) edge (2.south|-4.north);
    \path[latex-latex,line width=1pt] (3.south) edge (3.south|-4.north);
            
    \node[node,text width=4cm]	(20)	at (0,0) {Switch $w_0$};
    
    \path[latex-latex,anchor=east] (0.north) edge node {$b$} (0.north|-20.south);
    \path[latex-latex] (1.north) edge (1.north|-20.south);
    \path[latex-latex] (2.north) edge (2.north|-20.south);
    \path[latex-latex] (3.north) edge (3.north|-20.south);
    
    \path[latex-latex,anchor=east] (10.south) edge node {$b$} (10.south|-20.north);
    \path[latex-latex] (11.south) edge (11.south|-20.north);
    \path[latex-latex] (12.south) edge (12.south|-20.north);
    \path[latex-latex] (13.south) edge (13.south|-20.north);
\end{tikzpicture}
}
        \caption{}
    \end{subfigure}
    \begin{subfigure}{0.49\columnwidth}
        \centering
        \scalebox{0.7}{
\begin{tikzpicture}[node/.style={rectangle,draw=black,minimum size=7mm,align=center}]
    \node[node]	(10)	at (-1.5,2.5-1) {$c_{1,1}$};
    \node[node]	(11)	at (-0.5,2.5-1) {$c_{1,2}$};
    \node[node]	(12)	at (0.5,2.5-1) {$c_{1,3}$};
    \node[node]	(13)	at (1.5,2.5-1) {$c_{1,4}$};
    \node[node,text width=3.5cm]	(14)	at (0,0.5+2.5) {Switch $w_1$};

    \node[circle,draw=black] (00)    at (2.5, 2.25) {$s$};
    \draw[-latex] (00) to[bend right=28] node[above,pos=0.3]{$k$} (10);
    \draw[-latex] (00) to[bend right=25] (11);
    \draw[-latex] (00) to[bend right=25] (12);
    \draw[-latex] (00) to[bend right=25] (13);
    
    \path[latex-latex,anchor=east,line width=1pt,blue] (10.north) edge node {$\lfloor 10b/y\rfloor$} (10.north|-14.south);
    \path[latex-latex,line width=1pt,blue] (11.north) edge (11.north|-14.south);
    \path[latex-latex,line width=1pt,blue] (12.north) edge (12.north|-14.south);
    \path[latex-latex,line width=1pt,blue] (13.north) edge (13.north|-14.south);
    
    \node[node]	(0)	at (-1.5,1-2.5) {$c_{2,1}$};
    \node[node]	(1)	at (-0.5,1-2.5) {$c_{2,2}$};
    \node[node]	(2)	at (0.5,1-2.5) {$c_{2,3}$};
    \node[node]	(3)	at (1.5,1-2.5) {$c_{2,4}$};
    \node[node,text width=3.5cm]	(4)	at (0,-0.5-2.5) {Switch $w_2$};

    \draw[-latex] (00) to[bend left=35] (0);
    \draw[-latex] (00) to[bend left=35] (1);
    \draw[-latex] (00) to[bend left=35] (2);
    \draw[-latex] (00) to[bend left=25] node[right,pos=0.8]{$k$} (3);
    
    \path[latex-latex,anchor=east,line width=1pt,blue] (0.south) edge node {$\lfloor 10b/y\rfloor$} (0.south|-4.north);
    \path[latex-latex,line width=1pt,blue] (1.south) edge (1.south|-4.north);
    \path[latex-latex,line width=1pt,blue] (2.south) edge (2.south|-4.north);
    \path[latex-latex,line width=1pt,blue] (3.south) edge (3.south|-4.north);
            
    \node[node,text width=4cm]	(20)	at (0,0) {Switch $w_0$};
    
    \path[latex-latex,anchor=east,blue] (0.north) edge node {$\lfloor b/y\rfloor$} (0.north|-20.south);
    \path[latex-latex,blue] (1.north) edge (1.north|-20.south);
    \path[latex-latex,blue] (2.north) edge (2.north|-20.south);
    \path[latex-latex,blue] (3.north) edge (3.north|-20.south);
    
    \path[latex-latex,anchor=east,blue] (10.south) edge node {$\lfloor b/y\rfloor$} (10.south|-20.north);
    \path[latex-latex,blue] (11.south) edge (11.south|-20.north);
    \path[latex-latex,blue] (12.south) edge (12.south|-20.north);
    \path[latex-latex,blue] (13.south) edge (13.south|-20.north);

\end{tikzpicture}
}
        \caption{}
    \end{subfigure}
    \caption{The auxiliary network for fixed-$k$ binary search. \textnormal{(a) shows the original topology. (b) shows the auxiliary network ForestColl uses to binary search for the optimal $y$ (the bandwidth utilized by each tree) given a fixed $k$ (the number of trees rooted at each compute node).}}
    \label{fig:fixedk}
\end{figure}

{\small
\begin{algorithm}[tb]
    \small
    \caption{Fixed-$k$ Binary Search}
    \label{algo:fixedk}
    \SetAlgoLined
    \linespread{0.9}\selectfont
    \DontPrintSemicolon
    \KwIn{A directed graph $G=(V_s\cup V_c,E)$ and $k$, the number of trees rooted at each compute node.}
    \KwOut{$y^*$, the maximum bandwidth of each tree.}
    \Begin{
        $l\leftarrow\frac{(N-1)k}{\min_{v\in V_c}B^-_G(v)}$\tcp*{\normalfont\textit{a lower bound of $\frac{1}{y^*}$}}
        $r\leftarrow (N-1)k$\tcp*{\normalfont\textit{an upper bound of $\frac{1}{y^*}$}}
        \While{$r-l\geq 1/\max_{e\in E}b_e^2$}{
            $\frac{1}{y}\leftarrow(l+r)/2$\;
            Add node $s$ to $G$.\;
            \ForEach{\normalfont compute node $c\in V_c$}{
                Add an edge from $s$ to $c$ with capacity $k$.\;
            }
            \ForEach{\normalfont link $e\in E$ with bandwidth $b_e$}{
                Adjust the capacity of $e$ to $\lfloor b_e/y\rfloor$.\;
            }
            \eIf{\normalfont the maxflow from $s$ to each $c\in V_c$ is $Nk$}{
                $r\leftarrow\frac{1}{y}$\tcp*{\normalfont\textit{case $\frac{1}{y}\!\geq\!\frac{1}{y^*}$}}
            }{
                $l\leftarrow\frac{1}{y}$\tcp*{\normalfont\textit{case $\frac{1}{y}\!<\!\frac{1}{y^*}$}}
            }
        }
        Find the unique fractional number $\frac{p}{q}\in[l,r]$ such that denominator $q\leq\max_{e\in E}b_e$.\;
        \Return{\normalfont $\frac{q}{p}$ as $y^*$}
    }
\end{algorithm}
}

\subsection[Fixed-k Optimality]{Fixed-$k$ Optimality}\label{app:sec:fixedk}

A potential problem of our schedule is that $k$, the number of spanning trees per root, depends linearly on link bandwidths, potentially reaching up to $\min_{v\in V_c}B^-_G(v)/\gcd(\{b_e\}_{e\in E})$. Although the runtime of spanning tree construction does not depend on $k$, in practice, one may want to reduce $k$ to simplify the schedule. In this section, we offer a way to construct a schedule with the best possible performance for a fixed $k$. We start with the following theorem:

\begin{restatable}{theorem}{thmfixedk}\label{app:thm:fixedk}
    Given $U\in\R_{+}$ and $k\in\N$, a family of out-trees $\{T_{u,i}\}_{u\in V_c,i\in[k]}$ with $T_{u,i}$ rooted at $u$ and $\cV(T_{u,i})\supseteq V_c$ achieves $\frac{M}{Nk}\cdot U$ communication time if and only if it is edge-disjoint in $G(\{\lfloor Ub_e\rfloor\}_{e\in E})$.
\end{restatable}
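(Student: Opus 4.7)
My plan is to show both directions by directly translating the flow-model bandwidth constraint on each edge into an integer multiplicity constraint on the trees. First I would set up bookkeeping. Each of the $N$ compute nodes holds a shard of size $M/N$ and hosts $k$ out-trees, so each tree $T_{u,i}$ must broadcast exactly $M/(Nk)$ amount of data. Under the broadcast flow model of \S\ref{sec:treeschedule}, every tree carries a single flow value that occupies that same amount of bandwidth on each of its edges. If we allocate every tree the uniform per-edge bandwidth $y=1/U$, then all $Nk$ broadcasts finish simultaneously at time $M/(Nk)/y=\frac{M}{Nk}\cdot U$, which matches the target communication time in the theorem statement.

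Next I would pin down the edge constraint. Let $n_e$ denote the number of trees in the family that use edge $e$. The total bandwidth used on $e$ is $n_e y = n_e/U$, and the capacity constraint $n_e/U\le b_e$ becomes $n_e\le Ub_e$. Since $n_e$ is a nonnegative integer, this is equivalent to $n_e\le\lfloor Ub_e\rfloor$, which is precisely the definition of edge-disjointness of $\{T_{u,i}\}$ in $G(\{\lfloor Ub_e\rfloor\})$. The two directions then follow as symmetric corollaries: if the family achieves time $\frac{M}{Nk}\cdot U$ with the uniform allocation $y=1/U$, then the capacity constraint forces $n_e\le\lfloor Ub_e\rfloor$ on every edge; conversely, if $n_e\le\lfloor Ub_e\rfloor\le Ub_e$ holds on every edge, then the same uniform allocation $y=1/U$ respects all capacities and realizes the target time.

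The main obstacle, if there is one, is justifying that a uniform per-tree bandwidth allocation is without loss of generality, i.e., that any schedule matching time $\frac{M}{Nk}\cdot U$ can be rewritten with every tree using the common bandwidth $y=1/U$ on every one of its edges. This is essentially built into the broadcast flow model, which by definition assigns a single bandwidth value to an entire tree (the flow rate pushed from the root, which equals the per-edge usage since the tree multicasts at non-leaf nodes). Once that modeling convention is invoked, the theorem reduces to the arithmetic equivalence between $n_e\le Ub_e$ and $n_e\le\lfloor Ub_e\rfloor$ for $n_e\in\mathbb{Z}_{\ge 0}$, so no nontrivial graph-theoretic argument is needed beyond what has already been established.
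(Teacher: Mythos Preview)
Your proposal is correct and matches the paper's proof essentially line for line: the paper writes $T_{\text{comm}}=\frac{M}{Nk}\max_{e\in E}\frac{1}{b_e}\sum_{T}\I[e\in T]$ and observes that this is $\le\frac{M}{Nk}U$ iff $n_e\le Ub_e$ on every edge, which for integer $n_e$ is exactly $n_e\le\lfloor Ub_e\rfloor$. Your uniform-bandwidth framing with $y=1/U$ is just the same congestion bound viewed from the per-tree side, and your closing remark that the equal-bandwidth convention is part of the broadcast flow model is precisely the justification the paper relies on implicitly.
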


To test the existence of edge disjoint $\{T_{u,i}\}_{u\in V_c,i\in[k]}$ in $G(\{\lfloor Ub_e\rfloor\}_{e\in E})$, by Theorem~\ref{app:thm:existtrees}, we can simply test whether $\min_{v\in V_c}F(s,v;\vec{G}_k(\{\lfloor Ub_e\rfloor\}))\geq|V_c|k$ holds. The following theorem provides a method for binary search to find the lowest communication time for the given $k$.

\begin{restatable}{theorem}{thmfixedkbinarysearch}\label{app:thm:fixedkbinarysearch}
    Let $\frac{M}{Nk}\cdot U^*$ be the lowest communication time that can be achieved with $k$ out-trees per $v\in V_c$. Then, there exists a family of edge-disjoint out-trees $\{T_{u,i}\}_{u\in V_c,i\in[k]}$ in $G(\{\lfloor Ub_e\rfloor\}_{e\in E})$ with $T_{u,i}$ rooted at $u$ and $\cV(T_{u,i})\supseteq V_c$ if and only if $U\geq U^*$.
\end{restatable}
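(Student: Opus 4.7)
The plan is to use Theorem~\ref{app:thm:fixedk} as a direct bridge between the communication time of a tree family and the combinatorial edge-disjointness condition in $G(\{\lfloor Ub_e\rfloor\}_{e\in E})$. Once that bridge is in place, the biconditional reduces to a simple monotonicity observation about how the scaled capacities $\lfloor Ub_e\rfloor$ grow with $U$. I would split the argument into the two directions of the ``if and only if.''

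For the direction $U\geq U^*\Rightarrow$ existence: by the definition of $U^*$ as the lowest communication time achievable with $k$ out-trees per compute node, there is some family $\{T_{u,i}\}_{u\in V_c,i\in[k]}$ with the required roots achieving time $\frac{M}{Nk}\cdot U^*$. Theorem~\ref{app:thm:fixedk} then guarantees that this family is edge-disjoint in $G(\{\lfloor U^*b_e\rfloor\}_{e\in E})$. Since $U\geq U^*$ implies $\lfloor Ub_e\rfloor\geq\lfloor U^*b_e\rfloor$ for every $e\in E$, the same family remains edge-disjoint when each edge's capacity is raised from $\lfloor U^*b_e\rfloor$ to $\lfloor Ub_e\rfloor$, so it witnesses existence in $G(\{\lfloor Ub_e\rfloor\}_{e\in E})$.

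For the direction existence $\Rightarrow U\geq U^*$: suppose a family of edge-disjoint spanning out-trees exists in $G(\{\lfloor Ub_e\rfloor\}_{e\in E})$ satisfying the root and coverage conditions. Applying the other direction of Theorem~\ref{app:thm:fixedk}, this family achieves communication time $\frac{M}{Nk}\cdot U$ as a valid allgather schedule with $k$ trees per root. Because $U^*$ is by definition the infimum (and, as noted below, the minimum) of such achievable times, we must have $\frac{M}{Nk}\cdot U\geq\frac{M}{Nk}\cdot U^*$, and hence $U\geq U^*$.

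The argument has no real obstacle beyond one point of care: the definition of $U^*$ must genuinely be attained as a minimum for the forward direction to start, and the reverse direction similarly appeals to $U^*$ being a lower bound. Attainment follows because $U\mapsto(\lfloor Ub_e\rfloor)_{e\in E}$ is a right-continuous step function of $U$, so the set of $U$ yielding any fixed capacity vector is a closed interval, and within each capacity vector the existence of a suitable tree family is a purely combinatorial (hence $U$-independent) condition; the infimum of valid $U$ therefore lies at a left endpoint and is achieved. Once this is recorded, the theorem is just monotonicity of the floor function composed with the characterization of Theorem~\ref{app:thm:fixedk}, which is exactly what justifies binary-searching over $U$ in Algorithm~\ref{algo:fixedk}.
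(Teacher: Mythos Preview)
Your proof is correct and follows essentially the same approach as the paper's: both directions lean on Theorem~\ref{app:thm:fixedk} together with the monotonicity of $U\mapsto\lfloor Ub_e\rfloor$, with the paper simply absorbing the monotonicity step into the phrase ``achieves $\frac{M}{Nk}\cdot U$ for all $U\geq U^*$.'' Your extra paragraph on why $U^*$ is actually attained is a welcome addition that the paper takes for granted.
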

The initial range is:
\[
    \frac{(N-1)k}{\min_{v\in V_c}B^-_G(v)}\leq U^*\leq (N-1)k.
\]
Observe that there must exists $b_e\in E$ such that $U^*b_e\in\Z_+$; otherwise, $U^*$ can be further decreased. Thus, the denominator of $U^*$ must be less than or equal to $\max_{e\in E}b_e$. Similar to optimality binary search, by Proposition~\ref{app:lm:fraction}, one can run binary search until the range is smaller than $1/\max_{e\in E}b_e^2$. Then, $U^*$ can be determined exactly by computing the fractional number that is closest to the midpoint, while having a denominator less than or equal to $\max_{e\in E}b_e$. Algorithm~\ref{algo:fixedk} and Figure~\ref{fig:fixedk} show the pseudocode and auxiliary network for binary search, respectively, with $y\!=\!1/U$. After having $U^*$, one can simply apply edge splitting and spanning tree construction to $G(\{\lfloor U^*b_e\rfloor\}_{e\in E})$ to derive the pipeline schedule. \textit{Note that $G(\{\lfloor U^*b_e\rfloor\}_{e\in E})$ is not necessarily Eulerian. If $G(\{\lfloor U^*b_e\rfloor\}_{e\in E})$ is not Eulerian, then edge splitting cannot be applied. However, in cases where $G$ is bidirectional, $G(\{\lfloor U^*b_e\rfloor\}_{e\in E})$ is guaranteed to be Eulerian.}

The following theorem gives a bound on how close $\frac{M}{Nk}\cdot U^*$ is to optimality~(\ref{eq:optimality}):
\begin{restatable}{theorem}{thmfixedkapprox}\label{app:thm:fixedkapprox}
    Let $\frac{M}{Nk}\cdot U^*$ be the lowest communication time that can be achieved with $k$ out-trees per $v\in V_c$. Then,
    \[
        \frac{M}{Nk}\cdot U^*\leq\frac{M}{N}\max_{S\subset V,S\not\supseteq V_c}\frac{|S\cap V_c|}{B^+_G(S)}+\frac{M}{Nk}\cdot\frac{1}{\min_{e\in E}b_e}.
    \]
\end{restatable}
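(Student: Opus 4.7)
The plan is to exhibit a concrete value of $U$ that (i) is achievable, in the sense that edge-disjoint spanning out-trees exist in $G(\{\lfloor Ub_e\rfloor\}_{e\in E})$, and (ii) is within $\frac{1}{\min_e b_e}$ of the ``unfloored'' ideal $k\rho^*$, where $\rho^* := \max_{S\subset V,S\not\supseteq V_c}\frac{|S\cap V_c|}{B^+_G(S)} = \frac{1}{x^*}$. By Theorem~\ref{app:thm:fixedkbinarysearch}, this would immediately give $U^* \le U$, and multiplying through by $\frac{M}{Nk}$ yields the desired bound.

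Concretely, I would set $U = k\rho^* + \frac{1}{b_{\min}}$, where $b_{\min} = \min_{e\in E}b_e$. The key inequality is then $\lfloor Ub_e\rfloor \ge k\rho^* b_e$ for every edge $e$. This follows because $\frac{b_e}{b_{\min}} \ge 1$, so $Ub_e = k\rho^* b_e + \frac{b_e}{b_{\min}} \ge k\rho^* b_e + 1$, hence $\lfloor Ub_e\rfloor \ge \lceil k\rho^* b_e\rceil \ge k\rho^* b_e$.

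Next, I would use monotonicity and scaling of max-flow. Multiplying every capacity in $\vec{G}_{x^*}$ by $k\rho^* = k/x^*$ produces a graph in which the $s$-to-compute-node capacities become $k$ and the edge capacities become $k\rho^* b_e$; this is precisely $\vec{G}_k(\{k\rho^* b_e\})$. Therefore, by the linear scaling of max-flow and Theorem~\ref{app:thm:binarysearch},
\begin{equation*}
F\!\left(s,v;\vec{G}_k(\{k\rho^* b_e\})\right) = k\rho^* \cdot F\!\left(s,v;\vec{G}_{x^*}\right) \ge k\rho^* \cdot |V_c|x^* = |V_c|k
\end{equation*}
for every $v\in V_c$. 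Combined with the pointwise capacity bound $\lfloor Ub_e\rfloor \ge k\rho^* b_e$, monotonicity of max-flow gives $\min_{v\in V_c} F(s,v;\vec{G}_k(\{\lfloor Ub_e\rfloor\})) \ge |V_c|k$. By Theorem~\ref{app:thm:existtrees}, an edge-disjoint family of spanning out-trees $\{T_{u,i}\}_{u\in V_c,i\in[k]}$ thus exists in $G(\{\lfloor Ub_e\rfloor\}_{e\in E})$, so by Theorem~\ref{app:thm:fixedk} communication time $\frac{M}{Nk}\cdot U$ is achievable. Minimality of $U^*$ then yields $U^* \le k\rho^* + \frac{1}{b_{\min}}$, and multiplying by $\frac{M}{Nk}$ gives the stated inequality.

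The main conceptual obstacle is handling the floors cleanly: a naive attempt with $U = k\rho^*$ fails because $k\rho^* b_e$ need not be an integer, so the flooring could strictly decrease the capacities and potentially the max-flow below $|V_c|k$. The additive slack $\frac{1}{b_{\min}}$ is precisely what is needed to absorb a full unit of rounding loss on every edge simultaneously, since the smallest bandwidth $b_{\min}$ produces the worst-case rounding. Everything else is a routine application of the existence theorems already established.
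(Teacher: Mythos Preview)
Your proof is correct and follows essentially the same strategy as the paper: exhibit an achievable $U$ that dominates $k\rho^*$ edgewise after flooring, invoke the max-flow scaling argument against $\vec{G}_{x^*}$, and conclude $U^*\le U$ via Theorems~\ref{app:thm:existtrees} and~\ref{app:thm:fixedkbinarysearch}. The only difference is cosmetic: the paper takes $U=\max_{e}\lceil kb_e/x^*\rceil/b_e$ and then bounds this quantity by $k\rho^*+1/b_{\min}$, whereas you set $U=k\rho^*+1/b_{\min}$ directly and verify it works, which is a slightly cleaner route to the same bound.
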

\section{Time Complexity Analysis}\label{app:sec:runtime}

In this section, we analyze the runtime complexity of different parts of the algorithm. To summarize, all parts run in polynomial time. One might be concerned by the time complexity like $N^8$. However, the runtime bounds are very loose---for example, using $\mathcal{O}(N^2)$ to bound the number of edges, whereas realistic network topologies are typically much sparser. The bounds also assume single-core execution, while \S\ref{sec:gen_opt} describes various techniques to extensively parallelize the algorithm in practice. \textit{\ul{The purpose of this analysis is to show that the algorithm runs in polynomial time, rather than to provide exact tight runtime bounds.}} We leave proofs of tighter runtime bounds for future work. The empirical runtime performance of ForestColl is evaluated in \S\ref{sec:geneval}.

\textbf{Optimality Binary Search}\quad The key part of optimality binary search is to compute $\min_{v\in V_c} F(s,v;\vec{G}_x)$, which involves computing maxflow from $s$ to every compute node in $V_c$. Assuming the use of preflow-push Algorithm~\cite{preflowpush} to solve network flow, the time complexity to compute $\min_{v\in V_c} F(s,v;\vec{G}_x)$ is $\cO(N|V|^2|E|)$. Note that in practice, one can compute the maxflow from $s$ to each $v\in V_c$ in parallel to significantly speed up the computation. As for how many times $\min_{v\in V_c} F(s,v;\vec{G}_x)$ is computed, observe that the binary search terminates when range is smaller than $1/\min_{v\in V_c}B^-_G(v)^2$. The initial range of binary search is bounded by interval $(0,N)$, so the binary search takes at most $\lceil\log_2(N\min_{v\in V_c}B^-_G(v)^2)\rceil$ iterations. Because $\min_{v\in V_c}B^-_G(v)<|V|\max_{e\in E}b_e$, the total runtime complexity is $\cO(N|V|^2|E|(\log|V|+\log\max_{e\in E}b_e))$.

\textbf{Edge Splitting}\quad In Algorithm~\ref{app:algo:removeswitch}, while we possibly add more edges to the topology, the number of edges is loosely bounded by $\cO(|V|^2)$. Thus, computing $\gamma$ in Theorem~\ref{app:thm:multicapasplit} takes $\cO(N|V|^4)$, and $\gamma$ is computed at most $\cO(|V_s||V|^4)$ times in the nested foreach loop. The total runtime can be loosely bounded by $\cO(N|V_s||V|^8)$.

\textbf{Spanning Tree Construction}\quad Upon completion of Algorithm~\ref{app:algo:removeswitch}, $G^*$ has $N$ vertices and hence $\cO(N^2)$ number of edges. In Algorithm~\ref{app:algo:construct}, $\mu$ only needs one maxflow to be computed. The runtime is thus $\cO(N^4)$. B{\'e}rczi \& Frank~\cite{berczi2010packing} proved that $\mu$ only needs to be computed $\cO(mn^2)$ times, where $m$ and $n$ are the number of edges and vertices respectively. Thus, the runtime of Algorithm~\ref{app:algo:construct} can be loosely bounded by $\cO(N^8)$.

\textbf{Fixed-$k$ Optimality}\quad The runtime of this part is similar to optimality binary search, with the exception that the binary search takes at most $\lceil\log_2(Nk\max_{e\in E}b_e^2)\rceil$ iterations instead. The total runtime complexity is $\cO(N|V|^2|E|(\log N+\log k+\log\max_{e\in E}b_e))$.

\section{Allreduce Linear Program}\label{app:sec:allreducelp}

Generating an Allreduce schedule is similar to generating an Allgather schedule, as we can also use spanning tree packing. For Allreduce, data flows through spanning in-trees to be reduced at root nodes and then is broadcast through spanning out-trees. One can apply the algorithm introduced in the main text to generate optimal out-trees and then reverse them for the in-trees. While this approach always yields the optimal Allreduce schedule in our work so far, theoretically, Allreduce can be further optimized from two perspectives:
\squishenum
    \item Each node can be the root of a variable number of spanning trees instead of equal number in allgather.
    \item Congestion between spanning in-trees and out-trees may be further optimized.
\squishenumend
In this section, we introduce a linear program designed to optimize allreduce schedules, addressing both perspectives. This linear program formulation automatically determine: for (i), the number of trees rooted at each node, and for (ii), the bandwidth allocation of each edge for the reduce in-trees and broadcast out-trees, respectively.

The linear program works by formulating maxflow and edge splitting as linear program constraints. Given a graph $G$, suppose we want the maxflow from $s$ to $t$ in $G$ being $\geq L$, i.e., $F(s,t;G)\geq L$. This constraint can be expressed in terms of linear program constraints:
\[
	\begin{array}{lr@{}ll}
	\text{s.t.}& \displaystyle\sum_{u\in N^-_G(v)} f^{s,t}_{(u,v)} & \geq\displaystyle\sum_{w\in N^+_G(v)} f^{s,t}_{(v,w)} , & \forall v\!\in\! V(G),v\!\notin\!\{s,t\}\\
	                 & \displaystyle\sum_{u\in N^-_G(t)} f^{s,t}_{(u,t)} & \geq\displaystyle\sum_{w\in N^+_G(t)} f^{s,t}_{(t,w)}+L , & \\
	                 & \displaystyle 0\leq f^{s,t}_{(u,v)} & \displaystyle\leq c_{(u,v)}. & \forall(u,v)\!\in\! E(G)
	\end{array}
\]
As long as a solution exists for the set of decision variables $\{f^{s,t}_{(u,v)}\ |\ (u,v)\in E_G\}$, the maxflow from $s$ to $t$ is $\geq L$. Recall from \S\ref{sec:binarysearch} that our objective is to maximize $x$ while ensuring that the maxflow from $s$ to any $v\!\in\! V_c$ is $\geq Nx$. Here, because of (i), we assign a distinct $x_v$ for each $v\in V_c$. Consequently, the optimization problem shifts to maximize $\sum_{v\in V_c}x_v$ without causing any $F(s,v;\vec{G})\!<\!\sum_{v\in V_c}x_v$. The resulting allreduce communication time is
\[
    T_{\text{comm}}=M\big/\sum_{v\in V_c}x_v.
\]
To optimize (ii), we introduce variables $c^\text{RE}_{(u,v)}$ and $c^\text{BC}_{(u,v)}$ to reserve the bandwidth of each link $(u,v)$ for reduce in-trees and broadcast out-trees, respectively, with $c^\text{RE}_{(u,v)}\!+\!c^\text{BC}_{(u,v)}\!=\!b_{(u,v)}$. Thus, $c^\text{RE}_{(u,v)}$s and $c^\text{BC}_{(u,v)}$s induce two separate graph $G$s, on which we can apply the spanning tree construction to derive in-trees and out-trees, respectively. The linear program formulation is as followed:
\begin{equation}\label{lp:replusbc}
	\begin{array}{ll}
		\text{max}  & \sum_{v\in V_c} x_{v} \\
		\text{s.t.}& \makebox[5.3cm][l]{$F(s,t;\vec{G})\geq \sum_{v\in V_c} x_{v},$}\forall t\in V_c \\
			             & \qquad\text{w.r.t. $0\leq f^{\text{s,t}}_{(s,v)}\leq x_v$ and $0\leq f^{\text{s,t}}_{(u,v)}\leq c^{\text{BC}}_{(u,v)}$}\\
			             & \makebox[5.3cm][l]{$F(t,s;\vec{G})\geq\sum_{v\in V_c} x_{v},$}\forall t\in V_c \\
			             & \qquad\text{w.r.t. $0\leq f^{t,s}_{(v,s)}\leq x_v$ and $0\leq f^{t,s}_{(u,v)}\leq c^{\text{RE}}_{(u,v)}$}\\
						 & \makebox[5.3cm][l]{$c^{\text{RE}}_{(u,v)}+c^{\text{BC}}_{(u,v)}\leq b_{(u,v)},$}\forall (u,v)\!\in\! E_G\\
			             & \makebox[5.3cm][l]{$c^{\text{RE}}_{(u,v)},c^{\text{BC}}_{(u,v)}\geq 0,$}\forall (u,v)\!\in\! E_G\\
			             & \makebox[5.3cm][l]{$x_v\geq 0.$}\forall v\in V_c
	\end{array}
\end{equation}
Note that for reduce in-trees, we constraint the maxflow from $V_c$ to $s$ instead of $s$ to $V_c$, and the $x_v$s are also capacities from $v\!\in\!V_c$ to $s$. The solution to LP~(\ref{lp:replusbc}) yields the optimal allreduce performance.

The linear program is sufficient for switch-free topology. In a switch topology $G$, similar to the algorithm in the main text, we need to convert it into a switch-free topology before applyting the LP. We are unable to solve the LP and then apply edge splitting technique, as the $c^\text{RE}_{(u,v)}$s and $c^\text{BC}_{(u,v)}$s do not guarantee symmetrical bandwidth in the respective induced graphs. To remove switch nodes, we add a level of indirection by defining $b'_{(\alpha,\beta)}$s for all $(\alpha,\beta)\in V_c^2$ to replace the $b_{(u,v)}$s in LP~(\ref{lp:replusbc}). We add multi-commodity flow constraints into the LP to ensure $b'_{(\alpha,\beta)}$ commodity flow from $\alpha$ to $\beta$ in $G$ under the capacities $b_{(u,v)}$s. Thus, the linear program can automatically allocate switch bandwidth for compute-to-compute flows.

In ideal mathematics, we can obtain rational solutions for all variables to derive the in-trees and out-trees to reach optimality. However, in practice, modern LP solvers cannot guarantee rational solutions. We can only round down $x_v,c^\text{RE}_{(u,v)},c^\text{BC}_{(u,v)}$s to the nearest $1/k$ and construct spanning trees, assuming one wants at most $k\cdot\sum_{v\in V_c}x_v$ trees. This approach can approximate the optimal solution as $k$ increases. Nevertheless, the optimal objective value of LP~(\ref{lp:replusbc}) provided by any solver always suggests the optimal allreduce performance, and we can use it to verify the optimality of allreduce schedule derived by using the algorithm in main text. 

\section{Proofs}\label{app:sec:proofs}

\thmbinarysearch*
\begin{proof}
    $\Rightarrow$: Suppose $1/x<\max_{S\subset V,S\not\supseteq V_c}|S\cap V_c|/B^+_G(S)$. Let $S'\subset V,S'\not\supseteq V_c$ be the set that $1/x<|S'\cap V_c|/B^+_G(S')$. Pick arbitrary $v'\in V_c-S'$. Consider the maxflow $F(s,v';\vec{G}_x)$ and $s$-$v'$ cut $(A,\bar{A})$ in $G$ that $A=S'+s$. We have
    \begin{align}\label{app:eq:swppagarbitrarycut}
    \begin{split}
        c(A,\bar{A};\vec{G}_x)
        &=c(S',\bar{A};\vec{G}_x)+\sum_{u\in\bar{A}\cap V_c}c(s,u;\vec{G}_x)\\
        &=B^+_{G}(S')+|V_c-S'|x\\
        &<|S'\cap V_c|x+|V_c-S'|x\\
        &=|V_c|x.
    \end{split}
    \end{align}
	By min-cut theorem, $\min_{v\in V_c}F(s,v;\vec{G}_x)\leq F(s,v';\vec{G}_x)\leq c(A,\bar{A};\vec{G}_x)<|V_c|x$.

    $\Leftarrow$: Suppose $1/x\geq\max_{S\subset V,S\not\supseteq V_c}|S\cap V_c|/B^+_G(S)$. Pick arbitrary $v'\in V_c$. Let $(A,\bar{A})$ be an arbitrary $s$-$v'$ cut and $S'=V\cap A=A-s$. It follows that $1/x\geq|S'\cap V_c|/B^+_G(S')$. Thus, following (\ref{app:eq:swppagarbitrarycut}),
    \begin{align*}
        c(A,\bar{A};\vec{G}_x)
        &=B^+_G(S')+|V_c-S'|x\\
        &\geq |S'\cap V_c|x+|V_c-S'|x\\
        &=|V_c|x.
    \end{align*}
	Because cut $(A,\bar{A})$ is arbitrary, we have $F(s,v';\vec{G}_x)\geq|V_c|x$. Because $v'$ is also arbitrary, $\min_{v\in V_c}F(s,v;\vec{G}_x)\geq|V_c|x$.
\end{proof}

\lmfraction*
\begin{proof}
    Because $a/b\neq c/d$, we have $ad-bc\neq 0$. Thus,
    \[
        \abs{\frac{a}{b}-\frac{c}{d}}=\abs{\frac{ad-bc}{bd}}\geq\frac{1}{bd}\geq\frac{1}{X^2}.
    \]
\end{proof}

\lmsmallestUk*
\begin{proof}
    Since $U/k=1/x^*$, we have $k=Ux^*$, so finding the smallest $k$ is to find the smallest $U$ such that (a) $Ux^*=Uq/p\in\N$ and (b) $Ub_e\in\N$ for all $e\in E$. Suppose $U=\alpha/\beta$ and $\gcd(\alpha,\beta)=1$. Because $\alpha,\beta$ are coprime, $Ub_e\in\N$ implies $\beta|b_e$ for all $e\in E$. Again, because $p,q$ are coprime, $Uq/p\in\N$ implies $p|\alpha$ and $\beta|q$. Thus, the smallest such $\alpha$ is $p$, and the largest such $\beta$ is $\gcd(q, \{b_e\}_{e\in E})$. The proposition immediately follows.
\end{proof}

\thmbangjensentrees*
\thmexisttrees*
\begin{proof}
    $\Rightarrow$: Pick arbitrary $v\in V_c$. Given the family of edge-disjoint out-trees $\{T_{u,i}\}_{u\in V_c,i\in[k]}$, we push one unit of flow from $s$ to $v$ along the path from $u$ to $v$ within tree $T_{u,i}$ for each $u\in V_c,i\in[k]$. Thus, we have constructed a flow assignment with $|V_c|k$ amount of flow. Since $v\in V_c$ is arbitrary, we have $\min_{v\in V_c}F(s,v;\vec{D}_k)\geq|V_c|k$.

    $\Leftarrow$: Suppose $\min_{v\in V_c}F(s,v;\vec{D}_k)\geq|V_c|k$. It immediately implies that $\lambda(s,v;\vec{D}_k)\geq|V_c|k$ for all $v\in V_c$. Note that $T'=V_c$, so by Theorem~\ref{app:thm:bang-jensen}, a family $\cF$ of edge-disjoint out-trees rooted at $s$ exists that each $v\in V_c$ belongs to at least $|V_c|k$ of them. Since $d^+(s)=|V_c|k$ in $\vec{D}_k$, $\cF$ has exactly $|V_c|k$ edge-disjoint out-trees rooted at $s$ and each out-tree spans $V_c$. In addition, for each $v\in V_c$, since $c(s,v;\vec{D}_k)=k$, there are exactly $k$ out-trees in $\cF$ in which $v$ is the only child of root $s$. By removing the root $s$ from every out-tree in $\cF$, we have the family of edge-disjoint out-trees $\{T_{u,i}\}_{u\in V_c,i\in[k]}$ in $D$ as desired.
\end{proof}

\thmbangjensenedgesplit*
\thmedgesplit*
\begin{proof}
    Consider the flow network $\vec{D}_k$. We construct $\vec{D}_k'$ by adding a $k$-capacity edge from each $v\in V_c$ back to $s$. It is trivial to see that $\vec{D}_k'$ is Eulerian. By Theorem~\ref{app:thm:BJedgesplitting}, given $f=(w,t)$, there exists an edge $e=(u,w)$ such that $\lambda(s,v;\vec{D}'^{ef}_k)=\lambda(s,v;\vec{D}_k')$ for all $v\in V_c$. Observe that adding edges from $V_c$ to $s$ does not affect the edge-connectivity from $s$ to any $v\in V_c$, so for all $v\in V_c$,
    \begin{multline*}
        F(s,v;\vec{D}^{ef}_{k})=\lambda(s,v;\vec{D}^{ef}_k)=\lambda(s,v;\vec{D}'^{ef}_k)\\
        =\lambda(s,v;\vec{D}_k')=\lambda(s,v;\vec{D}_k)=F(s,v;\vec{D}_k).
    \end{multline*}
    The theorem trivially follows.
\end{proof}

\thmmulticapasplit*
\begin{proof}
    First of all, one should note that for any $s$-$v$ cut $(A,\bar{A})$ with $v\in V_c$ and $A\subset V_s\cup V_c+s$, if $s,u,t\in A\land v,w\in\bar{A}$, then $(A,\bar{A})$ has the same capacity in $\vec{D}_k$ and $\widehat{D}_{(u,w),v}$, i.e., $c(A,\bar{A};\vec{D}_k)=c(A,\bar{A};\widehat{D}_{(u,w),v})$. Similarly, if $s,w\in A\land v,u,t\in\bar{A}$, then $c(A,\bar{A};\vec{D}_k)=c(A,\bar{A};\widehat{D}_{(w,t),v})$. 
    
    $\geq$: Suppose we split off $(u,w),(w,t)$ by $\gamma$ times and then $F(s,v';\vec{D}^{ef}_k)<|V_c|k$ for some $v'\in V_c$. Let $(A,\bar{A})$ be the min $s$-$v'$ cut in $\vec{D}^{ef}_k$ that $c(A,\bar{A};\vec{D}^{ef}_k)=F(s,v';\vec{D}^{ef}_k)<|V_c|k$. We assert that $(A,\bar{A})$ must cut through $(u,w)$ and $(w,t)$ such that either $s,u,t\in A\land v',w\in\bar{A}$ or $s,w\in A\land v',u,t\in\bar{A}$; otherwise, we have $F(s,v';\vec{D}_k)\leq c(A,\bar{A};\vec{D}_k)=c(A,\bar{A};\vec{D}^{ef}_k)<|V_c|k$ (note that splitting off $(u,w),(w,t)$ adds edge $(u,t)$). Suppose $s,u,t\in A\land v',w\in\bar{A}$, then $c(A,\bar{A};\vec{D}_k)=c(A,\bar{A};\widehat{D}_{(u,w),v'})$. It is trivial to see that $c(A,\bar{A};\vec{D}_k)=c(A,\bar{A};\vec{D}^{ef}_k)+\gamma$. Thus, we have
    \begin{multline*}
        F(u,w;\widehat{D}_{(u,w),v'})\leq c(A,\bar{A};\widehat{D}_{(u,w),v'})\\
        =c(A,\bar{A};\vec{D}_k)=c(A,\bar{A};\vec{D}^{ef}_k)+\gamma<|V_c|k+\gamma,
    \end{multline*}
    contradicting $\gamma\leq\min_{v\in V_c} F(u,w;\widehat{D}_{(u,w),v})-|V_c|k$. For $s,w\in A\land v',u,t\in\bar{A}$, one can similarly show a contradiction by looking at $F(w,t;\widehat{D}_{(w,t),v'})$.
    
    $\leq$: Suppose we split off $(u,w),(w,t)$ by $\gamma'>\gamma$ times and the resulting graph is $D^{ef}$. It is trivial to see that $\gamma'$ cannot be greater than $c(u,w;D)$ or $c(w,t;D)$. Suppose $\gamma'>F(u,w;\widehat{D}_{(u,w),v'})-|V_c|k$ for some $v'\in V_c$. Consider the min $u$-$w$ cut $(A,\bar{A})$ with $c(A,\bar{A};\widehat{D}_{(u,w),v'})=F(u,w;\widehat{D}_{(u,w),v'})$. Because $(u,s),(u,t),(v',w)$ have $\infty$ capacity, we have $s,u,t\in A\land v',w\in\bar{A}$ and hence $c(A,\bar{A};\vec{D}_k)=c(A,\bar{A};\widehat{D}_{(u,w),v'})$. It is again trivial to see that $c(A,\bar{A};\vec{D}^{ef}_k)=c(A,\bar{A};\vec{D}_k)-\gamma'$ and $(A,\bar{A})$ being an $s$-$v'$ cut in $\vec{D}^{ef}_k$. Hence,
    \begin{multline*}
        F(s,v';\vec{D}^{ef}_k)\leq c(A,\bar{A};\vec{D}^{ef}_k)=c(A,\bar{A};\vec{D}_k)-\gamma'\\
        =c(A,\bar{A};\widehat{D}_{(u,w),v'})-\gamma'<|V_c|k.
    \end{multline*}
    One can show similar result for $\gamma'\!>\!F(w,t;\widehat{D}_{(w,t),v'})\!-\!|V_c|k$.
\end{proof}

\thmtarjan*
\thminitialexist*
\begin{proof}
    $\Rightarrow$: Suppose $\min_{v\in V_c}F(s,v;\vec{D}_k)<|V_c|k$. Let $v'$ be the vertex that $F(s,v';\vec{D}_k)<|V_c|k$. By min-cut theorem, there exists an $s$-$v'$ cut $(A,\bar{A})$ in $\vec{D}_k$ such that $c(A,\bar{A};\vec{D}_k)=F(s,v';\vec{D}_k)<|V_c|k$. Let $S=V_c\cap A$, then $A=S+s$, $\bar{S}=V_c-S=V_c+s-A=\bar{A}$, and hence
    \begin{multline*}
        c(S,\bar{S};D)=c(A,\bar{A};\vec{D}_k)-\sum_{u\in\bar{A}}c(s,u;\vec{D}_k)\\
        <|V_c|k-|V_c-S|k=|S|k.
    \end{multline*}
    $\Leftarrow$: Suppose there exists $S\subset V_c,S\neq V_c$ such that $c(S,\bar{S};D)<|S|k$. Pick arbitrary $v'\in V_c-S$. Consider $s$-$v'$ cut $(A,\bar{A})$ such that $A=S+s$. By min-cut theorem, we have
    \begin{multline*}
        F(s,v';\vec{D}_k)\leq c(A,\bar{A};\vec{D}_k)=c(S,\bar{S};D)+\sum_{u\in\bar{A}}c(s,u;\vec{D}_k)\\
        <|S|k+|V_c-S|k=|V_c|k.
    \end{multline*}
\end{proof}

\thmpolynomialpacking*
\thmcomputemu*
\begin{proof}
    For simplicity of notation, let $L=\min\{c(S,\bar{S};D)-p(S;D):x\in S,y\in\bar{S},R_1\not\subseteq S\}$. We will prove (\ref{app:eq:computemu}) by showing that either $L=F(x,y;\overline{D})-\sum_{i\neq 1}m(R_i)$ or $L\geq F(x,y;\overline{D})-\sum_{i\neq 1}m(R_i)\geq m(R_1)$. Let $S\subset V_c$ be arbitrary that $x\in S,y\in\bar{S},R_1\not\subseteq S$, and Let $A=S\cup\{s_i\ |\ R_i\subseteq S\}$. It follows that $(A,\bar{A})$ is an $x$-$y$ cut in $\overline{D}$ and hence
    \begin{equation*}
    \begin{aligned}
        c(S,\bar{S};D)-p(S;D)
        &\textstyle=c(S,\bar{S};D)-\sum\{m(R_i)\ |\ R_i\subseteq S\}\\
        &\textstyle=c(S,\bar{S};D)+\sum\{m(R_i)\ |\ i\neq 1,R_i\not\subseteq S\}\\
        &\textstyle\quad-\sum_{i\neq 1}m(R_i)\\
        &\textstyle=c(A,\bar{A};\overline{D})-\sum_{i\neq 1}m(R_i)\\
        &\textstyle\geq F(x,y;\overline{D})-\sum_{i\neq 1}m(R_i).
    \end{aligned}
    \end{equation*}
    The second equality is due to $R_1\not\subseteq S$, so $\sum\{m(R_i)\ |\ R_i\subseteq S\}=\sum\{m(R_i)\ |\ i\neq 1,R_i\subseteq S\}$. Since $S$ is arbitrary, we have $L\geq F(x,y;\overline{D})-\sum_{i\neq 1}m(R_i)$.
    
    Let $(A',\overline{A'})$ be the min $x$-$y$ cut in $\overline{D}$ and $S'=A'\cap V_c$. We assert that for any $i\neq 1,R_i\subseteq S'$, we have $s_i\in A'$; otherwise, by moving $s_i$ from $\overline{A'}$ to $A'$, we create a cut with lower capacity, contradicting $(A',\overline{A'})$ being min-cut. We also assert that for any $i\neq 1,R_i\not\subseteq S'$, we have $s_i\in\overline{A'}$; otherwise, there exists $v\in R_i-S'$ that $\infty$ edge $(s_i,v)$ crosses $(A',\overline{A'})$. Thus, we have
    \begin{equation}\label{app:eq:computemuproof2}
    \begin{aligned}
        &\textstyle F(x,y;\overline{D})-\sum_{i\neq 1}m(R_i)\\
        =&\textstyle c(A',\overline{A'};\overline{D})-\sum_{i\neq 1}m(R_i)\\
        =&\textstyle c(S',\overline{S'};D)+\sum\{m(R_i)\ |\ i\neq 1,R_i\not\subseteq S'\}-\sum_{i\neq 1}m(R_i)\\
        =&\textstyle c(S',\overline{S'};D)-\sum\{m(R_i)\ |\ i\neq 1,R_i\subseteq S'\}.
    \end{aligned}
    \end{equation}
    Now, we consider two cases:
    \begin{enumerate}[label=(\alph*)]
        \item Suppose $R_1\not\subseteq S'$. Then, $c(S',\overline{S'};D)-p(S';D)\geq L$. By (\ref{app:eq:computemuproof2}), we have
        \begin{align*}
            L
            &\textstyle\geq F(x,y;\overline{D})-\sum_{i\neq 1}m(R_i)\\
            &\textstyle=c(S',\overline{S'};D)-\sum\{m(R_i)\ |\ i\neq 1,R_i\subseteq S'\}\\
            &\textstyle=c(S',\overline{S'};D)-p(S';D)
        \end{align*}
        Thus, $L=c(S',\overline{S'};D)-p(S';D)=F(x,y;\overline{D})-\sum_{i\neq 1}m(R_i)$ and (\ref{app:eq:computemu}) holds.
        \item Suppose $R_1\subseteq S'$. Because the existence of spanning trees is guaranteed, we have
        \[
            \textstyle c(S',\overline{S'};D)\geq p(S';D)=m(R_1)+\sum\{m(R_i)\ |\ i\neq 1,R_i\subseteq S'\}.
        \]
        Hence,
        \begin{align*}
            L
            &\textstyle\geq F(x,y;\overline{D})-\sum_{i\neq 1}m(R_i)\\
            &\textstyle=c(S',\overline{S'};D)-\sum\{m(R_i)\ |\ i\neq 1,R_i\subseteq S'\}\\
            &\textstyle\geq m(R_1).
        \end{align*}
        Thus, $\mu=\min\{g(x,y),m(R_1)\}$ and (\ref{app:eq:computemu}) also holds.
    \end{enumerate}
\end{proof}

\thmfixedk*
\begin{proof}
	$\Leftarrow$: Suppose $\{T_{u,i}\}_{u\in V_c,i\in[k]}$ is edge disjoint in $G(\{\lfloor Ub_e\rfloor\}_{e\in E})$, then
    \begin{multline*}
        T_{\text{comm}}=\frac{M}{Nk}\cdot\max_{e\in E}\frac{1}{b_e}\sum_{T\in \{T_{u,i}\}}\I[e\in T]\\
        \leq\frac{M}{Nk}\cdot\max_{e\in E}\frac{\lfloor Ub_e\rfloor}{b_e}\leq\frac{M}{Nk}\cdot U.
    \end{multline*}
	$\Rightarrow$: Suppose $\{T_{u,i}\}_{u\in V_c,i\in[k]}$ achieves $\frac{M}{Nk}\cdot U$ communication time, then
    \begin{multline*}
        \max_{e\in E}\frac{1}{b_e}\sum_{T\in \{T_{u,i}\}}\I[e\in T]\leq U\\
        \Longrightarrow\quad\sum_{T\in \{T_{u,i}\}}\I[e\in T]\leq Ub_e\quad\text{for all $e\in E$.}
    \end{multline*}
	Since $\sum_{T\in \{T_{u,i}\}}\I[e\in T]$ must be an integer, the edge-disjointness trivially follows.
\end{proof}

\thmfixedkbinarysearch*
\begin{proof}
	$\Rightarrow$: The existence of edge-disjoint $\{T_{u,i}\}_{u\in V_c,i\in[k]}$ in $G(\{\lfloor Ub_e\rfloor\}_{e\in E})$ with $U<U^*$ simply contradicts $\frac{M}{Nk}\cdot U^*$ being the lowest communication time. $\Leftarrow$: Let $\{T^*_{u,i}\}_{u\in V_c,i\in[k]}$ be the family of out-trees with the lowest communication time, then by Theorem~\ref{app:thm:fixedk}, it is edge-disjoint in $G(\{\lfloor Ub_e\rfloor\}_{e\in E})$ for all $U\geq U^*$. 
\end{proof}

\thmfixedkapprox*
\begin{proof}
    Let $U=\max_{e\in E}\lceil kb_e/x^*\rceil/b_e$ where $1/x^*=\max_{S\subset V,S\not\supseteq V_c}|S\cap V_c|/B^+_G(S)$. For each edge $(u,v)$ in $G(\lfloor Ub_e\rfloor)$, we have
    \begin{multline*}
        c(u,v;G(\lfloor Ub_e\rfloor))=\left\lfloor b_{(u,v)}\cdot\max_{e\in E}\frac{\lceil kb_e/x^*\rceil}{b_e}\right\rfloor\\
        \geq\left\lfloor b_{(u,v)}\cdot\frac{\lceil kb_{(u,v)}/x^*\rceil}{b_{(u,v)}}\right\rfloor=\lceil kb_{(u,v)}/x^*\rceil.
    \end{multline*}
    Thus, each edge in $\vec{G}_k(\lfloor Ub_e\rfloor)$ has at least $k/x^*$ times the capacity in $\vec{G}_{x^*}$, so
    \[
        \min_{v\in V_c}F(s,v;\vec{G}_k(\lfloor Ub_e\rfloor))\geq(k/x^*)\min_{v\in V_c}F(s,v;\vec{G}_{x^*})\geq|V_c|k.
    \]
    Therefore, $\frac{M}{Nk}\cdot U$ is achievable and hence $U^*\leq U$ by Theorem~\ref{app:thm:fixedkbinarysearch}.
    \begin{multline*}
        \frac{U^*}{k}\bigg/\frac{1}{x^*}\leq\frac{U}{k}\bigg/\frac{1}{x^*}=\frac{\max_{e\in E}\lceil kb_e/x^*\rceil/b_e}{k/x^*}\\
        \leq\max_{e\in E}\frac{\lceil kb_e/x^*\rceil}{kb_e/x^*}\leq 1+\max_{e\in E}\frac{1}{kb_e/x^*}=1+\frac{x^*}{k\cdot\min_{e\in E}b_e}.
    \end{multline*}
    The theorem trivially follows.
\end{proof}

\end{document}